\newtheorem{claim}{}[section]
\newtheorem{theorem}[claim]{Theorem}
\newtheorem{definition}[claim]{Definition}
\newtheorem{lemma}[claim]{Lemma}
\newtheorem{proposition}[claim]{Proposition}
\newtheorem{corollary}[claim]{Corollary}
\theoremstyle{remark}
\renewenvironment{proof}{\noindent{\it Proof. \hskip0pt}}
                      {$\square$\par\medskip}
\begin{document}
\baselineskip 6.0 truemm
\parindent 1.5 true pc

\newcommand\lan{\langle}
\newcommand\ran{\rangle}
\newcommand\tr{{\text{\rm Tr}}\,}
\newcommand\ot{\otimes}
\newcommand\ol{\overline}
\newcommand\join{\vee}
\newcommand\meet{\wedge}
\renewcommand\ker{{\text{\rm Ker}}\,}
\newcommand\image{{\text{\rm Im}}\,}
\newcommand\id{{\text{\rm id}}}
\newcommand\tp{{\text{\rm tp}}}
\newcommand\pr{\prime}
\newcommand\e{\epsilon}
\newcommand\la{\lambda}
\newcommand\inte{{\text{\rm int}}\,}
\newcommand\ttt{{\text{\rm t}}}
\newcommand\spa{{\text{\rm span}}\,}
\newcommand\conv{{\text{\rm conv}}\,}
\newcommand\rank{\ {\text{\rm rank of}}\ }
\newcommand\re{{\text{\rm Re}}\,}
\newcommand\ppt{\mathbb T}
\newcommand\rk{{\text{\rm rank}}\,}
\newcommand\SN{{\text{\rm SN}}\,}
\newcommand\SR{{\text{\rm SR}}\,}
\newcommand\HA{{\mathcal H}_A}
\newcommand\HB{{\mathcal H}_B}
\newcommand\HC{{\mathcal H}_C}
\newcommand{\bra}[1]{\langle{#1}|}
\newcommand{\ket}[1]{|{#1}\rangle}
\newcommand\cl{\mathcal}
\newcommand\idd{{\text{\rm id}}}
\newcommand\OMAX{{\text{\rm OMAX}}}
\newcommand\OMIN{{\text{\rm OMIN}}}
\newcommand\diag{{\text{\rm Diag}}\,}

\title{Various notions of positivity for bi-linear maps and applications to tri-partite entanglement}

\author{Kyung Hoon Han and Seung-Hyeok Kye}
\address{Department of Mathematics, The University of Suwon, Gyeonggi-do 445-743, Korea}
\email{kyunghoon.han at gmail.com}
\address{Department of Mathematics and Institute of Mathematics, Seoul National University, Seoul 151-742, Korea}
\email{kye at snu.ac.kr}
\thanks{KHH and SHK were partially supported by NRF-2012R1A1A1012190 and NRFK grant 2013-004942, respectively.}

\subjclass{46L07, 81P15, 15A30, 46L05}

\keywords{operator system, multi-linear maps, $S$-positive, Schmidt rank, entanglement, duality}

\begin{abstract}
We consider bi-linear analogues of $s$-positivity for linear maps.
The dual objects of these notions can be described in terms of
Schimdt ranks for tri-tensor products and Schmidt numbers for
tri-partite quantum states. These tri-partite versions of Schmidt
numbers cover various kinds of bi-separability, and so we may
interpret witnesses for those in terms of bi-linear maps. We give
concrete examples of witnesses for various kinds of three qubit
entanglement.
\end{abstract}

\maketitle

\section{Introduction}

Order structures are key ingredients in various subjects of
mathematics as well as functional analysis, where linear maps
preserving positivity play important roles. We call those positive
linear maps. After representation theorem by Stinespring
\cite{stine}, complete positivity has been considered as a right
morphism to study operator algebras, which are non-commutative in
general. By definition of complete positivity, it is clear that
there exist hierarchy structures between positivity and complete
positivity, and this leads to define $s$-positivity of linear maps
for natural numbers $s=1,2,\dots$. The notion of $s$-positivity
turns out to be very useful in itself. For examples, various
inequalties like Schwartz and Kadison inequalities in operator
algebras already hold for $2$-positive linear maps \cite{choi-schw}.
Importance of $s$-positivity is also recognized in current quantum
information theory. Considering the dual objects of  $s$-positive
linear maps in matrix algebras, we get natural classification of
bi-partite entanglement in terms of Schmidt ranks. See
\cite{eom-kye,sbl,TH}. Furthermore, distillability problem which is one of the most important
in quantum information theory can be formulated \cite{DSSTT} in terms of Schmidt ranks.

The purpose of this note is to introduce the
bi-linear analogues of $s$-positive linear maps, and classify
tri-partite entanglement as dual objects. Our classification scheme
include various kinds of bi-separability.

Positive linear maps and bi-partite entanglement are related through duality
\cite{eom-kye,horo-1}, which goes back to the work by Woronowicz
\cite{woronowicz} in the seventies. Very recently, the second author
\cite{kye_3_qubit} has shown that $n$-partite genuine separability
is dual to positivity of $(n-1)$-linear maps whose linearization
gives rise to positivity with respect to the function system maximal
tensor product  \cite{effros,han} of matrix algebras. This means that it is
necessary and sufficient to construct a positive multi-linear map,
in order to detect entanglement which is not genuinely separable. A
natural question arises: What kinds of positivity are suitable to
detect another kinds of entanglement, for example, genuine
entanglement which is not in the convex hull of bi-separable states
with respect to all possible bi-partitions.

In the tri-partite cases, there are three kinds of bi-separability:
$A$-$BC$, $B$-$CA$ and $C$-$AB$ separabilities according to
bi-partitions. For a bi-linear map $\phi: M_A\times M_B\to M_C$
between matrix algebras to be dual objects of those notions, we need
the following properties:
\begin{enumerate}
\item[(A)]
For each $x\in M_A^+$, the map $y\mapsto \phi(x,y)$ is completely positive.
\item[(B)]
For each $y\in M_B^+$, the map $x\mapsto \phi(x,y)$ is completely positive.
\item[(C)]
The linearization $M_A\otimes M_B\to M_C$ is positive with respect to
the usual order.
\end{enumerate}
In order to detect tri-partite genuine entanglement, we need a bi-linear map
satisfying the above three conditions simultaneously.
In this paper, we formulate the notions of positivity for bi-linear maps
which explain the above three properties in a single framework.

It was shown in \cite{KPTT} that the linearization of
a bi-linear map $\phi:{\mathcal S}\times{\mathcal T}\to{\mathcal R}$
between operator systems is completely positive with respect to the operator system maximal
tensor products if and only if the following condition
\begin{equation}\label{st-pos}
[x_{i,j}]\in M_p({\mathcal S})^+,\
[y_{k,\ell}]\in M_q({\mathcal T})^+\
\Longrightarrow\
[\phi(x_{i,j},y_{k,\ell})]\in M_{pq}({\mathcal R})^+
\end{equation}
holds for every $p,q=1,2,\dots$. It is tempting to call the property (\ref{st-pos}) as
$(p,q)$-positivity. Then $\phi$ satisfies the condition (A) if and only if
it is $(1,\infty)$-positive, that is, $(1,q)$-positive for every $q=1,2,\dots$.
Condition (B) is, of course, nothing but $(\infty,1)$-positivity.
But, it is not possible to formulate the condition (C)
with $(p,q)$-positivity. In fact, there is no way
to control the matrix size over the range spaces with the above
condition (\ref{st-pos}). This motivates the following definition.

\begin{definition}
\emph{Let $S=(s_1,s_2,\dots,s_n)$ be an $n$-tuple of natural numbers.
An $(n-1)$-linear map $\phi:{\mathcal S}_1\times\cdots\times {\mathcal S}_{n-1}
\to{\mathcal S}_n$ between operator systems is said to be}
$S$-positive
\emph{if the following condition holds:}
$$
\begin{aligned}
x_k = [x^k_{i_k,j_k}] \in M_{s_k}({\mathcal S}_k)^+\ {\text{\rm for}}\ k=1,2,\dots,&n-1\
{\text{\rm and}}\
\alpha\in M_{s_n,s_1\cdots s_{n-1}}\ \\
&\Longrightarrow\
\alpha[\phi(x^1_{i_1,j_1},\dots,x^{n-1}_{i_{n-1},j_{n-1}})]\alpha^*\in M_{s_n}(\mathcal S_n)^+
\end{aligned}
$$
\end{definition}
Then the above conditions (A), (B) and (C) become
$(1,\infty,\infty)$, $(\infty,1,\infty)$
and $(\infty,\infty,1)$-positivity, respectively.
See Proposition \ref{1pp}.
In the case of linear maps  with $n=2$,
it is $(p,q)$-positive if and only if it is $(p\meet q,p\meet q)$-positive
if and only if it is $p\meet q$-positive in the usual sense, where
$p\meet q$ denotes the minimum of $p$ and $q$. Therefore,
the above definition reduces to the usual notion of $s$-positivity
in the case of linear maps. Furthermore, a bi-linear map satisfies the condition
(\ref{st-pos}) if and only if it is $(p,q,pq)$-positive.

If we restrict ourselves in the cases when domains and ranges are
matrix algebras, then we can consider the Choi matrices of bi-linear
maps. We find conditions when they are positive, that is, positive
semi-definite. In the course of discussion, we get bi-linear version
of the isomorphism between completely positive linear maps and
positive block matrices, as well as decomposition of completely
positive maps into the sum of elementary operators
\cite{choi75-10,kraus}. The linearization of
$(p,q,r)$-positive bi-linear maps will be described in terms of
suitable quantizations of domains and ranges.

We introduce the notion that Schmidt
numbers for tri-partite states $\varrho$ are less than or equal to triplets $(p,q,r)$ of natural numbers.
We write this property by $\SN(\varrho)\le(p,q,r)$.
To do this, we first define the Schmidt ranks for vectors in the tensor products
of three vector spaces. For this purpose, we use the natural isomorphisms
between tensor products and linear mapping spaces, and consider the dimensions
of supports and ranges of the corresponding maps.
We establish the duality between $(p,q,r)$-positive
bi-linear maps and states $\varrho$ with the property $\SN(\varrho)\le (p,q,r)$.

After we summarize briefly in the next section several notions in operator systems we need,
we present in Section 3 properties of $(p,q,r)$-positive bi-linear maps mentioned above.
We give the definition of $\SN(\varrho)\le(p,q,r)$
in Section 4, and prove the duality in Section 5, where we also interpret the notion $(p,q,r)$-positivity
in various ways. We exhibit in Section 6
concrete examples of bi-linear maps with various kinds of positivity
in two dimensional matrix algebras.
They will be witnesses for various kinds of three qubit entanglement,
including witnesses for genuine entanglement. We close this paper to discuss several
related problems in the last section.

Throughout this note, we will use notations ${\mathcal H}_A, {\mathcal H}_B$ and ${\mathcal H}_C$
for complex Hilbert spaces $\mathbb C^a, \mathbb C^b$ and $\mathbb C^c$, respectively.
Matrix algebras on them will be also denoted by $M_A, M_B$ and $M_C$, and so
they are $a\times a, b\times b$ and $c\times c$ matrix algebras, respectively.
For a bi-linear map $\phi:{\mathcal S}\times{\mathcal T}\to{\mathcal R}$,
we denote by $\tilde\phi:{\mathcal S}\otimes{\mathcal T}\to{\mathcal R}$
its linearization which sends $x\otimes y$ to $\phi(x,y)$.
For $x=[x_{i,j}]\in M_p({\mathcal S})$ and
$y=[y_{k,\ell}]\in M_q({\mathcal T})$, we write
$\phi_{p,q}(x,y)=[\phi(x_{i,j},y_{k,\ell})]\in M_{pq}({\cl R})$
for notational convenience. If $\phi:{\mathcal S}\to{\mathcal R}$ is a linear map
then we also write $\phi_p(x)=[\phi(x_{i,j})]\in M_p({\cl R})$. Recall that $\phi$ is $p$-positive
if $x\in M_p({\mathcal S})^+$ implies $\phi_p(x)\in M_p({\mathcal R})^+$.

The authors are grateful to Kil-Chan Ha and Jaeseong Heo for
fruitful discussion on the topics. This work has been completed when
the second author was visiting Singapore. He is also grateful to
Denny Leung and Wai Shing Tang for their warm hospitality and
stimulating discussion during his stay.

\section{Tensor product and quantization of operator systems}

A unital self-adjoint space of bounded operators on a Hilbert space is said to be an operator system.
If ${\mathcal S}$ is an operator system acting on a Hilbert space ${\cl H}$, then the space $M_n({\mathcal S})$
of all $n\times n$ matrices over ${\mathcal S}$ acts on the Hilbert space ${\mathcal H}^n$. The order structures
of $M_n({\mathcal S})$ for each $n \in \mathbb N$ are related by the following compatibility relation:
\begin{equation}\label{MO}
x\in M_n({\cl S})^+,\ \alpha\in M_{m,n}\ \Longrightarrow \alpha x\alpha^*\in M_m({\mathcal S})^+.
\end{equation}
The identity operator ${\rm id}_{\mathcal H^n}$ on the Hilbert space ${\cl H}^n$ plays the role of order unit
of $M_n({\cl S})$, that is, for every self-adjoint $x\in M_n(\cl S)$ there is $r>0$ such that
$x\le r \cdot {\rm id}_{\mathcal H^n}$. This order unit also has the Archimedean property:
If $x\in M_n({\cl S})$ and $\varepsilon \cdot {\rm id}_{\mathcal H^n}+x\ge 0$ for each $\varepsilon>0$ then $x\in M_n({\cl S})^+$.

If $V$ is a $*$-vector space and $C_n$ is a cone in $M_n(V)_h$
satisfying (\ref{MO}), then we call $\{ C_n\}_{n=1}^\infty$ a matrix
ordering and $(V, \{ C_n\}_{n=1}^\infty$) a matrix ordered
$*$-vector space. Choi and Effros \cite{CE} showed that matrix
ordered $*$-vector spaces equipped with Archimedean matrix order
units can be realized as unital self-adjoint spaces of bounded
operators on a Hilbert space. Thus, we also call them operator
systems. An operator system structure of ${\cl S}$ determines an
operator space structure on $\cl S$. Especially, if $x$ is a
Hermitian element of $M_n({\cl S})$ then we have
$$
\|x\|_n=\inf \{ r>0 : -r I_n \otimes 1_{\mathcal S} \le x\le r I_n \otimes 1_{\mathcal S} \}.
$$

We proceed to recall the definition \cite{KPTT} of the maximal tensor product of operator systems.
For two operator systems $\cl S$ and $\cl T$, the sets
$$
D_n^{\max}(\cl S, \cl T) = \{ \alpha(P \otimes Q)
\alpha^* : P \in M_k(\cl S)^+, Q \in M_\ell(\cl T)^+, \alpha \in
M_{n,k\ell},\ k,\ell \in \mathbb N \}
$$
for each $n=1,2,\dots$ give rise to a matrix ordering on
$\cl S \otimes \cl T$ with a matrix order unit $1_{\cl S} \otimes 1_{\cl
T}$. Let $\{ M_n(\cl S \otimes_{\max} \cl T)^+ \}_{n=1}^\infty$ be
the Archimedeanization of the matrix ordering $\{ D_n^{\max}(\cl
S, \cl T) \}_{n=1}^\infty$. Then it can be written as
\begin{equation}\label{max}
M_n(\cl S \otimes_{\max} \cl T)^+ = \{ z \in M_n(\cl S
\otimes \cl T) : \forall \varepsilon>0, z+\varepsilon I_n \otimes 1_{\cl S} \otimes 1_{\cl
T} \in
D_n^{\max}(\cl S, \cl T) \}.
\end{equation}
We call the operator system $(\cl S \otimes \cl T, \{ M_n(\cl S
\otimes_{\max} \cl T)^+ \}_{n=1}^\infty, 1_{\cl S} \otimes 1_{\cl
T})$ the maximal tensor product of $\cl S$ and $\cl T$, and denote
by $\cl S \otimes_{\max} \cl T$. The family $\{ M_n(\cl S
\otimes_{\max} \cl T)^+ \}_{n=1}^\infty$ is the smallest among
positive cones of operator system structures on $\cl S \otimes \cl
T$. For unital $C^*$-algebras $\cl A$ and $\cl B$, we have the
completely order isomorphic inclusion $\cl A \otimes_{\max} \cl B
\subset \cl A \otimes_{\rm C^*\max} \cl B$.
In particular, $M_m \otimes_{\max} M_n \simeq M_{mn}$ is a complete order isomorphism.
On the other hand, one can also define the largest positive cones on ${\cl S}\ot{\cl T}$
to get the minimal tensor product ${\cl S}\otimes_{\rm min}{\cl T}$.

For given Archimedean order unit spaces $V$, there are two canonical ways
to endow matrix order structures with which they are operator systems \cite{PTT}.
These processes are usually called quantization.
One way is to endow
the largest positive cones of operator system structures on $V$
whose first level positive cone coincides with $V^+$, to get
the minimal operator system $\OMIN(V)$. The other is to endow the smallest positive cones,
to get the the maximal operator system $\OMAX(V)$.

For a given operator system ${\cl S}$ and a natural number $k \in \mathbb N$, one can also define
two operator systems, super $k$-maximal operator systems $\OMAX^k(\mathcal S)$ and
super $k$-minimal operator systems $\OMIN^k(\mathcal S)$, respectively \cite{X}.
For each
$n,k \in \mathbb N$, we set
$$
D_n^{\max, k} = \left\{ \alpha \begin{pmatrix} s_1 && \\ &\ddots& \\
&& s_m \end{pmatrix} \alpha^* : \alpha \in M_{n, mk}, s_i \in M_k(\mathcal S)^+, m \in \mathbb N \right\}.
$$
Applying Archimedeanization process, we get
\begin{equation}\label{max-k}
C_n^{\max, k} = \{ s \in M_n(\mathcal S) : \forall \varepsilon>0, x+\varepsilon I_n \otimes 1_{\mathcal S} \in D_n^{\max, k}(\mathcal S) \}.
\end{equation}
Then, $(\mathcal S, \{ C^{\max}_k (\mathcal S) \}_{k=1}^\infty,
1_{\mathcal S})$ is an operator system which is denoted by
$\OMAX^k(\mathcal S)$. In particular, we have $\OMAX^1(\mathcal S) =
\OMAX (\mathcal S)$. The family $\{ C^{\max}_k (\mathcal S)
\}_{k=1}^\infty$ is the smallest among positive cones of operator
system structures on $\mathcal S$ whose $k$-th level positive cones
coincide with $M_k(\mathcal S)^+$. A linear map $\phi : \OMAX^k(\cl
S) \to \cl T$ is $k$-positive if and only if it is completely
positive. Moreover, this property characterizes $\OMAX^k (\cl S)$
\cite[Theorem 4.6]{X}.

For each
$n, k \in \mathbb N$, we set
$$
C_n^{\min, k} = \{ [s_{i,j}]\in M_n (\mathcal S) : [\varphi (s_{i,j})] \ge 0\ {\text{\rm for each}}\ \varphi \in S_k(\mathcal S), \},
$$
where $S_k (\mathcal S)$ denotes the set of unital completely positive linear maps from $\mathcal S$ into $M_k$.
Then, $(\mathcal S, \{ C^{\min}_k (\mathcal S) \}_{k=1}^\infty, 1_{\mathcal S})$ is
an operator system which is denoted by ${\rm OMIN}^k(\mathcal S)$. In particular,
we have ${\rm OMIN}^1(\mathcal S) = {\rm OMIN} (\mathcal S)$.
The family $\{ C^{\min}_k (\mathcal S) \}_{k=1}^\infty$ is the largest among positive cones of
operator system structures on $\mathcal S$
whose $k$-th level positive cones coincide with $M_k(\mathcal S)^+$.
 A linear map $\phi : \cl S \to \OMIN^k(\cl T)$ is $k$-positive if and only if it is completely positive. Moreover,
 this property characterizes $\OMIN^k (\cl T)$ \cite[Theorem 3.7]{X}.

Duals of operator systems are matrix ordered by the cones
$$
M_n(\mathcal S^*)^+ = {\rm CP}(\mathcal S, M_n), \qquad n=1,2,\dots,
$$
where ${\rm CP}(\mathcal S, M_n)$ denotes the set of all completely positive linear maps from $\mathcal S$ into $M_n$.
With this matrix ordering, we have the complete order isomorphism \cite[Lemma 5.7, Theorem 5.8]{KPTT}
\begin{equation}\label{duality-max}
(\mathcal S \otimes_{\max} \mathcal T)^* \simeq \mathcal L (\mathcal S, \mathcal T^*),
\end{equation}
where $\mathcal L (\mathcal S, \mathcal T^*)$ is matrix ordered by
$$
M_n(\mathcal L (\mathcal S, \mathcal T^*))^+ = {\text{\rm CP}}(\mathcal S, M_n(\mathcal T^*)).
$$
Unfortunately, duals of operator systems fail to be operator systems in general
due to the lack of matrix order unit. However, duals of matrix algebras are again operator systems
because the trace satisfies the condition of Archimedean matrix order unit. Moreover,
matrix algebras are self-dual operator systems:
Every $x\in M_n$ corresponds to $f_x\in M_n^*$ given by
$f_x(y) = \tr (xy^\ttt)=\sum_{i,j=1}^n x_{i,j}y_{i,j}$. The map
$$
\gamma : x \in M_n \mapsto f_x \in M_n^* 
$$
is a unital complete order isomorphism \cite[Theorem 6.2]{PTT}. Related with quantization,
$\gamma$ gives rise to the duality \cite[Proposition 6.5]{X}:
\begin{equation}\label{duality-super}
{\rm OMAX}^k (M_n) \simeq {\rm OMIN}^k(M_n)^*, \qquad {\rm OMIN}^k (M_n) \simeq {\rm OMAX}^k(M_n)^*.
\end{equation}
We will use $\gamma$ to define the dual map of a bi-linear map from $M_A\times M_B$ into $M_C$
which is given by a permutation on $\{A,B,C\}$.

\section{$S$-positive bi-linear maps}

Following proposition shows that some combinations of numbers in the definition of
$(p,q,r)$-positivity are redundant.

\begin{proposition}\label{redundant1}
Suppose that $\phi:{\mathcal S}\times{\mathcal T}\to{\mathcal
R}$ is a bi-linear map in operator systems
$\mathcal S, \mathcal T$ and $\mathcal R$. For $p,q =1,2,\dots$, the following are equivalent:
\begin{enumerate}
\item[(i)] $\phi$ satisfies the condition {\rm (\ref{st-pos})};
\item[(ii)] $\phi$ is $(p,q,r)$-positive for each $r =1,2,\dots$;
\item[(iii)] $\phi$ is $(p,q,r)$-positive for some $r \ge pq$;
\item[(iv)] $\phi$ is $(p,q,pq)$-positive.
\end{enumerate}
\end{proposition}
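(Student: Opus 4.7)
The plan is to establish the cyclic chain (i) $\Rightarrow$ (ii) $\Rightarrow$ (iii) $\Rightarrow$ (iv) $\Rightarrow$ (i). Each link is built entirely from the matrix-ordering compatibility relation (\ref{MO}), so no deeper machinery of operator systems or maximal tensor products should be needed.

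For (i) $\Rightarrow$ (ii), I would start from $\phi_{p,q}(x,y) \in M_{pq}(\mathcal R)^+$ for every $x \in M_p(\mathcal S)^+$ and $y \in M_q(\mathcal T)^+$, and apply (\ref{MO}) to an arbitrary $\alpha \in M_{r,pq}$ to conclude $\alpha\, \phi_{p,q}(x,y)\, \alpha^* \in M_r(\mathcal R)^+$, which is precisely $(p,q,r)$-positivity. The step (ii) $\Rightarrow$ (iii) is immediate, for instance by taking $r = pq$.

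The only substantive step is (iii) $\Rightarrow$ (iv), which requires shrinking rather than enlarging matrix sizes. Given $\alpha \in M_{pq,pq}$, I would form $\tilde\alpha \in M_{r,pq}$ by placing $\alpha$ in the top $pq$ rows and zero entries in the remaining $r-pq$ rows; then (iii) yields $\tilde\alpha\, \phi_{p,q}(x,y)\, \tilde\alpha^* \in M_r(\mathcal R)^+$, and this matrix equals the block form $\begin{pmatrix} \alpha \phi_{p,q}(x,y)\alpha^* & 0 \\ 0 & 0 \end{pmatrix}$. Applying (\ref{MO}) once more with the rectangular compression $\begin{pmatrix} I_{pq} & 0 \end{pmatrix} \in M_{pq,r}$ extracts the upper-left block $\alpha\, \phi_{p,q}(x,y)\,\alpha^*$ as an element of $M_{pq}(\mathcal R)^+$. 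The hypothesis $r \ge pq$ is used here precisely to provide enough room for the padding. Finally, (iv) $\Rightarrow$ (i) follows by specializing $\alpha$ to the identity $I_{pq}$.

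There is no real obstacle in this argument; the whole content is bookkeeping with the way (\ref{MO}) behaves under padding by zeros and compression by $\begin{pmatrix} I & 0 \end{pmatrix}$. The only point demanding any thought is the direction (iii) $\Rightarrow$ (iv), where one must observe that $r \ge pq$ provides just enough space to embed an arbitrary $pq \times pq$ matrix into an $r \times pq$ matrix without altering the positivity data that one needs to extract afterwards.
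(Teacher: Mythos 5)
Your argument is correct and follows essentially the same route as the paper: (i)$\Rightarrow$(ii) via (\ref{MO}), the trivial step to (iii), the zero-padding of $\alpha\in M_{pq}$ to an $r\times pq$ matrix for (iii)$\Rightarrow$(iv), and $\alpha=I_{pq}$ for (iv)$\Rightarrow$(i). The only cosmetic difference is that you make explicit the final compression by $\begin{pmatrix} I_{pq} & 0 \end{pmatrix}$ that extracts the upper-left block, which the paper leaves implicit.
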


\begin{proof}
The implication (i) $\Longrightarrow$ (ii) follows from the relation (\ref{MO}), and
(ii) $\Longrightarrow$ (iii) is clear. For the direction (iii) $\Longrightarrow$ (iv),
we note that
$$
\begin{pmatrix}
\alpha \phi_{p,q}(x,y) \alpha^* &0 \\ 0&0_{r-pq}\end{pmatrix}
= \begin{pmatrix} \alpha \\ 0_{r-pq,pq} \end{pmatrix} \phi_{p,q}(x,y) \begin{pmatrix} \alpha^* & 0_{pq,r-pq} \end{pmatrix}
\in M_r(\mathcal R)^+
$$
for $x \in M_p(\mathcal S)^+, y \in M_q(\mathcal T)^+$ and $\alpha \in M_{pq}$. This implies that
$\alpha \phi_{p,q}(x,y) \alpha^*\in M_{pq}({\cl S})^+$, as it was required.
Finally, we take $\alpha = I_{pq}$ for (iv) $\Longrightarrow$ (i).
\end{proof}

\begin{proposition}\label{redundant2}
Suppose that $\phi:{\mathcal S}\times{\mathcal T}\to{\mathcal
R}$ is a bi-linear map in operator systems $\mathcal S, \mathcal T$ and $\mathcal R$. We have the following:
\begin{enumerate}
\item[(i)] $\phi$ is $(1,q,r)$-positive if and only if $\phi$ is $(1,q \wedge r, q \wedge r)$-positive.
\item[(ii)] $\phi$ is $(p,1,r)$-positive if and only if $\phi$ is $(p \wedge r,1,p \wedge r)$-positive.
\end{enumerate}
\end{proposition}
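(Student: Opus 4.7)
The approach rests on one simple identity. For any linear map $\psi : \mathcal{T} \to \mathcal{R}$, any $y \in M_q(\mathcal{T})$, and any $\alpha \in M_{r, q}$, a direct entrywise computation using the linearity of $\psi$ gives
$$\alpha \psi_q(y) \alpha^* = \psi_r(\alpha y \alpha^*).$$
Applied with $\psi = \phi(x, \cdot)$ for fixed $x \in \mathcal{S}$, this yields
$$\alpha \phi_{1, q}(x, y) \alpha^* = \phi_{1, r}(x, \alpha y \alpha^*),$$
so that the outer matrix $\alpha$ may be absorbed into the second tensor argument at will. I will prove (i); part (ii) will then follow by the symmetric argument, absorbing $\alpha$ into the first argument via $\alpha \phi_{p, 1}(x, y) \alpha^* = \phi_{r, 1}(\alpha x \alpha^*, y)$.

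Set $m := q \wedge r$. For the implication $(1, q, r) \Rightarrow (1, m, m)$, I plan a padding-and-compression argument in the spirit of Proposition \ref{redundant1}, split into two cases. When $q \leq r$, given $\beta \in M_q$ I take $\alpha \in M_{r, q}$ obtained by stacking $\beta$ over a block of zeros; then $\alpha \phi_{1, q}(x, y) \alpha^*$ is block-diagonal with $\beta \phi_{1, q}(x, y) \beta^*$ as its only non-zero block, and compression by $[I_q, 0]$ extracts positivity in $M_q(\mathcal{R})$. When $r \leq q$, given $z \in M_r(\mathcal{T})^+$ and $\beta \in M_r$, I pad $z$ with zeros to obtain $y \in M_q(\mathcal{T})^+$ and append zeros to $\beta$ to obtain $\alpha \in M_{r, q}$; a direct computation then gives $\alpha \phi_{1, q}(x, y) \alpha^* = \beta \phi_{1, r}(x, z) \beta^*$, which inherits positivity from $(1, q, r)$-positivity.

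For the converse $(1, m, m) \Rightarrow (1, q, r)$, I rewrite $\alpha \phi_{1, q}(x, y) \alpha^* = \phi_{1, r}(x, \alpha y \alpha^*)$ via the amplification identity. When $m = q \leq r$, specializing $(1, q, q)$-positivity to $\alpha = I_q$ gives $\phi_{1, q}(x, y) \in M_q(\mathcal{R})^+$, and the compatibility relation (\ref{MO}) then yields the required positivity of $\alpha \phi_{1, q}(x, y) \alpha^*$ in $M_r(\mathcal{R})$. When $m = r \leq q$, the compatibility (\ref{MO}) first gives $\alpha y \alpha^* \in M_r(\mathcal{T})^+$, and $(1, r, r)$-positivity applied with outer matrix $I_r$ then delivers $\phi_{1, r}(x, \alpha y \alpha^*) \in M_r(\mathcal{R})^+$.

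No step presents a genuine obstacle once the amplification identity is in hand; the content of the proposition is precisely that this identity lets one rebalance the outer-matrix size $r$ against the second argument size $q$ at the level of their minimum. The only care required is in keeping straight which case of the split one is in and in which direction the padding points.
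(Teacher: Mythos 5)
Your proposal is correct and follows essentially the same route as the paper's proof: the case $q\le r$ is handled by the padding/compression argument of Proposition \ref{redundant1}, and the case $q\ge r$ uses zero-padding of $y$ and $\alpha$ for one direction and the amplification identity $\alpha\,\phi_{1,q}(x,y)\,\alpha^*=\phi_{1,r}(x,\alpha y\alpha^*)$ for the other, exactly as in the paper. The only cosmetic difference is that you re-derive the $q\le r$ case directly rather than citing Proposition \ref{redundant1}.
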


\begin{proof}
Since we may exchange the role of ${\cl S}$ and ${\cl T}$, it suffices to prove (i).
This is immediate when $q \le r$ by Proposition \ref{redundant1}. Let $q \ge r$ and $x \in \mathcal S^+$.
If $\phi$ is $(1,q,r)$-positive then we have
$$
\alpha \phi_{1,r}(x,y) \alpha^* = \begin{pmatrix} \alpha & 0_{r,q-r} \end{pmatrix}
\phi_{1,q}(x,y \oplus 0_{q-r}) \begin{pmatrix} \alpha^* \\ 0_{q-r,r} \end{pmatrix}
\in M_r(\mathcal R)^+,$$
for $y \in M_r(\mathcal T)^+$ and $\alpha \in M_{r}$, and so $\phi$ is $(1,r,r)$-positive.
For the converse, suppose that $\phi$ is $(1,r,r)$-positive. Then we have
$$
\alpha \phi_{1,q}(x,y) \alpha^* = \phi_{1,r}(x, \alpha y \alpha^*) \in M_r(\mathcal R)^+
$$
for $y \in M_q(\mathcal T)^+$ and $\alpha \in M_{r,q}$. This shows that $\phi$ is $(1,q,r)$-positive.
\end{proof}

Taking $\mathcal S = \mathbb C$ in Proposition \ref{redundant2} (i),
we see that a linear map is $(q,r)$-positive if and only if it is $(q\meet r,q\meet r)$-positive
if and only if it is $q\meet r$-positive in the usual sense. When $\mathcal S, \mathcal T$ and  $\mathcal R$ are matrix algebras,
we will see later that the role of $p,q$ and $r$ in Propositions \ref{redundant1} and \ref{redundant2}
may be permuted together with $\mathcal S, \mathcal T, \mathcal R$.
See Corollary \ref{dual-per}.
If one of $p,q$ is $1$ then we may assume that
the others coincide by Proposition \ref{redundant2}. These are the most important cases with which
conditions (A), (B) and (C) discussed in Introduction may be
explained.

\begin{proposition}\label{1pp}
Suppose that $\phi:{\mathcal S}\times{\mathcal T}\to{\mathcal
R}$ is a bi-linear map in operator systems $\mathcal S, \mathcal T$ and $\mathcal R$. We have the following:
\begin{enumerate}
\item[(i)]
$\phi$ is $(1,p,p)$-positive if and only if the map $y\mapsto\phi(x,y)$ is $p$-positive for each
$x\in {\mathcal S}^+$.
\item[(ii)]
$\phi$ is $(p,1,p)$-positive if and only if the map $x\mapsto\phi(x,y)$ is $p$-positive for each
$y\in {\mathcal T}^+$.
\item[(iii)]
$\phi$ is $(p,p,1)$-positive if and only if $\sum_{i,j=1}^p\phi(x_{ij},y_{ij})\in {\mathcal R}^+$
for each $x \in M_p({\mathcal S})^+$ and $y \in M_p({\mathcal T})^+$.
\item[(iv)]
When $\mathcal S = M_p$, $\phi$ is $(p,p,1)$-positive if and only if $\tilde{\phi} : M_p(\mathcal T) \to \mathcal R$ is positive.
\end{enumerate}
\end{proposition}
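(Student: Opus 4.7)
The plan is to treat (i)--(iv) in turn. The first three parts follow from routine bookkeeping with the matrix ordering compatibility (\ref{MO}); the substance is concentrated in (iv), where the main point is that every positive $x\in M_p(M_p)^+$ has a block structure compatible with a ``Kraus-type'' reshape.

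For (i), the definition of $(1,p,p)$-positivity unfolds to: $\alpha\phi_{1,p}(x,y)\alpha^*\in M_p(\cl R)^+$ for every $x\in\cl S^+$, $y\in M_p(\cl T)^+$ and $\alpha\in M_p$. Taking $\alpha=I_p$ gives exactly $p$-positivity of $y\mapsto\phi(x,y)$; conversely, if that map is $p$-positive then $\phi_{1,p}(x,y)\in M_p(\cl R)^+$ and (\ref{MO}) closes the loop. Part (ii) is symmetric. For (iii), I would expand $\alpha\phi_{p,p}(x,y)\alpha^*$ for a row vector $\alpha\in M_{1,p^2}$ indexed by pairs $(i,k)$. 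The $(\Rightarrow)$ direction comes from $\alpha_{(i,k)}=\delta_{ik}$, yielding $\alpha\phi_{p,p}(x,y)\alpha^*=\sum_{i,j}\phi(x_{i,j},y_{i,j})$. For the converse, reshape $\alpha$ into $A\in M_p$ by $A_{i,k}=\alpha_{(i,k)}$; bilinearity of $\phi$ then gives
\[
\alpha\phi_{p,p}(x,y)\alpha^* = \sum_{i,j}\phi\bigl(x_{i,j},(AyA^*)_{i,j}\bigr),
\]
and since $AyA^*\in M_p(\cl T)^+$ by (\ref{MO}), the hypothesis applied to $x$ and $AyA^*$ gives the claim.

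Part (iv) is the substantive step. With $\cl S=M_p$, the linearization is $\tilde\phi\colon M_p(\cl T)\to\cl R$ with $\tilde\phi\bigl(\sum E_{ij}\otimes y_{ij}\bigr)=\sum\phi(E_{ij},y_{ij})$. For $(\Rightarrow)$ I would apply (iii) with $x=[E_{ij}]_{i,j=1}^p\in M_p(M_p)$, which is the positive rank-one matrix $|\omega\rangle\langle\omega|$ (up to scale) for $|\omega\rangle=\sum_i|ii\rangle$; this reads off as $\tilde\phi(y)\ge 0$ for each $y\in M_p(\cl T)^+$. For $(\Leftarrow)$, by (iii) it suffices to show $\sum_{i,j}x_{i,j}\otimes y_{i,j}\in M_p(\cl T)^+$ whenever $x\in M_p(M_p)^+$ and $y\in M_p(\cl T)^+$, since $\tilde\phi$ will then send this element to $\sum\phi(x_{i,j},y_{i,j})\ge 0$. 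The key observation is that a rank-one positive $x=uu^*\in M_p(M_p)^+$ has block structure $x_{i,j}=TE_{ij}T^*$, where $T\in M_p$ is the matrix reshape of $u\in\mathbb C^{p^2}$; a direct calculation then gives $\sum_{i,j}x_{i,j}\otimes y_{i,j}=TyT^*$ as an ordinary matrix product in $M_p(\cl T)$, which is positive by (\ref{MO}). Summing over a spectral decomposition $x=\sum_\alpha u_\alpha u_\alpha^*$ extends this to general positive $x$. I expect this reshape-and-conjugate observation in (iv) to be the main obstacle; everything else reduces to the matrix ordering axiom.
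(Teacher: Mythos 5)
Your proposal is correct. Parts (i)--(iii) and the forward direction of (iv) run along the same lines as the paper: (i) and (ii) are the unwinding of the definition that the paper gets by citing Proposition \ref{redundant1}, and your treatment of (iii) --- the choice $\alpha_{(i,k)}=\delta_{ik}$ for one direction and the reshape identity $\alpha\phi_{p,p}(x,y)\alpha^*=\sum_{i,j}\phi(x_{i,j},(AyA^*)_{i,j})$ for the other --- is exactly the paper's computation with $\tilde\alpha$. The only genuine divergence is in the converse of (iv). The paper verifies $(p,p,1)$-positivity directly by observing that $\alpha(x\otimes y)\alpha^*$ lies in $(M_p\otimes_{\max}\mathcal T)^+=M_p(\mathcal T)^+$ and then pushes it through the positive map $\tilde\phi$; this leans on the operator-system fact that the maximal tensor product of $M_p$ with $\mathcal T$ carries the standard cone of $M_p(\mathcal T)$. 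You instead reduce to (iii) and prove the needed positivity of $\sum_{i,j}x_{i,j}\otimes y_{i,j}$ by hand: for rank-one $x=uu^*$ the blocks are $x_{i,j}=TE_{ij}T^*$ with $T$ the reshape of $u$, so the sum collapses to $TyT^*$, which is positive by (\ref{MO}), and the general case follows from the spectral decomposition of $x$. Your route is more elementary and self-contained (it in effect reproves the relevant first-level instance of $D_1^{\max}(M_p,\mathcal T)\subseteq M_p(\mathcal T)^+$), at the cost of a slightly longer computation; the paper's route is shorter but imports the tensor-product identification as a black box. Both are sound.
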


\begin{proof}
Statements (i) and (ii) follow immediately from Proposition \ref{redundant1}.

(iii). We denote by $\{ e_i \}_{i=1}^p$ the canonical basis of $\mathbb C^p$ written as column vectors.
Then the identity
$$
\sum_{i,j=1}^p \phi (x_{i,j}, y_{i,j})
= \begin{pmatrix} e_1^\ttt & \cdots & e_p^\ttt \end{pmatrix}
\phi_{p,p}(x, y)
\begin{pmatrix} e_1 \\ \vdots \\ e_p \end{pmatrix}
$$
shows that if $\phi$ is $(p,p,1)$-positive then $\sum_{i,j=1}^p \phi (x_{i,j}, y_{i,j})\in \mathcal R^+$
whenever $x \in M_p({\mathcal S})^+$ and $y \in M_p({\mathcal T})^+$.

For the other direction, let
$x\in M_p({\cl S})^+, y\in M_p({\cl T})^+$ and $\alpha \in M_{1, p^2}$.
If we denote by $\tilde \alpha$ the  $p \times p$ matrix whose entries are given by $\tilde \alpha_{ij} = \alpha_{1,(i-1)p + j}$, then we have
$$
\alpha=\begin{pmatrix} e_1^\ttt & \cdots & e_p^\ttt \end{pmatrix} \begin{pmatrix} \tilde{\alpha} && \\ & \ddots & \\ && \tilde{\alpha}\end{pmatrix}
=\begin{pmatrix} e_1^\ttt & \cdots & e_p^\ttt \end{pmatrix}(I_p\otimes \tilde\alpha).
$$
Therefore, we have
$$
\begin{aligned}
\alpha \phi_{p,p} (x,y) \alpha^*
= & \begin{pmatrix} e_1^\ttt & \cdots & e_p^\ttt \end{pmatrix}(I_p\otimes \tilde\alpha)
\phi_{p,p}(x, y)
(I_p\otimes \tilde\alpha)^* \begin{pmatrix} e_1^\ttt & \cdots & e_p^\ttt \end{pmatrix}^*\\
= & \begin{pmatrix} e_1^\ttt & \cdots & e_p^\ttt \end{pmatrix}
(I_p \otimes \tilde{\alpha}) \tilde{\phi}_{p^2}(x \otimes y) (I_p \otimes \tilde{\alpha})^*
\begin{pmatrix} e_1^\ttt & \cdots & e_p^\ttt \end{pmatrix}^* \\
= & \begin{pmatrix} e_1^\ttt & \cdots & e_p^\ttt \end{pmatrix}
\phi_{p,p}(x, \tilde{\alpha} y \tilde{\alpha}^*)
\begin{pmatrix} e_1^\ttt & \cdots & e_p^\ttt \end{pmatrix}^*.
\end{aligned}
$$
If we write $z=\tilde{\alpha} y \tilde{\alpha}^* \in M_n(\mathcal T)^+$ then
this is nothing but $\sum_{i,j=1}^p \phi(x_{i,j},z_{i,j}) \in \mathcal R^+$ by assumption,
as it was required.

(iv). Suppose that $\phi: M_p\times {\mathcal T}\to{\mathcal R}$ is $(p,p,1)$-positive, and
$y \in M_p({\mathcal T})^+$. Since $[e_{i,j}]_{i,j} \in M_p(M_p)^+$, we have
$$
\tilde \phi (y) = \tilde \phi \left(\sum_{i,j=1}^p e_{i,j} \otimes y_{i,j}\right)
= \sum_{i,j=1}^p \tilde \phi (e_{i,j}, y_{i,j}) \in \mathcal R^+
$$
by (iii).
For the converse, suppose that $\tilde\phi:M_p({\mathcal T})\to{\mathcal R}$ is positive. We note that
$$
\alpha(x \otimes y) \alpha^* \in (M_p \otimes_{\max} \mathcal T)^+ = M_p(\mathcal T)^+,
$$
for $x \in M_p(M_p)^+$, $y\in M_p({\mathcal T})^+$ and $\alpha\in M_{1,p^2}$. It follows that
$$
\alpha \phi_{p,p}(x,y) \alpha^* = \alpha \tilde \phi_{p^2} (x \otimes y)\alpha^*
= \tilde \phi ( \alpha(x \otimes y) \alpha^*) \in \mathcal R^+,
$$
which shows that $\phi$ is $(p,p,1)$-positive.
\end{proof}

Now, we consider bi-linear maps $\phi:M_A\times M_B\to M_C$
between matrix algebras $M_A$, $M_B$ and $M_C$. In this case, a bi-linear map
may be described in terms of associated Choi matrix,
as it was defined in \cite{kye_3_qubit} for multi-linear cases.
For a given bi-linear map $\phi:M_A\times M_B\to M_C$, the Choi matrix $C_\phi$ is defined by
$$
C_\phi
=\sum_{i,j=1}^a \sum_{k,\ell=1}^b
 |i\ran\lan j|\otimes |k\ran\lan \ell|\ot
 \phi(|i\ran\lan j|,|k\ran\lan \ell|)\in M_A\ot M_B\ot M_C.
$$
For a given matrix $C\in M_A\otimes M_B\otimes M_C$, we may write
$$
\begin{aligned}
C
&=\sum_{i,j=1}^a |i\ran\lan j|\ot C_{i,j}\in M_A\ot (M_B\otimes M_C)\\
&=\sum_{i,j=1}^a \sum_{k,\ell=1}^b |i\ran\lan j|\ot |k\ran\lan \ell|\ot C_{(i,k),(j,\ell)}\in M_A\ot M_B\otimes M_C.
\end{aligned}
$$
We associate the bi-linear map $\phi_C:M_A\otimes M_B\to M_C$ by
$$
\phi_C
(|i\ran\lan j|, |k\ran\lan \ell|)=C_{(i,k),(j,\ell)}\in M_C.
$$
The correspondences $\phi\mapsto C_\phi$ and $C\mapsto \phi_C$ are just the
Choi-Jamio\l kowski isomorphisms \cite{choi75-10,jami}
when $M_B=\mathbb C$.

We consider the elementary bi-linear map $\phi_V:M_A\times M_B\to M_C$ with an $c \times ab$ matrix $V$, defined by
\begin{equation}\label{elementary}
\phi_V(x,y)=V(x\otimes y)V^*, \qquad x\in M_A, y\in M_B.
\end{equation}
It is obvious that the map $\phi_V$ satisfies the condition (\ref{st-pos}) for every $p,q=1,2,\dots$, and so it is
$(p,q,r)$-positive for every $p,q,r=1,2,\dots$ by Proposition \ref{redundant1}. To calculate its Choi matrix, we write
$$
|V_{(i,k)}\ran = V | i\ran |k\ran  \in {\mathcal H}_C,
$$
which is the $(i,k)$-th column of the $c \times ab$ matrix $V$. Then we see that
$$
\begin{aligned}
C_{\phi_V}
&=\sum_{i,j=1}^a \sum_{k,\ell=1}^b  |i\ran\lan j|\otimes |k\ran\lan \ell|\ot
 V(|i\ran\lan j|\otimes|k\ran\lan \ell|)V^*\in M_A\ot M_B\ot M_C\\
&=\sum_{i,j=1}^a \sum_{k,\ell=1}^b |i\ran |k\ran \lan j| \lan\ell | \otimes |V_{(i,k)}\ran\lan V_{(j,\ell)}|
   \in (M_A\ot M_B)\ot M_C\\
&=\left(\sum_{(i,k)=(1,1)}^{(a,b)} |i\ran |k\ran |V_{(i,k)}\ran\right)
   \left(\sum_{(j,\ell)=(1,1)}^{(a,b)} \lan j| \lan \ell  |\lan V_{(j,\ell)}|\right)\in M_A\ot M_B\ot M_C
\end{aligned}
$$
is a positive matrix of rank one whose range vector is given by
$\sum_{(i,k)=(1,1)}^{(a,b)} |i\ran |k\ran |V_{(i,k)}\ran$. Conversely, If $C_\phi\in M_A\ot M_B\ot M_C$
is positive with rank one then $\phi$ is of the form in (\ref{elementary}),
where $V$ is given by the above relation in the obvious way. This actually proves the equivalence between
statements (v) and (vi) in the following:

\begin{theorem}\label{Choi-iso}
For a bi-linear map $\phi:M_A\times M_B\to M_C$, the following are equivalent:
\begin{enumerate}
\item[(i)]
$\phi$ is $(p,q,r)$-positive for each $p,q,r=1,2,\dots$;
\item[(ii)]
$\phi$ is $(a,b,ab)$-positive;
\item[(iii)]
$\phi$ satisfies the condition {\rm (\ref{st-pos})} for each $p,q=1,2,\dots$;
\item[(iv)]
$\phi$ satisfies the condition {\rm (\ref{st-pos})} with $p=a$ and $q=b$;
\item[(v)]
the Choi matrix $C_\phi$ is positive;
\item [(vi)]
$\phi=\sum \phi_{V_i}$ with $c \times ab$ matrices $V_i$'s.
\end{enumerate}
\end{theorem}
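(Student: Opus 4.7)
The plan is to close a cycle among (i)--(vi), relying on Proposition \ref{redundant1} for the relationships inside (i)--(iv) and on the rank-one decomposition of positive tri-partite matrices (which was already discussed just before the theorem) for the equivalence (v) $\Leftrightarrow$ (vi). Once that equivalence is in hand, only two substantive links remain: (iv) $\Rightarrow$ (v) and (vi) $\Rightarrow$ (i).

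The bookkeeping on the (i)--(iv) portion is essentially a repackaging of Proposition \ref{redundant1}. Specifically, (i) $\Rightarrow$ (ii) is a direct specialization to $(p,q,r)=(a,b,ab)$; (ii) $\Leftrightarrow$ (iv) is Proposition \ref{redundant1} at $p=a$, $q=b$ (items (i) and (iv) of that proposition); (iii) $\Rightarrow$ (iv) is trivial; and (i) $\Leftrightarrow$ (iii) follows by applying Proposition \ref{redundant1} for each pair $(p,q)$ to identify $(p,q,r)$-positivity for all $r$ with the condition (\ref{st-pos}).

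The crux is (iv) $\Rightarrow$ (v). Here I would feed into the hypothesis the canonical rank-one positive matrices
\[
[\,|i\ran\lan j|\,]_{i,j=1}^{a} \in M_a(M_A)^+, \qquad [\,|k\ran\lan\ell|\,]_{k,\ell=1}^{b} \in M_b(M_B)^+,
\]
whose positivity comes from the fact that they are built from the maximally entangled vectors $\sum_i |i\ran \otimes |i\ran$ and $\sum_k |k\ran \otimes |k\ran$. Applying (\ref{st-pos}) at $(p,q)=(a,b)$ to this pair yields
\[
\bigl[\phi(|i\ran\lan j|,\,|k\ran\lan\ell|)\bigr]_{(i,k),(j,\ell)} \in M_{ab}(M_C)^+,
\]
and under the natural identification $M_A\otimes M_B \otimes M_C \simeq M_{ab}(M_C)$, this block matrix is exactly $C_\phi$ as written in the displayed formula defining the Choi matrix. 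For (vi) $\Rightarrow$ (i), I would invoke the fact (already noted in the discussion preceding the theorem) that each elementary map $\phi_V(x,y)=V(x\ot y)V^*$ satisfies (\ref{st-pos}) for all $p,q$, because $z\mapsto VzV^*$ is completely positive on $M_A\otimes M_B$; by Proposition \ref{redundant1} this upgrades to $(p,q,r)$-positivity for every triple, and the class of $(p,q,r)$-positive maps is closed under finite sums.

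The only mildly delicate point is the index reshuffling in (iv) $\Rightarrow$ (v), where one must carefully match the ordering of the tensor factors in $C_\phi$ with the $(ab)\times(ab)$ block structure of $\phi_{a,b}$ applied to the canonical positive pair; aside from this piece of notational care, the entire argument is mechanical once Proposition \ref{redundant1} and the rank-one decomposition of positive elements of $M_A\otimes M_B\otimes M_C$ are granted.
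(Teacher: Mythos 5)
Your proposal is correct and follows essentially the same route as the paper: the (i)--(iv) block is handled via Proposition \ref{redundant1}, the implication (iv) $\Rightarrow$ (v) is obtained by evaluating condition (\ref{st-pos}) at $p=a$, $q=b$ on the canonical rank-one positive matrices $\sum_{i,j}|i\ran\lan j|\otimes|i\ran\lan j|$ and $\sum_{k,\ell}|k\ran\lan\ell|\otimes|k\ran\lan\ell|$, the equivalence (v) $\Leftrightarrow$ (vi) comes from the spectral (rank-one) decomposition worked out just before the theorem, and the loop is closed by noting each $\phi_V$ satisfies (\ref{st-pos}) for all $p,q$. The only cosmetic difference is that you close the cycle at (i) where the paper closes it at (iii), which is immaterial given (i) $\Leftrightarrow$ (iii).
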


\begin{proof}
Equivalences (i) $\Longleftrightarrow$ (iii) and (ii) $\Longleftrightarrow$ (iv) come from Proposition \ref{redundant1}.
We proceed to prove the implications (iv) $\Longrightarrow$ (v) $\Longrightarrow$ (vi) $\Longrightarrow$ (iii).
The condition (iv) tells us that
if $\sum_{i,j=1}^a |i\ran\lan j|\ot x_{i,j}\in (M_a\ot M_A)^+$ and
$\sum_{k,\ell=1}^b |k\ran\lan \ell |\ot y_{k,\ell}\in (M_b\ot M_B)^+$ then
$$
\sum_{i,j=1}^a \sum_{k,\ell=1}^b |i\ran |k\ran \lan j|\lan \ell |\ot \phi(x_{i,j}, y_{k,\ell})\in (M_{ab}\ot M_C)^+.
$$
This implies that the Choi matrix $C_\phi$ is positive because both  $\sum_{i,j=1}^a |i\ran\lan j|\ot |i\ran\lan j|$
and $\sum_{k,\ell=1}^b |k\ran\lan \ell |\ot |k\ran\lan \ell |$ are positive. Therefore, we see that (iv)
implies (v).
If $C_\phi$ is positive then it is the sum of rank one positive matrices by the spectral decomposition,
and so we see that $\phi$ is of the form in (vi) by the above discussion.
It is easy to see that the bi-linear map $\phi_V$ satisfies the condition (iii).
\end{proof}

The Hadamard product $[x_{i,j}]\circ [y_{i,j}]=[x_{i,j}y_{i,j}]$ between $n\times n$ matrices
is a typical example of a bi-linear map satisfying the conditions in
Theorem \ref{Choi-iso}. Its Choi matrix is given by
$$
\sum_{i,j=0}^{n-1} |i\ran\lan j|\ot  |i\ran\lan j|\ot  |i\ran\lan j|\in M_n\ot M_n\ot M_n,
$$
which is the rank one positive matrix onto the vector
$\sum_{i=0}^{n-1} |i\ran |i\ran |i\ran \in \mathbb C^{n^3}$.
We close this section with the linearization of
$(p,q,r)$-positive bi-linear maps.

\begin{theorem}\label{linearization}
Suppose that $\phi:{\mathcal S}\times{\mathcal T}\to{\mathcal
R}$ is a bi-linear map for operator systems $\mathcal S, \mathcal T$ and $\mathcal R$. Then the following are equivalent:
\begin{enumerate}
\item[(i)]
$\phi$ is $(p,q,r)$-positive;
\item[(ii)]
$\tilde \phi : \OMAX^p(\mathcal S) \otimes_{\max} \OMAX^q(\mathcal T)
\to \mathcal R$ is $r$-positive;
\item[(iii)]
$\tilde \phi : \OMAX^p(\mathcal S) \otimes_{\max} \OMAX^q(\mathcal T)
\to {\rm OMIN}^r (\mathcal R)$ is completely positive.
\end{enumerate}
\end{theorem}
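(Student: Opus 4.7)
The plan is to dispose of the easier directions first, then concentrate on (i)$\Rightarrow$(ii). For (ii)$\Leftrightarrow$(iii) I invoke the characterization of the super $r$-minimal operator system recalled in Section~2: a linear map into $\OMIN^r(\cl R)$ is completely positive iff it is $r$-positive; applying this with domain $\OMAX^p(\cl S)\otimes_{\max}\OMAX^q(\cl T)$ gives the equivalence. For (ii)$\Rightarrow$(i), the $p$-th level positive cone of $\OMAX^p(\cl S)$ coincides with $M_p(\cl S)^+$ by construction, and similarly for $\OMAX^q(\cl T)$; so $x\in M_p(\cl S)^+$ and $y\in M_q(\cl T)^+$ give $x\otimes y\in D_{pq}^{\max}(\OMAX^p(\cl S),\OMAX^q(\cl T))$. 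Compression by $\alpha\in M_{r,pq}$ stays positive by (\ref{MO}), so the $r$-positivity of $\tilde\phi$ delivers $\alpha\phi_{p,q}(x,y)\alpha^*=\tilde\phi_r(\alpha(x\otimes y)\alpha^*)\in M_r(\cl R)^+$, which is $(p,q,r)$-positivity.

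For the substantive implication (i)$\Rightarrow$(ii), take $z\in M_r(\OMAX^p(\cl S)\otimes_{\max}\OMAX^q(\cl T))^+$ and unfold (\ref{max}) and (\ref{max-k}). Modulo the two Archimedean closures, it suffices to show that $\tilde\phi_r(\alpha(P\otimes Q)\alpha^*)$ is positive for generators $P=\beta\diag(x_1,\dots,x_m)\beta^*\in D_k^{\max,p}(\cl S)$ and $Q=\gamma\diag(y_1,\dots,y_n)\gamma^*\in D_\ell^{\max,q}(\cl T)$, with $x_i\in M_p(\cl S)^+$, $y_j\in M_q(\cl T)^+$, $\beta\in M_{k,mp}$, $\gamma\in M_{\ell,nq}$ and $\alpha\in M_{r,k\ell}$. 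A direct block computation (using the canonical shuffle $M_{mp}\otimes M_{nq}\simeq M_{mn}\otimes M_{pq}$) gives $\phi_{mp,nq}(\diag(x_i),\diag(y_j))=\diag(\phi_{p,q}(x_i,y_j))$, and partitioning $\alpha(\beta\otimes\gamma)\in M_{r,mnpq}$ into column blocks $A_{i,j}\in M_{r,pq}$ yields
$$
\tilde\phi_r(\alpha(P\otimes Q)\alpha^*)=\alpha(\beta\otimes\gamma)\diag(\phi_{p,q}(x_i,y_j))(\beta\otimes\gamma)^*\alpha^*=\sum_{i,j}A_{i,j}\,\phi_{p,q}(x_i,y_j)\,A_{i,j}^*.
$$
Each summand lies in $M_r(\cl R)^+$ by $(p,q,r)$-positivity.

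The main obstacle, which is bookkeeping rather than idea, is carrying the two Archimedean closures through. For the inner closure ($D_k^{\max,p}\to C_k^{\max,p}$, similarly on the $\cl T$-side), I replace $P,Q$ by $P+\varepsilon_1 I_k\otimes 1_{\cl S}$ and $Q+\varepsilon_2 I_\ell\otimes 1_{\cl T}$, expand by bilinearity of $\tilde\phi_r\circ(\alpha(\cdot\otimes\cdot)\alpha^*)$, and note that the error terms are fixed self-adjoint elements of $M_r(\cl R)$ dominated by a multiple of the order unit $I_r\otimes 1_{\cl R}$; the Archimedean matrix order unit property of $M_r(\cl R)$ then discharges $\varepsilon_1,\varepsilon_2\to 0$ successively. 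For the outer closure the key observation is $\phi(1_{\cl S},1_{\cl T})\ge 0$: plugging $x=I_p\otimes 1_{\cl S}$, $y=I_q\otimes 1_{\cl T}$ into $(p,q,r)$-positivity and using $\phi_{p,q}(I_p\otimes 1_{\cl S},I_q\otimes 1_{\cl T})=I_{pq}\otimes\phi(1_{\cl S},1_{\cl T})$ yields $\alpha\alpha^*\otimes\phi(1_{\cl S},1_{\cl T})\in M_r(\cl R)^+$ for every $\alpha\in M_{r,pq}$, forcing $\phi(1_{\cl S},1_{\cl T})\ge 0$; one further Archimedean pass then completes the proof.
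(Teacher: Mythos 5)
Your proof is correct and follows essentially the same route as the paper's: the same use of \cite[Theorem 3.7]{X} for (ii)$\Leftrightarrow$(iii), the same compression argument for (ii)$\Rightarrow$(i), and for (i)$\Rightarrow$(ii) the same reduction to generators of the two Archimedeanized cones, the same block decomposition $\sum_{i,j}A_{i,j}\phi_{p,q}(x_i,y_j)A_{i,j}^*$, and the same absorption of the $\varepsilon$-perturbations via the Archimedean order unit. The only cosmetic difference is the order in which you discharge the inner and outer Archimedean closures and your explicit isolation of $\phi(1_{\mathcal S},1_{\mathcal T})\ge 0$, which the paper handles within a single combined estimate.
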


\begin{proof}
The equivalence between (ii) and (iii) follows from \cite[Theorem 3.7]{X}.
Suppose that (ii) holds, and take $x \in M_p(\mathcal S)^+, y \in M_q(\mathcal T)^+$ and
$\alpha \in M_{r,pq}$. Since
$$
\alpha (x \otimes y) \alpha^* \in M_r(\OMAX^p(\mathcal S) \otimes_{\max} \OMAX^q(\mathcal T))^+,
$$
we have
$$
\alpha \phi_{p,q}(x,y) \alpha^* = \alpha \tilde{\phi}_{pq}(x \otimes y) \alpha^*
= \tilde{\phi}_{pq} (\alpha (x \otimes y) \alpha^*) \in M_r(\mathcal R)^+,
$$
and so, $\phi$ is $(p,q,r)$-positive.

For the direction (i) $\Longrightarrow$ (ii), we take
$z\in M_r( \OMAX^p (\mathcal S) \otimes_{\max} \OMAX^q(\mathcal T))^+$ and
arbitrary $\varepsilon_1, \varepsilon_2, \varepsilon_3 >0$.
By (\ref{max}), we can take
$x \in M_m(\OMAX^p(\mathcal S))^+, y \in M_n(\OMAX^q(\mathcal T))^+$ and $\alpha \in M_{r,mn}$
satisfying the relation
$$
z + \varepsilon_1 I_r \otimes 1_{\mathcal S} \otimes 1_{\mathcal T} = \alpha (x \otimes y) \alpha^*.
$$
By (\ref{max-k}), we may also find
$x_i \in M_p({\cl S})^+, y_j \in M_q({\cl T})^+$ and $\beta \in M_{m,ps}, \gamma \in M_{n,qt}$ satisfying
$$
x+ \varepsilon_2 I_m \otimes 1_{\mathcal S}=\beta \begin{pmatrix} x_1 && \\ &\ddots& \\ && x_s \end{pmatrix} \beta^*,
\qquad
y+\varepsilon_3 I_n \otimes 1_{\mathcal T}=\gamma \begin{pmatrix} y_1 && \\ &\ddots& \\ && y_t \end{pmatrix} \gamma^*.
$$
Write
$$
\alpha (\beta \otimes \gamma) =
\begin{pmatrix} \Theta_{(1,1)}& \cdots & \Theta_{(i,j)} & \cdots & \Theta_{(s,t)}
\end{pmatrix}\in M_{r,pqst}
$$
with $\Theta_{(i,j)} \in M_{r,pq}$.
Then, we have the identity
$$
\begin{aligned}
&\sum_{i=1}^s \sum_{j=1}^t \Theta_{(i,j)} \phi_{p,q}(x_i, y_j) \Theta_{(i,j)}^*  \\
& = \alpha(\beta \otimes \gamma)
\begin{pmatrix}
\phi_{p,q}(x_1, y_1) &&&& \\ &\ddots &&& \\ && \phi_{p,q}(x_i,y_j) && \\ &&& \ddots & \\ &&&& \phi_{p,q}(x_s,y_t)
\end{pmatrix}
(\beta \otimes \gamma)^* \alpha^* \\
& = \alpha \phi_{m,n}(x+\varepsilon_2 I_m \otimes 1_{\mathcal S}, y+\varepsilon_3 I_n \otimes 1_{\mathcal T}) \alpha^*
\end{aligned}
$$
which belongs to $M_r(\mathcal R)^+$ by $(p,q,r)$-positivity of $\phi$.
Expanding the last term, we have
$$
\begin{aligned}
& \alpha \phi_{m,n}(x, y) \alpha^* +
  \varepsilon_3 \alpha \phi_{m,n}(x,  I_n \otimes 1_{\mathcal T}) \alpha^*
  + \varepsilon_2 \alpha \phi_{m,n}(I_m \otimes 1_{\mathcal S}, y) \alpha^*\\
& \qquad + \varepsilon_2 \varepsilon_3 \alpha \phi_{m,n}(I_m \otimes 1_{\mathcal S}, I_n \otimes 1_{\mathcal T}) \alpha^*\\
&\le \tilde{\phi}_r(z)
  + (\varepsilon_1 \|\tilde{\phi}(1_{\mathcal S} \otimes 1_{\mathcal T}) \|
  + \varepsilon_3 \|  \alpha \phi_{m,n}(x,  I_n \otimes 1_{\mathcal T}) \alpha^* \| \\
&\qquad + \varepsilon_2 \|\alpha \phi_{m,n}(I_m \otimes 1_{\mathcal S}, y) \alpha^*\|
  + \varepsilon_2 \varepsilon_3 \|\alpha \phi_{m,n}(I_m \otimes 1_{\mathcal S}, I_n \otimes 1_{\mathcal T}) \alpha^*\|)
   I_r \otimes 1_{\mathcal R}
\end{aligned}
$$
We note that $x,y$ and $\alpha$ are independent of the choice of $\varepsilon_2$ and $\varepsilon_3$.
Therefore, we can conclude that
$\tilde{\phi}_r(z) \in M_r(\mathcal R)^+$ by the Archimedean property.
This proves that the linearization $\tilde\phi$ is $r$-positive.
\end{proof}

\section{Schmidt numbers for tri-partite states}

We recall that the Schmidt rank of a vector
$\eta= \sum_{i=1}^n v_i \otimes w_i \in {\mathcal H}_B \otimes {\mathcal H}_C$ is equal to
the rank of the associate map $\lambda_\eta:{\mathcal H}_B\to{\mathcal H}_C$ given by
$$
\lambda_\eta (v) = \sum_{i=1}^n \langle \bar v_i |v \rangle  w_i.
$$
In fact, the correspondence $\eta\mapsto \lambda_\eta$ follows from the natural isomorphisms
$$
{\mathcal H}_B \otimes {\mathcal H}_C \simeq ({\mathcal H}_B)^* \otimes {\mathcal H}_C
\simeq \mathcal L ({\mathcal H}_B, {\mathcal H}_C).
$$
Here, $\mathcal H_B$ is self-dual because it has a canonical basis,
and the second isomorphism is the linearization of the bilinear map
$$
(f, w) \in  ({\mathcal H}_B)^* \times {\mathcal H}_C \mapsto f(\cdot) w \in \mathcal L ({\mathcal H}_B, {\mathcal H}_C).
$$ Since the map $\lambda_\eta$ follows from the above isomorphisms, it is independent of the tensor expression of $\eta$.

Applying the above isomorphism twice, we consider the natural isomorphisms
$$
\begin{aligned}
{\mathcal H}_A \otimes {\mathcal H}_B \otimes {\mathcal H}_C
&\simeq  ({\mathcal H}_A)^* \otimes ({\mathcal H}_B)^* \otimes {\mathcal H}_C\\
& \simeq  ({\mathcal H}_A)^* \otimes \mathcal L ({\mathcal H}_B, {\mathcal H}_C)
\simeq \mathcal L ({\mathcal H}_A, \mathcal L ({\mathcal H}_B, {\mathcal H}_C)),
\end{aligned}
$$
to get the analogous notion.
We write $\xi\in {\mathcal H}_A\ot{\mathcal H}_B\ot{\mathcal H}_C$ by
$$
\xi = \sum_{i=1}^n u_i \otimes \eta_i \in {\mathcal H}_A \otimes ({\mathcal H}_B \otimes {\mathcal H}_C)
$$
with $u_i \in {\mathcal H}_A$ and $\eta_i \in {\mathcal H}_B \otimes {\mathcal H}_C$.
Then, the above isomorphisms maps $\xi$ to the linear map $\Lambda_\xi : {\mathcal H}_A \to \mathcal L({\mathcal H}_B, {\mathcal H}_C)$ given by
$$
\Lambda_\xi  (u)= \sum_{i=1}^n \langle \bar u_i |u \rangle  \lambda_{\eta_i}.
$$
Now, we consider the following three numbers:
$$
\begin{aligned}
\alpha_\xi &= {\rm rank} \Lambda_\xi,\\
\beta_\xi &=\dim \bigvee \{ {\rm supp}\ T : T \in {\rm ran} \Lambda_\xi \}\\
\gamma_\xi &=\dim \bigvee \{  {\rm ran} \ T : T \in {\rm ran} \Lambda_\xi \}.
\end{aligned}
$$
Here, the support of a linear map means the orthogonal complement of its kernel.
Since the map $\Lambda_\xi$ follows from the above isomorphisms, the map $\Lambda_\xi$
and three numbers $\alpha_\xi, \beta_\xi, \gamma_\xi$ are independent of the tensor expression of $\xi$.

\begin{definition}
\emph{We call the triplet $(\alpha_\xi,\beta_\xi,\gamma_\xi)$ the} Schmidt rank \emph{of the vector
$\xi\in{\mathcal H}_A\ot{\mathcal H}_B\ot {\mathcal H}_C$ and write $\SR(\xi)=(\alpha_\xi,\beta_\xi,\gamma_\xi)$}.
\end{definition}

\begin{theorem}\label{SR}
Suppose that $\xi \in {\mathcal H}_A \otimes {\mathcal H}_B \otimes {\mathcal H}_C$ and
$1 \le p \le a, 1 \le q \le b, 1 \le r \le c$. Then the following are equivalent:
\begin{enumerate}
\item[(i)]
$\alpha_\xi\le p$, $\beta_\xi\le q$, $\gamma_\xi\le r$;
\item[(ii)]
there exist vectors
$\{ u_i \}_{i=1}^p \subset {\mathcal H}_A$,
$\{ v_j \}_{j=1}^q \subset {\mathcal H}_B$,
$\{ w_k \}_{k=1}^r \subset {\mathcal H}_C$ and scalars
$c_{i,j,k}$ such that
$$
\xi = \sum_{i=1}^p \sum_{j=1}^q \sum_{k=1}^r c_{i,j,k} u_i \otimes v_j \otimes w_k;
$$
\item[(iii)]
there exist orthonormal vectors
$\{ u_i \}_{i=1}^p \subset {\mathcal H}_A$,
$\{ v_j \}_{j=1}^q \subset {\mathcal H}_B$,
$\{ w_k \}_{k=1}^r \subset {\mathcal H}_C$ and scalars
$c_{i,j,k}$ such that
$$
\xi = \sum_{i=1}^p \sum_{j=1}^q \sum_{k=1}^r c_{i,j,k} u_i \otimes v_j \otimes w_k.
$$
\end{enumerate}
\end{theorem}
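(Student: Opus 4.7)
The implication (iii) $\Rightarrow$ (ii) is immediate. For (ii) $\Rightarrow$ (i), my plan is to substitute the given expansion directly into the definitions. Grouping the expansion as $\xi = \sum_{i=1}^p u_i \otimes \eta_i$ with $\eta_i = \sum_{j,k} c_{i,j,k} v_j \otimes w_k$, the formula $\Lambda_\xi(u) = \sum_i \langle \bar u_i | u\rangle \lambda_{\eta_i}$ shows that the range of $\Lambda_\xi$ sits inside $\spa\{\lambda_{\eta_i} : 1 \le i \le p\}$, giving $\alpha_\xi \le p$. Reading off $\lambda_{\eta_i}(v) = \sum_{j,k} c_{i,j,k} \langle \bar v_j | v\rangle w_k$, one sees ${\rm supp}\,\lambda_{\eta_i} \subset \spa\{\bar v_j\}$ and ${\rm ran}\,\lambda_{\eta_i} \subset \spa\{w_k\}$; since every $T$ in the range of $\Lambda_\xi$ is a linear combination of the $\lambda_{\eta_i}$, the same subspaces bound its support and range, yielding $\beta_\xi \le q$ and $\gamma_\xi \le r$.

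The main content is (i) $\Rightarrow$ (iii). First, since $\alpha_\xi = \rk \Lambda_\xi$ equals the bipartite Schmidt rank of $\xi$ across the $A$-$BC$ cut, the standard bipartite Schmidt decomposition yields orthonormal $\{u_i\}_{i=1}^p \subset \mathcal H_A$ (padding with zero terms when $\alpha_\xi < p$) together with vectors $\eta_i \in \mathcal H_B \otimes \mathcal H_C$ satisfying $\xi = \sum_{i=1}^p u_i \otimes \eta_i$. The family $\{\lambda_{\eta_i}\}$ then spans the range of $\Lambda_\xi$, so the joint support of these operators has dimension $\beta_\xi \le q$ and their joint range has dimension $\gamma_\xi \le r$.

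I would then choose orthonormal $\{v_j\}_{j=1}^q \subset \mathcal H_B$ whose conjugates span a subspace containing every ${\rm supp}\,\lambda_{\eta_i}$, and orthonormal $\{w_k\}_{k=1}^r \subset \mathcal H_C$ spanning a subspace containing every ${\rm ran}\,\lambda_{\eta_i}$. Each $\lambda_{\eta_i}$ thus admits an expansion $\lambda_{\eta_i} = \sum_{j,k} a^i_{j,k} |w_k\rangle\langle \bar v_j|$, and inverting the isomorphism $\mathcal H_B \otimes \mathcal H_C \simeq \mathcal L(\mathcal H_B, \mathcal H_C)$ translates this back to $\eta_i = \sum_{j,k} a^i_{j,k} v_j \otimes w_k$. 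Substituting into $\xi = \sum_i u_i \otimes \eta_i$ produces the required expression in the product set $\{u_i \otimes v_j \otimes w_k\}$.

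The main obstacle I foresee is the bookkeeping around the complex conjugation in $\lambda_\eta(v) = \sum \langle \bar v_i | v\rangle w_i$: the subspace on the $\mathcal H_B$ side corresponding to ${\rm supp}\,\lambda_\eta$ is the complex conjugate of the $\mathcal H_B$-component of $\eta$'s tensor expansion, so one must take care that it is $\{\bar v_j\}$ (not $\{v_j\}$) that spans the joint support. A minor secondary point is padding by zero coefficients to reach the exact dimensions $p,q,r$ whenever $\alpha_\xi, \beta_\xi, \gamma_\xi$ are strictly smaller.
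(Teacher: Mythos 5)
Your proof is correct and follows essentially the same route as the paper: the direction (ii) $\Rightarrow$ (i) is the same computation with the explicit formula for $\Lambda_\xi$, and for (i) $\Rightarrow$ (iii) your two-stage argument (bipartite Schmidt decomposition across the $A$-$BC$ cut, then expansion of each $\lambda_{\eta_i}$ in the rank-one operators $|w_k\rangle\langle \bar v_j|$) produces exactly the same orthonormal families and coefficients that the paper obtains by defining $c_{i,j,k}$ through $\Lambda_\xi(\bar u_i)(\bar v_j)=\sum_k c_{i,j,k}w_k$ and then verifying $\Lambda_\xi=\Lambda_{\sum c_{i,j,k}u_i\otimes v_j\otimes w_k}$ directly. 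Your packaging, which invokes the injectivity of $\eta\mapsto\lambda_\eta$ on each $\eta_i$ separately, is a marginally cleaner justification of that final identity, and the conjugation bookkeeping and zero-padding you flag are exactly the points the paper also has to handle.
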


\begin{proof}
(ii) $\Rightarrow$ (i). Suppose that $\xi$ is given as in the statement (ii). Then we have
$$
\xi=\sum_{i=1}^p u_i\ot
\left(\sum_{j=1}^q\sum_{k=1}^r c_{i,j,k}v_j\ot w_k\right)
=\sum_{i=1}^p u_i\ot \left(\sum_{j=1}^q v_j\ot
\left(\sum_{k=1}^r c_{i,j,k} w_k\right)\right).
$$
Therefore, we have the following relation
\begin{equation}\label{zzz}
\Lambda_\xi (u) (v)= \sum_{i=1}^p \langle \bar u_i |u \rangle \lambda_{\sum_{j=1}^q v_j\ot
\left(\sum_{k=1}^r c_{i,j,k} w_k\right)}(v)
= \sum_{i=1}^p \sum_{j=1}^{q} \sum_{k=1}^r c_{i,j,k} \langle  \bar u_i |u \rangle \langle \bar v_j | v \rangle w_k,
\end{equation}
for $u\in{\mathcal H}_A$ and $v\in{\mathcal H}_B$.
This relation tells us that  $\bigvee \{  {\rm ran} \ T : T \in {\rm ran} \Lambda_\xi \}$ is a
subspace of ${\rm span}\{ w_k : 1 \le k \le r \}$. Therefore, we have $\gamma_\xi\le r$.
For the inequality $\alpha_\xi\le p$, it suffices to show that ${\rm supp} \Lambda_\xi$
is a subspace of ${\rm span} \{ \bar u_i : 1 \le i \le p\}$,
or equivalently
$$
\lan \bar u_i|u\ran =0, \ i=1,2,\dots,p\ \Longrightarrow u\in {\rm Ker} \Lambda_\xi,
$$
which follows from (\ref{zzz}).
It remains to show $\beta_\xi\le q$.
If $\langle  \bar v_j |v \rangle=0$ for each $j=1,2,\dots, q$,
then we have $\Lambda_\xi (u) (v)= 0$ by (\ref{zzz}). This means that
$\{\bar v_j : 1 \le j \le q \}^\perp$ is the subspace of
$\bigcap \{ \ker T : T \in {\rm ran} \Lambda_\xi \}$.
Considering their orthogonal complements, we see that the space
$\bigvee \{ {\rm supp}\ T : T \in {\rm ran} \Lambda_\xi \}$ is a subspace of
${\rm span}\{ \bar v_j : 1 \le j \le q \}$.

(i) $\Rightarrow$ (iii).
We write $\alpha_\xi=\alpha, \beta_\xi=\beta, \gamma_\xi=\gamma$.
We take an orthonormal basis $\{ \bar u_i \}_{i=1}^\alpha$ of the support of $\Lambda_\xi$,
an orthonormal basis $\{ \bar v_j \}_{j=1}^\beta$ of $\bigvee \{ {\rm supp}\ T : T \in {\rm ran} \Lambda_\xi \}$
and an orthonormal basis $\{ w_k \}_{k=1}^\gamma$ of $\bigvee \{  {\rm ran} \ T : T \in {\rm ran} \Lambda_\xi \}$.
There exist scalars $c_{i,j,k}$ such that
$\Lambda_\xi (\bar u_i) (\bar v_j)=\sum_{k=1}^\gamma c_{i,j,k} w_k$.
It suffices to show
$\xi = \sum_{i=1}^\alpha \sum_{j=1}^\beta \sum_{k=1}^\gamma c_{i,j,k} u_i \otimes v_j \otimes w_k$.
To do this, we show that
$$
\Lambda_\xi=\Lambda_{\sum_{i=1}^\alpha \sum_{j=1}^\beta \sum_{k=1}^\gamma c_{i,j,k} u_i \otimes v_j \otimes w_k}.
$$
First of all, we have
$$
\Lambda_\xi (\bar u_{i_0})(v)
= \Lambda_\xi (\bar u_{i_0})\left(\sum_{j=1}^\beta \langle \bar v_j | v \rangle \bar v_j +v' \right) \\
= \sum_{j=1}^\beta \sum_{k=1}^\gamma c_{i_0, j,k} \langle \bar v_j | v \rangle w_k
$$
for $v \in \HA$ with $v' \perp \bar v_j$. On the other hand, we have
$$
\Lambda_{\sum_{i=1}^\alpha \sum_{j=1}^\beta \sum_{k=1}^\gamma c_{i,j,k} u_i \otimes v_j \otimes w_k}(\bar u_{i_0})(v)
=\sum_{j=1}^\beta\sum_{k=1}^\gamma c_{i_0, j,k} \lambda_{v_j\ot w_k}(v)
=\sum_{j=1}^\beta \sum_{k=1}^\gamma c_{i_0, j,k} \langle \bar v_j | v \rangle w_k.
$$
For all $u\in \HA$ with $u \perp \bar u_i$, we have
$$
\Lambda_\xi (u) = 0
= \sum_{i=1}^\alpha \langle \bar u_i | u \rangle \lambda_{\sum_{j=1}^\beta \sum_{k=1}^\gamma c_{i,j,k} v_j \otimes w_k}
= \Lambda_{\sum_{i=1}^\alpha \sum_{j=1}^\beta \sum_{k=1}^\gamma c_{i,j,k} u_i \otimes v_j \otimes w_k}(u).
$$
If we put $c_{i,j,k}=0$ if $\alpha < i \le p$ or $\beta < j \le q$ or $\gamma < k \le r$, we get the expression.
\end{proof}

For a vector $\xi\in \HA\ot \HB\ot \HC$, we write
$\SR(\xi)\le (p,q,r)$ if the conditions in Theorem \ref{SR} are satisfied.
For a permutation $\sigma$ in $\{A,B,C\}$, we denote by $\xi^\sigma$ the vector in
${\mathcal H}_{\sigma A} \otimes {\mathcal H}_{\sigma B} \otimes {\mathcal H}_{\sigma C}$
obtained by the flip operator under $\sigma$.
For a given triplet $S=(s_A,s_B,s_C)$ and a permutation $\sigma$, we denote by
$S^\sigma$ the triplet $(s_{\sigma A}, s_{\sigma _B}, s_{\sigma C})$.

\begin{corollary}\label{permutation}
$\SR(\xi)= (\alpha,\beta,\gamma)$ if and only if $\SR(\xi^\sigma)= (\alpha,\beta,\gamma)^\sigma$
\end{corollary}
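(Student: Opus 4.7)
The plan is to reduce the corollary to the equivalence (i)$\Longleftrightarrow$(ii) of Theorem \ref{SR}: the characterization via the tri-indexed decomposition $\xi=\sum c_{i,j,k}\,u_i\otimes v_j\otimes w_k$ is manifestly equivariant under the $\mathfrak{S}_3$-action permuting tensor factors, and this is essentially the only ingredient I need. The three numbers $\alpha_\xi,\beta_\xi,\gamma_\xi$ were defined asymmetrically via the linear map $\Lambda_\xi$, so a direct comparison of these numbers before and after a permutation is awkward; routing through the symmetric description in (ii) sidesteps this entirely.

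First I would record the following transport principle: a decomposition of $\xi$ as in Theorem \ref{SR}(ii) with bounds $S=(s_A,s_B,s_C)$ pushes forward, under the flip operator for $\sigma$, to a decomposition of $\xi^\sigma$ with the same scalars $c_{i,j,k}$, where the factors are simply relabeled to live in $\mathcal H_{\sigma A}\otimes\mathcal H_{\sigma B}\otimes\mathcal H_{\sigma C}$. Re-reading this as a decomposition for $\xi^\sigma$ in the sense of Theorem \ref{SR}(ii), the new index-bounds are $(s_{\sigma A},s_{\sigma B},s_{\sigma C})=S^\sigma$. Invoking the implication (ii)$\Longrightarrow$(i) of Theorem \ref{SR}, we obtain $\SR(\xi^\sigma)\le S^\sigma$ componentwise whenever $\SR(\xi)\le S$.

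Applying this principle with $S=\SR(\xi)=(\alpha,\beta,\gamma)$ gives one direction, $\SR(\xi^\sigma)\le\SR(\xi)^\sigma$. For the reverse, I would run the same argument with $\xi^\sigma$ in place of $\xi$ and $\sigma^{-1}$ in place of $\sigma$, using the easily verified identities $(\xi^\sigma)^{\sigma^{-1}}=\xi$ and $(S^\sigma)^{\sigma^{-1}}=S$; this yields $\SR(\xi)\le\SR(\xi^\sigma)^{\sigma^{-1}}$, equivalently $\SR(\xi)^\sigma\le\SR(\xi^\sigma)$. Combining the two componentwise inequalities forces the equality $\SR(\xi^\sigma)=\SR(\xi)^\sigma$, which is exactly the statement of the corollary.

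There is no serious obstacle here; the proof is essentially bookkeeping once Theorem \ref{SR}(ii) is recognized as a symmetric restatement of the Schmidt-rank triplet. The one thing to keep straight is the distinction between the two actions of $\sigma$ — one on tensor factors producing $\xi^\sigma$, the other on triplets of natural numbers producing $S^\sigma$ — and the compatibility between them encoded in the decomposition. With that bookkeeping in place the argument is immediate.
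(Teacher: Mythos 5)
Your proposal is correct and follows essentially the same route as the paper: both rest on the observation that the decomposition in Theorem \ref{SR}(ii) is equivariant under permutation of tensor factors, so that $\SR(\xi)\le S$ if and only if $\SR(\xi^\sigma)\le S^\sigma$. The only (immaterial) difference is in the final bookkeeping --- you combine the two componentwise inequalities $\SR(\xi^\sigma)\le\SR(\xi)^\sigma$ and $\SR(\xi)^\sigma\le\SR(\xi^\sigma)$, whereas the paper tests the equivalence at $p=\alpha,\alpha-1$, $q=\beta,\beta-1$, $r=\gamma,\gamma-1$.
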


\begin{proof}
By Theorem \ref{SR}, we see that
$\SR(\xi)\le (p,q,r)$ if and only if $\SR(\xi^\sigma)\le (p,q,r)^\sigma$.
We apply this for the cases $p=\alpha, \alpha-1$ and $q=\beta, \beta-1$ and $r=\gamma, \gamma-1$
to get the conclusion.
\end{proof}

In the case of tensor product $\HA\ot\HB$ of two spaces, the Schmidt rank of a vector must fall down in $\{1,2,\dots, a\meet b\}$.
We show that all the possible combinations of triplets for $\SR(\xi)$ is given by the set
$$
\Sigma_{a,b,c}=
\{ (\alpha, \beta, \gamma) \in \mathbb N^3 : \alpha \le \beta \gamma, \beta \le \gamma \alpha, \gamma \le \alpha \beta,
1 \le \alpha \le a, 1 \le \beta \le b, 1 \le \gamma \le c \}.
$$

\begin{proposition}\label{restriction}
There exists $\xi\in\HA\ot\HB\ot\HC$ such that $\SR(\xi)=(\alpha,\beta,\gamma)$ if and only if
$(\alpha,\beta,\gamma)\in \Sigma_{a,b,c}$.
\end{proposition}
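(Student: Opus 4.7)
The plan is to prove each direction separately. For necessity, suppose $\SR(\xi)=(\alpha,\beta,\gamma)$. The bounds $\alpha\le a$, $\beta\le b$, $\gamma\le c$ are immediate from the definitions, since $\Lambda_\xi$ has domain $\HA$ and the spans defining $\beta_\xi,\gamma_\xi$ live in $\HB$ and $\HC$. For the key inequality $\alpha\le\beta\gamma$, set $V:=\bigvee\{{\rm supp}\,T:T\in{\rm ran}\,\Lambda_\xi\}\subseteq\HB$ and $W:=\bigvee\{{\rm ran}\,T:T\in{\rm ran}\,\Lambda_\xi\}\subseteq\HC$, of dimensions $\beta$ and $\gamma$ respectively. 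Every $T\in{\rm ran}\,\Lambda_\xi$ vanishes on $V^\perp$ and sends $V$ into $W$, so the restriction $T\mapsto T|_V$ embeds ${\rm ran}\,\Lambda_\xi$ into $\mathcal L(V,W)$, yielding $\alpha\le\beta\gamma$. Applying the same argument to $\xi^\sigma$ for $\sigma$ equal to the transpositions $(A\,B)$ and $(A\,C)$ of $\{A,B,C\}$, and invoking Corollary \ref{permutation}, gives $\beta\le\alpha\gamma$ and $\gamma\le\alpha\beta$.

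For sufficiency, given $(\alpha,\beta,\gamma)\in\Sigma_{a,b,c}$, I construct $\xi$ explicitly. Because both $\Sigma_{a,b,c}$ and the Schmidt rank are equivariant under the $S_3$-action permuting $\{A,B,C\}$ (the latter by Corollary \ref{permutation}), I may assume without loss of generality that $\gamma=\max(\alpha,\beta,\gamma)$. Under this assumption the first two inequalities defining $\Sigma_{a,b,c}$ are automatic, and the substantive hypotheses become $\alpha,\beta\le\gamma\le\alpha\beta$. These are exactly what is needed to choose an injection
$$
\phi:\{1,\ldots,\gamma\}\to\{1,\ldots,\alpha\}\times\{1,\ldots,\beta\}
$$
with both coordinate projections surjective: first list the diagonal pairs $(i,i)$ for $1\le i\le\min(\alpha,\beta)$, then append pairs of the form $(1,j)$ or $(i,1)$ to hit any coordinate values still uncovered (using $\gamma\ge\max(\alpha,\beta)$), and finally fill the remaining $\gamma-\max(\alpha,\beta)$ slots with any unused cells (available because $\gamma\le\alpha\beta$). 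Fixing orthonormal systems $\{u_i\}_{i=1}^{\alpha}\subseteq\HA$, $\{v_j\}_{j=1}^{\beta}\subseteq\HB$, $\{w_k\}_{k=1}^{\gamma}\subseteq\HC$ (which exist because $\alpha\le a$, $\beta\le b$, $\gamma\le c$) and writing $(i_k,j_k):=\phi(k)$, the candidate vector is
$$
\xi:=\sum_{k=1}^{\gamma}u_{i_k}\otimes v_{j_k}\otimes w_k.
$$

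The verification unwinds the formula $\Lambda_\xi(\bar u_i)(v)=\sum_{k:\,i_k=i}\langle\bar v_{j_k}|v\rangle\,w_k$. In the bases $\{\bar v_j\},\{w_k\}$ the operator $T_i:=\Lambda_\xi(\bar u_i)$ has matrix whose only nonzero entries sit at positions $\{(k,j_k):i_k=i\}$; since $\phi$ is a function, these position sets are pairwise disjoint for distinct $i$, and each is nonempty by first-coordinate surjectivity of $\phi$, so the $T_i$'s are linearly independent and $\alpha_\xi=\alpha$. Similarly one reads off $\bigvee_i{\rm ran}\,T_i=\spa\{w_1,\ldots,w_\gamma\}$ and, using second-coordinate surjectivity, $\bigvee_i{\rm supp}\,T_i=\spa\{\bar v_1,\ldots,\bar v_\beta\}$, yielding $\gamma_\xi=\gamma$ and $\beta_\xi=\beta$. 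The main subtlety is the combinatorial construction of $\phi$: simultaneous injectivity and joint coordinate-surjectivity require $\gamma$ to be the largest of the three ranks, so the $S_3$-reduction at the outset is essential rather than cosmetic.
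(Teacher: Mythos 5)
Your proof is correct, and its overall architecture matches the paper's: reduce by the $S_3$-symmetry of Corollary \ref{permutation}, prove one dimension inequality for necessity, and build an explicit vector from orthonormal systems for sufficiency. There are two worthwhile local differences. For necessity, the paper orders $\alpha\le\beta\le\gamma$ and proves the only non-automatic inequality $\gamma\le\alpha\beta$ by summing $\dim{\rm ran}\,\Lambda_\xi(u_i)=\dim{\rm supp}\,\Lambda_\xi(u_i)\le\beta$ over a basis $\{u_i\}_{i=1}^{\alpha}$ of ${\rm supp}\,\Lambda_\xi$; you instead prove $\alpha\le\beta\gamma$ directly by embedding ${\rm ran}\,\Lambda_\xi$ into $\mathcal L(V,W)$ and recover the other two inequalities by symmetry --- slightly more conceptual, equally valid. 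For sufficiency, the paper writes $\gamma=\beta k+r$ and uses a three-block construction in which the last block reuses the vectors $w_1,\dots,w_{\alpha-k-1}$, which forces a more delicate argument (orthogonality of certain ranges plus a proper-containment observation) to get linear independence of the $\Lambda_\xi(\bar u_i)$; your reformulation as an injection $\{1,\dots,\gamma\}\to\{1,\dots,\alpha\}\times\{1,\dots,\beta\}$ with both projections surjective uses each $w_k$ exactly once, so the operators $T_i$ have pairwise disjoint sets of nonzero matrix entries and linear independence is immediate. The combinatorial existence argument for the injection correctly isolates exactly the hypotheses $\max(\alpha,\beta)\le\gamma\le\alpha\beta$, which is why your reduction to $\gamma=\max(\alpha,\beta,\gamma)$ (rather than the paper's full ordering) suffices. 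Both proofs are complete; yours trades the paper's division-algorithm bookkeeping for a cleaner verification.
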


\begin{proof}
We may assume that $\alpha\le \beta \le \gamma$ by Corollary \ref{permutation}.

($\Longrightarrow$)
Let $\xi\in\HA\ot\HB\ot\HC$ with $\SR(\xi)=(\alpha, \beta, \gamma)$.
We take a basis
$\{ u_1, \cdots, u_\alpha \}$ of ${\rm supp} \Lambda_\xi$. Then, we have
$$
\begin{aligned}
\gamma  = \dim ({\rm ran} \Lambda_\xi(u_1) \vee \cdots \vee {\rm ran} \Lambda_\xi(u_\alpha))
&\le \sum_{i=1}^\alpha \dim {\rm ran}\Lambda_\xi (u_i) \\
&= \sum_{i=1}^\alpha \dim {\rm supp} \Lambda_\xi (u_i)
\le \alpha \beta.
\end{aligned}
$$

($\Longleftarrow$)
Let $\gamma = \beta \cdot k +r$ with $0 \le r < \beta$. Since $\gamma \le \alpha \beta$,
we have $k \le \alpha$. We take orthonormal sets $\{ u_i \}_{i=1}^\alpha$ in
$\HA$, $\{ v_j \}_{j=1}^\beta$ in $\HB$ and $\{ w_k \}_{k=1}^\gamma$ in
$\HC$. We consider the vector
$$
\xi = \sum_{i=1}^k u_i \otimes \left(\sum_{j=1}^\beta v_j \otimes w_{(i-1)\beta +j}\right)
+ u_{k+1} \otimes \left(\sum_{j=1}^r v_j \otimes w_{k\beta +j}\right)
+ \sum_{i=1}^{\alpha -k-1} u_{k+1+i} \otimes v_i \otimes w_i.
$$
in $\HA \ot \HB \ot \HC$.
The expression in the last term is legitimate because we assume $\alpha\le\beta\le\gamma$.
If $\gamma = \alpha \beta$, then $k=\alpha$,
so we ignore the last two terms. If $(\alpha-1)\beta \le \gamma < \alpha \beta$, then $k=\alpha-1$, so we ignore the last term.
The ranges of $\Lambda_\xi (\bar u_i)$ for $1 \le i \le k+1$ are orthogonal. The ranges of
$\Lambda_\xi (\bar u_i)$ for $k+2 \le i \le \alpha$ are orthogonal and their join is a proper subspace of
${\rm ran} \Lambda_\xi (\bar u_1)$.
Hence, we conclude that the set $\{ \Lambda_\xi (\bar u_i) : 1 \le i \le \alpha \}$ is linearly independent,
and so $\alpha = \alpha_\xi$.

Since ${\rm supp} \Lambda_\xi (\bar u_i) = {\rm span} \{ \bar v_j : 1 \le j \le \beta \}$ for $1 \le i \le k$ and
${\rm supp} \Lambda_\xi (\bar u_i)$ is a subspace of ${\rm span} \{ \bar v_j  : 1 \le j \le \beta \}$ for $k+1 \le i \le \alpha$,
we have $\beta = \beta_\xi$.
Finally, the join of the ranges of $\Lambda_\xi (\bar u_i)$ for $1 \le i \le \alpha$ is ${\rm span} \{ w_k : 1 \le k \le \gamma \}$,
and so $\gamma = \gamma_\xi$.
\end{proof}

\begin{corollary}\label{equal}
Let $\xi\in\HA\ot\HB\ot\HC$ with $\SR(\xi)=(\alpha,\beta,\gamma)$. If one of $\alpha, \beta, \gamma$ is $1$, then the other two are equal.
\end{corollary}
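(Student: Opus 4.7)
The plan is to observe that Corollary \ref{equal} is essentially just a reading of the three defining inequalities of $\Sigma_{a,b,c}$ appearing in Proposition \ref{restriction}. Indeed, the $\Longrightarrow$ direction of Proposition \ref{restriction} established that any Schmidt rank triplet $(\alpha,\beta,\gamma)=\SR(\xi)$ satisfies $\gamma\le\alpha\beta$, by writing the support of $\Lambda_\xi$ as $\spa\{u_1,\dots,u_\alpha\}$ and applying the dimension estimate
\[
\gamma=\dim\bigvee_{i=1}^\alpha \image\Lambda_\xi(u_i)\le\sum_{i=1}^\alpha\dim\image\Lambda_\xi(u_i)\le\alpha\beta.
\]
Combining this with Corollary \ref{permutation}, which says that Schmidt ranks transform equivariantly under permutations of the three tensor factors, the same inequality yields also $\alpha\le\beta\gamma$ and $\beta\le\gamma\alpha$, by permuting the roles of $A,B,C$.

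With these three inequalities in hand, the conclusion is immediate by a three-line case analysis: if $\alpha=1$, then $\gamma\le\alpha\beta=\beta$ and $\beta\le\gamma\alpha=\gamma$, hence $\beta=\gamma$. If $\beta=1$, then $\gamma\le\alpha\beta=\alpha$ and $\alpha\le\beta\gamma=\gamma$, hence $\alpha=\gamma$. If $\gamma=1$, then $\alpha\le\beta\gamma=\beta$ and $\beta\le\gamma\alpha=\alpha$, hence $\alpha=\beta$.

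There is no real obstacle here; the whole argument is a routine deduction from the inequalities defining $\Sigma_{a,b,c}$, so the proof will be only a few lines invoking Proposition \ref{restriction} and Corollary \ref{permutation}. The only point to flag is that we must cite Proposition \ref{restriction} in the direction already proved (the necessity of the inequalities), not the existence statement, so that there is no risk of a circular reference.
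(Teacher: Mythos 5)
Your proposal is correct and matches the paper's intent: the paper states Corollary \ref{equal} without proof, as an immediate consequence of the necessity direction of Proposition \ref{restriction} (every Schmidt rank triple lies in $\Sigma_{a,b,c}$, hence satisfies $\alpha\le\beta\gamma$, $\beta\le\gamma\alpha$, $\gamma\le\alpha\beta$), and your case analysis is exactly the intended deduction. Your remark about citing only the already-proved necessity direction, so as to avoid circularity, is also the right reading.
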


We recall \cite{eom-kye} that the dual cone $\mathbb V_s$ of the convex cone of all $s$-positive linear maps may be described in terms of
Schmidt ranks. More precisely, $\mathbb V_s$ consists of all bi-partite unnormalized states $\varrho$ which can be expressed
by $\varrho=\sum_i |\xi_i\ran\lan \xi_i|$, where Schmidt rank of $\xi_i$ is less than or equal to $s$.
A bi-partite state $\varrho$ is separable if and only if it belongs to $\mathbb V_1$ by definition.
If $\varrho$ belongs to $\mathbb V_s$ but does not belong to $\mathbb V_{s-1}$ then we say that $\varrho$
has Schmidt number $s$.
This motivates the following:

\begin{definition}
\emph{For a tri-partite unnormalized state $\varrho\in M_A\ot M_B\ot M_C$,
we write $\SN(\varrho) \le (p, q, r)$ if there exist
$\xi_1, \cdots, \xi_n \in \HA \otimes {\mathcal H}_B \otimes {\mathcal H}_C$ such that
$\varrho = \sum_{i=1}^n |\xi_i\ran\lan \xi_i|$ and
$\SR(\xi_i)\le (p,q,r)$ for each $i=1,2,\dots,n$.
We denote by $\mathbb S_{p,q,r}$ the set of all tri-partite unnormalized states $\varrho$ in $M_A\ot M_B\ot M_C$ with the property
$\SN(\varrho) \le (p, q, r)$.}
\end{definition}

The set $\mathbb S_{p,q,r}$ is, by definition, the cone generated by the set $\mathcal E$ of all rank one projections onto unit vectors $\xi$ with
$\SR(\xi)\le (p,q,r)$. This is equivalent to satisfying the conditions
$$
\SR(\xi)\le (p,b,c),\quad
\SR(\xi)\le (a,q,c),\quad
\SR(\xi)\le (a,b,r),
$$
simultaneously. The first condition tells us that $\rk \xi\le p$ with respect to the $A$-$BC$ bi-partition of
$\xi\in \HA\ot (\HB\ot \HC)$, and so  the set of all unit vectors $\xi$ with $\SR(\xi)\le (p,b,c)$ is compact, and same for
the other two conditions by Corollary \ref{permutation}. Therefore,
we see that the set of all unit vectors $\xi$ with $\SR(\xi)\le (p,q,r)$ is compact. Since the map
$| \xi \rangle \mapsto |\xi \rangle \langle \xi|$ is continuous,
the set $\mathcal E$ is also compact. Hence, we conclude that the set of states in $\mathbb S_{p,q,r}$ is also compact, by
Caratheodory's theorem \cite[Theorem 17.2]{R}.

Theorem \ref{SR} tells us that
$\SN(\varrho)\le (1,1,1)$ if and only if $\varrho$ is the convex sum of pure product states,
which is noting but the definition of (genuine) separability.
For a state $\varrho\in M_A\ot M_B\ot M_C$, it is also clear that
$\SN(\varrho)\le (1,b,c)$ if and only if it is $A$-$BC$ separable, that is, it is
separable as a bi-partite state in $M_A\ot (M_B\ot M_C)$.
We also see that $\SN(\varrho)\le (a,1,c)$ if and only if it is $B$-$CA$ separable,
and $\SN(\varrho)\le (a,b,1)$ if and only if it is $C$-$AB$ separable.

The property $\SN(\xi)\le (p,q,r)$ can be described in
terms of positivity in the maximal tensor product of super maximal
operator systems.

\begin{theorem}\label{SN}
An unnormalized state $\rho \in M_A \otimes M_B \otimes M_C$ belongs to $\mathbb S_{p,q,r}$ if and only if
we have
$$
\varrho \in [\OMAX^p(M_A)  \otimes_{\max} \OMAX^q(M_B) \otimes_{\max} \OMAX^r (M_C)]^+.
$$
\end{theorem}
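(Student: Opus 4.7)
The plan is to match the generating rays of the two cones. By definition $\mathbb S_{p,q,r}$ is generated by the rank-one projections $|\xi\rangle\langle\xi|$ with $\SR(\xi) \le (p,q,r)$. Unwinding (\ref{max}) (applied twice via associativity of $\otimes_{\max}$) together with (\ref{max-k}) at each of the three slots shows that the three-fold max tensor product cone at level $1$ is generated, modulo an Archimedean correction, by atoms $\alpha(s \otimes t \otimes u)\alpha^*$ with $s \in M_p(M_A)^+$, $t \in M_q(M_B)^+$, $u \in M_r(M_C)^+$ and $\alpha \in M_{1, pqr}$; here I use that $M_p(\OMAX^p(M_A))^+ = M_p(M_A)^+$ and analogously for the other two slots. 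The proof will show that these two families of generators produce the same cone.

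For the forward inclusion, given $\xi$ with $\SR(\xi) \le (p,q,r)$, I use Theorem \ref{SR}(ii) to write $\xi = \sum_{j,k,\ell} c_{j,k,\ell}\, u_j \otimes v_k \otimes w_\ell$ with $j \le p$, $k \le q$, $\ell \le r$. Setting $\tilde u = \sum_j |j\rangle \otimes u_j \in \mathbb C^p \otimes \HA$ and $X = |\tilde u\rangle\langle\tilde u| \in M_p(M_A)^+$, and defining $Y \in M_q(M_B)^+$ and $Z \in M_r(M_C)^+$ analogously, while taking $\alpha \in M_{1, pqr}$ to be the row vector of coefficients $c_{j,k,\ell}$, a direct expansion gives $|\xi\rangle\langle\xi| = \alpha(X \otimes Y \otimes Z)\alpha^*$. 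This exhibits $|\xi\rangle\langle\xi|$ as a level-$1$ positive element of the three-fold max tensor product; conic sums handle a general $\varrho \in \mathbb S_{p,q,r}$.

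For the reverse inclusion, given $\varrho$ in the three-fold max-tensor-product positive cone, iterated application of the Archimedeanizations in (\ref{max}) and (\ref{max-k}) yields, for each $\epsilon > 0$, an identity
$$
\varrho + \epsilon\, 1_{M_A}\otimes 1_{M_B}\otimes 1_{M_C} + E \;=\; \sum_\nu \alpha_\nu (s_\nu \otimes t_\nu \otimes u_\nu)\alpha_\nu^*
$$
with $s_\nu \in M_p(M_A)^+$, $t_\nu \in M_q(M_B)^+$, $u_\nu \in M_r(M_C)^+$, $\alpha_\nu \in M_{1, pqr}$, and a positive correction $E \ge 0$ built from the inner tolerances and of norm $O(\epsilon)$. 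Diagonalising $s_\nu = \sum_i |e^\nu_i\rangle\langle e^\nu_i|$ with $e^\nu_i = \sum_{j'=1}^p |j'\rangle \otimes u^{\nu,i}_{j'} \in \mathbb C^p \otimes \HA$, and similarly $t_\nu, u_\nu$, each atom unfolds as $\sum_{i,j,k}|\eta^\nu_{i,j,k}\rangle\langle\eta^\nu_{i,j,k}|$ where
$$
\eta^\nu_{i,j,k} = \sum_{j'=1}^p\sum_{k'=1}^q\sum_{\ell'=1}^r (\alpha_\nu)_{(j',k',\ell')}\, u^{\nu,i}_{j'}\otimes v^{\nu,j}_{k'}\otimes w^{\nu,k}_{\ell'},
$$
and Theorem \ref{SR}(ii) gives $\SR(\eta^\nu_{i,j,k}) \le (p,q,r)$. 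Hence the right-hand side, and therefore $\varrho + \epsilon\,1 \otimes 1 \otimes 1 + E$, lies in $\mathbb S_{p,q,r}$. Letting $\epsilon \downarrow 0$ produces a sequence in $\mathbb S_{p,q,r}$ converging to $\varrho$, and the closedness of $\mathbb S_{p,q,r}$ noted just before the statement forces $\varrho \in \mathbb S_{p,q,r}$.

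The main obstacle is the bookkeeping for the correction $E$ in the reverse inclusion: each invocation of (\ref{max}) or (\ref{max-k}) contributes its own Archimedean tolerance and its own auxiliary matrix dimensions, and all leftover pieces must be gathered into a single positive element of controlled norm. The key simplification is that every intermediate correction is manifestly positive (a compression of a positive element built from the data at that stage), so all corrections accumulate on the same side as $\varrho$; by choosing the tolerances in the correct order, one can ensure the total correction is $O(\epsilon)$, and closedness of $\mathbb S_{p,q,r}$ completes the limit.
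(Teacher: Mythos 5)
Your proposal is correct and follows essentially the same route as the paper: unwind the Archimedeanizations of (\ref{max}) and (\ref{max-k}) to approximate $\varrho$ by compressions $\alpha(s\otimes t\otimes u)\alpha^*$ with $s\in M_p(M_A)^+$, $t\in M_q(M_B)^+$, $u\in M_r(M_C)^+$, decompose each atom into rank-one vectors of Schmidt rank at most $(p,q,r)$ via Theorem \ref{SR}, and close the argument using compactness/closedness of $\mathbb S_{p,q,r}$, with the converse being the reverse computation. The only cosmetic difference is that you write out the forward inclusion explicitly with a single rank-one $X=|\tilde u\rangle\langle\tilde u|$, whereas the paper simply declares that direction to be the reversal of its main argument.
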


\begin{proof}
($\Longleftarrow$)
Suppose that $\varrho$ belongs to the positive cone in the statement, and take arbitrary
$\varepsilon_i >0$ for $i=1,2,3,4,5$.
By (\ref{max}), we can write
$$
\varrho + \varepsilon_1 1 = \alpha (X \otimes X') \alpha^*
$$
for $X \in M_\ell(\OMAX^p(M_A))^+, X' \in M_{\ell^\prime}(\OMAX^q(M_B) \otimes_{\max} M_n(\OMAX^r(M_C))^+$ and $\alpha \in M_{1,\ell\ell^\prime}$,
and
$$
X'+\varepsilon_2 I_{\ell'} \otimes 1 = \alpha'(Y \otimes Z) \alpha'^*
$$
for $Y \in M_m(\OMAX^q(M_B))^+, Z \in M_n(\OMAX^r(M_C))^+$ and $\alpha' \in M_{\ell^\prime,mn}.$
Moreover, each $X, Y, Z$ can be written as
$$
\begin{aligned}
X+\varepsilon_3 I_{\ell} \otimes 1 &=\beta \diag (X_1,\dots, X_s) \beta^*,\\
Y+\varepsilon_4 I_m \otimes 1 &=\gamma \diag (Y_1,\dots, Y_t) \gamma^*,\\
Z+\varepsilon_5 I_n \otimes 1 &=\delta \diag (Z_1,\dots, Z_u) \delta^*,
\end{aligned}
$$
for $X_i \in M_p(M_A)^+, Y_j \in M_q(M_B)^+, Z_k \in M_r(M_C)^+$, and
scalar matrices $\beta \in M_{l,ps}, \gamma \in M_{m,qt}$ and $\delta \in M_{n,ru}$, by (\ref{max-k}).
Here, we denote by $\diag (A_1,\dots, A_n)$ for the $n\times n$ block diagonal matrix with
diagonal entries $A_1,\dots, A_n$.

Combining them and putting $\Theta :=\alpha (I_\ell \otimes \alpha')(\beta \otimes \gamma \otimes \delta) \in M_{1,pqrstu}$,
we see that the set $\Omega$ consisting
$$
\Theta \left(\diag(X_1,\dots,X_s)\ot \diag (Y_1,\dots, Y_t)\ot \diag (Z_1,\dots,Z_u)\right)\Theta^*\in M_A\ot M_B\ot M_C
$$
through
$\Theta \in M_{1,pqrstu},\ X_i \in M_p(M_A)^+,\ Y_j \in M_q(M_B)^+,\ Z_k \in M_r(M_C)^+$ and $ s,t,u =1,2,\dots$
is dense in
$$
[\OMAX^p(M_A)  \otimes_{\max} \OMAX^q(M_B) \otimes_{\max} \OMAX^r (M_C)]^+.
$$
Because the set of states in $\mathbb S_{p,q,r}$ is compact, it is enough to show that each $\rho \in \Omega$ satisfies
$\SN(\rho) \le (p,q,r)$ by normalization.

We write $\varrho \in \Omega$ as the summation
$$
\rho = \sum_{i=1}^s \sum_{j=1}^t \sum_{k=1}^u \Theta
\diag (0,\dots, 0,X_i \otimes Y_j \otimes Z_k ,0,\dots,0)\Theta^*
$$
and consider only $(i_0, j_0, k_0)$-th term.
We let
$U_i \in \mathbb C^p \otimes {\mathcal H}_A$ be the $i$-th column of
$X_{i_0}^{1 \slash 2}$, $V_j \in \mathbb C^q \otimes {\mathcal H}_B$
the $j$-th column of $Y_{j_0}^{1 \slash 2}$ and
$W_k \in \mathbb C^r \otimes {\mathcal H}_C$ the $k$-th column of
$Z_{k_0}^{1 \slash 2}$ ($1 \le i \le pa, 1 \le j \le qb, 1 \le k \le rc$).
Then, we have
$$
\begin{aligned}
&  \Theta
  \diag (0,\dots,0, X_{i_0} \otimes Y_{j_0} \otimes Z_{k_0},0,\dots,0)
  \Theta^* \\
= & \sum_{i=1}^{pa} \sum_{j=1}^{qb} \sum_{k=1}^{rc}  \Theta
  \diag  (0,\dots,0,U_i U_i^* \otimes V_j V_j^* \otimes W_k \ W_k^*,0,\dots,0)
  \Theta^* \\ =
& \sum_{i=1}^{pa} \sum_{j=1}^{qb} \sum_{k=1}^{rc}
\left[\Theta
  \begin{pmatrix} 0 \\ \vdots \\ U_i \otimes V_j \otimes W_k \\ \vdots \\ 0 \end{pmatrix})\right]
  \left[\Theta
  \begin{pmatrix} 0 \\ \vdots \\ U_i \otimes V_j \otimes W_k \\ \vdots \\ 0 \end{pmatrix})\right]^*.
\end{aligned}
$$
We write
$U_i = (u_1, \cdots, u_p)^t \in \mathbb C^p \otimes {\mathcal H}_A$,
$V_j = (v_1, \cdots, v_q)^t \in \mathbb C^q \otimes {\mathcal H}_B$
 and $W_k = (w_1, \cdots, w_r)^t \in \mathbb C^r \otimes {\mathcal H}_C$.
 Then, the term in the square bracket is the linear combination of
 $\{ u_i \otimes v_j \otimes w_k : 1 \le i \le p, 1 \le j \le q, 1 \le k \le r \}$.
 By Theorem \ref{SR}, we see that $\SN(\varrho)\le (p,q,r)$.
The converse is merely the reverse of the above argument.
\end{proof}

The proof of Theorem \ref{SN} actually shows that the relation
$$
\Omega =  [\OMAX^p(M_A)  \otimes_{\max} \OMAX^q(M_B) \otimes_{\max} \OMAX^r (M_C)]^+
$$
holds.

So far, we have focused on tri-partite states, in order to avoid the excessive notations.
Many parts in this section can be extended for multi-partite states.
By the isomorphisms
$$
{\cl H}_1\ot {\cl H}_2\ot \cdots {\cl H}_{n-1}\ot {\cl H}_n
\simeq
{\cl H}^*_1\ot {\cl H}^*_2\ot \cdots {\cl H}^*_{n-1}\ot {\cl H}_n
\simeq
{\cl L}({\cl H}_1, {\cl L}({\cl H}_2,\dots, {\cl L}({\cl H}_{n-1},{\cl H}_n))
$$
for finite dimensional Hilbert spaces $\cl H_i = \mathbb C^{d_i}$, we associate
each $\xi$ in $\bigotimes_{i=1}^n \mathcal H_i$ with the linear map
$\Lambda_\xi : \mathcal H_1 \to  {\cl L}({\cl H}_2,\dots, {\cl L}({\cl H}_{n-1},{\cl H}_n))$.
Now, we consider the following $n$ ($\ge 3$) numbers:
$$
\begin{aligned}
\alpha_\xi^1 &=\dim {\rm supp} \Lambda_\xi\\
\alpha_\xi^2 &=\dim \bigvee \{ {\rm supp}\ T_1 : T_1 \in {\rm ran} \Lambda_\xi \}\\
\alpha_\xi^3 &=\dim \bigvee \{ {\rm supp}\ T_2 : T_2 \in {\rm ran} T_1, T_1 \in {\rm ran} \Lambda_\xi \}\\
\vdots\\
\alpha_\xi^{n-1} &=\dim \bigvee \{ {\rm supp}\ T_{n-2} : T_{n-2} \in {\rm ran} T_{n-3}, \dots, T_2
        \in {\rm ran} T_1, T_1 \in {\rm ran} \Lambda_\xi \}\\
\alpha_\xi^n &=\dim \bigvee \{  {\rm ran}\ T_{n-2} : T_{n-2} \in {\rm ran} T_{n-3}, \dots, T_2
           \in {\rm ran} T_1, T_1 \in {\rm ran} \Lambda_\xi \}.
\end{aligned}
$$
When $n=2$, it is natural to define $\alpha_\xi^2 = \dim {\rm ran} \Lambda_\xi$ in the above context.
We define the Schmidt rank of $\xi$ by the $n$-tuple $(\alpha_\xi^1, \dots, \alpha_\xi^n)$ and denote it by $SR(\xi)$. We have
$$
\alpha_\xi^1 \le s_1, \dots, \alpha_\xi^n \le s_n
$$
if and only if
there exist vectors
$\{ u_{\iota_1}^1 \}_{\iota_1=1}^{s_1} \subset {\mathcal H_1}, \dots, \{ u_{\iota_n}^n \}_{\iota_n=1}^{s_n} \subset \mathcal H_n$ and scalars
$c_{\iota_1, \dots, \iota_n}$ such that
$$
\xi = \sum_{\iota_n=1}^{s_n} \dots \sum_{\iota_1=1}^{s_1} c_{\iota_1, \dots, \iota_n} u_{\iota_1}^1 \otimes \cdots \otimes u_{\iota_n}^n.
$$
In this case, we write $SR(\xi) \le (s_1, \dots, s_n)$.

For a multi-partite unnormalized state $\varrho\in \bigotimes_{i=1}^n \mathcal L (\mathcal H_i)$,
we write $\SN(\varrho) \le (s_1, \dots, s_n)$ if there exist
$\xi_1, \dots, \xi_m \in \bigotimes_{i=1}^n \mathcal H_i$ such that
$\varrho = \sum_{i=1}^m |\xi_i\ran\lan \xi_i|$ and
$\SR(\xi_i)\le (s_1, \dots, s_n)$ for each $i=1,2,\dots,m$.
We denote by $\mathbb S_{s_1, \dots,s_n}$ the set of all multi-partite
unnormalized states $\varrho$ in $\bigotimes_{i=1}^n \mathcal L (\mathcal H_i)$ with the property
$\SN(\varrho) \le (s_1, \dots, s_n)$. Then, we have
$$
\mathbb S_{s_1, \dots,s_n} = [\OMAX^{s_1}(\cl L (\cl H_1))  \otimes_{\max} \cdots \otimes_{\max} \OMAX^{s_n} (\cl L (\cl H_n))]^+.
$$

In the bi-partite case, the Schumidt number of $\varrho \in M_m \otimes M_n$ is less than or equal to
$k$ in the usual sense if and only if $\SN(\varrho) \le (k,k)$ if and only if
$\SN(\varrho) \le (m,k)$ because  $\SR(\xi) = (j,k)$ cannot occur for $j \ne k$
due to the first isomorphism theorem. By \cite[Theorem 4.6]{X}, we have $M_m = \OMAX^m (M_m)$.
Therefore, the Schmidt number of $\varrho$ is less than or
equal to $k$ in the usual sense if and only if
$$
\varrho \in [\OMAX^m (M_m) \otimes_{\max} \OMAX^k(M_n)]^+ = M_m(\OMAX^k(M_n))^+,
$$
which was proved in \cite[Theorem 5]{JKPP}.

\section{Duality}

For $\varrho\in M_A\ot M_B\ot M_C$ and a bi-linear map $\phi:M_A\times M_B\to M_C$, the bi-linear pairing
$\lan \varrho,\phi\ran$ is defined \cite{kye_3_qubit} by
\begin{equation}\label{def-dual}
\lan\varrho,\phi\ran = \lan \varrho,C_\phi\ran = \tr (C_\phi \varrho^\ttt).
\end{equation}
If $\varrho= u\ot v\ot w\in M_A\ot M_B\ot M_C$, then the pairing is given by
$$
\lan u\ot v\ot w,\phi\ran
= \lan \phi(u,v),w\ran = \tr(\phi(u,v)w^\ttt).
$$
We denote by $\mathbb P_{p,q,r}$ the convex cone consisting of all
$(p,q,r)$-positive bi-linear maps from $M_A\times M_B$ to $M_C$.

By (\ref{duality-max}) and (\ref{duality-super}), we have the complete order isomorphisms
$$
\begin{aligned}
& (\OMAX^p(M_A) \otimes_{\max} OMAX^q (M_B) \otimes_{\max} \OMAX^r(M_C)))^* \\
\simeq & \mathcal L (\OMAX^p(M_A) \otimes_{\max} \OMAX^q(M_B), \OMAX^r (M_C)^*) \\
\simeq & \mathcal L (\OMAX^p(M_A) \otimes_{\max} \OMAX^q(M_B), {\rm OMIN}^r (M_C)).
\end{aligned}
$$
In these isomorphisms, a functional $\varphi$ on $\OMAX^p (M_A) \otimes_{\max} \OMAX^q(M_B) \otimes_{\max} \OMAX^r(M_C)$
is positive if and only if its associated linear map
$$
\tilde{\phi} : \OMAX^p(M_A) \otimes_{\max} \OMAX^q(M_B) \to {\rm OMIN}^r(M_C)
$$
is completely positive. By Theorem \ref{linearization} and Theorem \ref{SN}, we have the following duality between
the convex cones $\mathbb S_{p,q,r}$ and $\mathbb P_{p,q,r}$. We give here a direct elementary proof.

\begin{theorem}\label{dual}
The convex cones $\mathbb S_{p,q,r}$ and $\mathbb P_{p,q,r}$ are dual to each other.
In other words, $\varrho\in\mathbb S_{p,q,r}$ if and only if
$\lan\varrho,\phi\ran\ge 0$ for each $\phi\in\mathbb P_{p,q,r}$.
\end{theorem}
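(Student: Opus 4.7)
My plan is to establish the duality via one computational identity and a Hahn--Banach separation argument. Suppose $\xi = \sum_{j=1}^p\sum_{k=1}^q\sum_{l=1}^r c_{j,k,l}\,u_j\ot v_k\ot w_l$ satisfies $\SR(\xi)\le(p,q,r)$, using the representation guaranteed by Theorem \ref{SR}. Put $|u\ran = \sum_j|j\ran\ot u_j\in\mathbb C^p\ot\HA$ and $X=|u\ran\lan u|\in M_p(M_A)^+$; analogously define $Y=|v\ran\lan v|\in M_q(M_B)^+$ from the $v_k$; let $\alpha\in M_{r,pq}$ have entries $\alpha_{l,(j,k)} = c_{j,k,l}$; and set $|\eta\ran = \sum_l|l\ran\ot\bar w_l\in\mathbb C^r\ot\HC$. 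Unfolding the definition (\ref{def-dual}) of the pairing using $(|w\ran\lan w|)^\ttt = |\bar w\ran\lan\bar w|$ and unpacking $\alpha\phi_{p,q}(X,Y)\alpha^*$, both sides of
$$
\lan|\xi\ran\lan\xi|,\phi\ran \;=\; \lan\eta|\,\alpha\,\phi_{p,q}(X,Y)\,\alpha^*\,|\eta\ran
$$
reduce to the common expression $\sum c_{j,k,l}\bar c_{j',k',l'}\lan\bar w_l|\phi(|u_j\ran\lan u_{j'}|,|v_k\ran\lan v_{k'}|)|\bar w_{l'}\ran$.

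With this identity in hand, the forward direction is immediate: for $\varrho=\sum_i|\xi_i\ran\lan\xi_i|\in\mathbb S_{p,q,r}$ and $\phi\in\mathbb P_{p,q,r}$, each term of $\lan\varrho,\phi\ran$ is the value of the positive matrix $\alpha\phi_{p,q}(X,Y)\alpha^*\in M_r(M_C)^+$ at the vector $|\eta\ran$, hence nonnegative. For the converse I argue by contradiction. Assume $\varrho\notin\mathbb S_{p,q,r}$; as observed in the paragraph preceding the theorem, $\mathbb S_{p,q,r}$ is closed in the real vector space of Hermitian tensors in $M_A\ot M_B\ot M_C$, so Hahn--Banach separation in finite dimensions produces a real linear functional $F$ on that space with $F\ge 0$ on $\mathbb S_{p,q,r}$ but $F(\varrho)<0$. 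The Choi--Jamio{\l}kowski correspondence $C\mapsto\phi_C$ from Section 3, together with the trace pairing, realizes $F$ as $F(\sigma)=\lan\sigma,\phi\ran$ for a Hermitian-preserving bi-linear map $\phi:M_A\times M_B\to M_C$.

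It remains to prove $\phi\in\mathbb P_{p,q,r}$, which will contradict $F(\varrho)<0$. Fix $x\in M_p(M_A)^+$, $y\in M_q(M_B)^+$, $\alpha\in M_{r,pq}$, and a test vector $|\eta\ran\in\mathbb C^r\ot\HC$. By spectral decomposition and bilinearity it suffices to handle the rank-one case $x=|u\ran\lan u|$, $y=|v\ran\lan v|$. Apply the key identity in reverse: with $u_j$, $v_k$, and $\eta_l$ denoting the components of $|u\ran$, $|v\ran$, $|\eta\ran$ in $\mathbb C^p$, $\mathbb C^q$, $\mathbb C^r$, respectively, the vector $\xi=\sum_{j,k,l}\alpha_{l,(j,k)}\,u_j\ot v_k\ot\bar\eta_l$ has $\SR(\xi)\le(p,q,r)$ by Theorem \ref{SR}, so $|\xi\ran\lan\xi|\in\mathbb S_{p,q,r}$, and the identity gives $\lan\eta|\alpha\phi_{p,q}(x,y)\alpha^*|\eta\ran = F(|\xi\ran\lan\xi|)\ge 0$; hence $\alpha\phi_{p,q}(x,y)\alpha^*\in M_r(M_C)^+$. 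The hardest step is really establishing the identity itself: it is pure bookkeeping, but the transpose in (\ref{def-dual}) forces careful tracking of complex conjugations $w_l\leftrightarrow\bar w_l$ so that both uses of the identity employ the same sign conventions; everything else follows routinely.
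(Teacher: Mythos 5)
Your proposal is correct and follows essentially the same route as the paper: a finite-dimensional separation argument reduces the theorem to showing that $\phi$ is $(p,q,r)$-positive iff $\lan\varrho,\phi\ran\ge 0$ on $\mathbb S_{p,q,r}$, and both directions rest on the same identity $\lan|\xi\ran\lan\xi|,\phi\ran=\lan\eta|\alpha\,\phi_{p,q}(X,Y)\,\alpha^*|\eta\ran$ with $\alpha_{l,(j,k)}=c_{j,k,l}$, applied to rank-one decompositions. You merely make explicit the conjugations coming from the transpose in (\ref{def-dual}) and the ``reverse the argument'' step that the paper leaves to the reader; no substantive difference.
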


\begin{proof}
By the separation theorem for a point outside of a closed convex set, it is sufficient to show that a bilinear map
$\phi : M_A \otimes M_B \to M_C$ is $(p,q,r)$-positive if and only if
$\langle \phi, \varrho \rangle \ge 0$ for all $\rho \in \mathbb S_{p,q,r}$.
Suppose that $\phi : M_A \otimes M_B \to M_C$ is $(p,q,r)$-positive and $\xi \in \HA \ot \HB \ot \HC$ with $\SR(\xi) \le (p,q,r)$.
By Theorem \ref{SR}, there exist vectors
$\{ u_i \}_{i=1}^p \subset {\mathcal H}_A$,
$\{ v_j \}_{j=1}^q \subset {\mathcal H}_B$,
$\{ w_k \}_{k=1}^r \subset {\mathcal H}_C$ and scalars
$c_{i,j,k}$ such that
$$
\xi = \sum_{i=1}^p \sum_{j=1}^q \sum_{k=1}^r c_{i,j,k} u_i \otimes v_j \otimes w_k.
$$
Then we have
$$
\begin{aligned}
 \langle \phi, \ \xi \xi^* \rangle
= & \left\langle \phi, \sum_{i,\ell=1}^p \sum_{j,m=1}^q
    \sum_{k,n=1}^r c_{i,j,k} \bar c_{\ell,m,n} u_i u_\ell^* \otimes v_j v_m^* \otimes w_k w_n^* \right\rangle \\
= & \sum_{i,\ell=1}^p \sum_{j,m=1}^q
    \sum_{k,n=1}^r  c_{i,j,k} \bar c_{\ell,m,n} \langle \tilde{\phi} (u_i u_\ell^* \otimes v_j v_m^*), w_k w_n^* \rangle_{M_C} \\
= & \left\langle  \left[\sum_{i,\ell=1}^p \sum_{j,m=1}^q c_{i,j,k} \bar c_{\ell,m,n}
    \tilde{\phi} (u_i u_\ell^* \otimes v_j v_m^*)\right]_{k,n}, [w_k w_n^* ]_{k,n} \right\rangle_{M_r(M_C)}.
\end{aligned}
$$
If we denote by $\alpha$ the  $r \times pq$ matrix whose entries are given by $\alpha_{k,(i,j)}=c_{i,j,k}$, then
the above quantity coincides with the following:
$$
 \left\langle \alpha \phi_{p,q} \left( \begin{pmatrix} u_1 \\ \vdots \\ u_p \end{pmatrix}
    \begin{pmatrix} u_1^*& \cdots & u_p^* \end{pmatrix},
    \begin{pmatrix} v_1 \\ \vdots \\ v_q \end{pmatrix}
    \begin{pmatrix} v_1^*& \cdots & v_q^* \end{pmatrix}\right) \alpha^*,
    \begin{pmatrix} w_1 \\ \vdots \\ w_r \end{pmatrix}
    \begin{pmatrix} w_1^*& \cdots & w_r^* \end{pmatrix} \right\rangle_{M_r(M_C)},
$$
which is nonnegative, because $\phi$ is $(p,q,r)$-positive.
Since every positive matrix is the sum of rank one positive matrices,
we get the converse by reversing of the above argument.
\end{proof}

Recall that a Hermitian matrix $W\in M_A\ot M_B$ is said to be $s$-block positive
if $\langle W, \varrho \rangle \ge 0$ for each $\varrho$ whose Schmidt number is less than or equal to $s$.
In this context, it is reasonable to
say that
a Hermitian matrix $W \in M_A \otimes M_B \otimes M_C$ is $(p,q,r)$-block positive if
$\langle W, \varrho \rangle \ge 0$ for all $\varrho \in \mathbb S_{p,q,r}$.
Then Theorem \ref{dual} and (\ref{duality-super}) together with \cite[Proposition 1.16]{FP} tells us the following.

\begin{corollary}\label{ChoiBlock}
Suppose that $\phi : M_A \times M_B \to M_C$ is a bi-linear map. Then the following are equivalent:
\begin{enumerate}
\item[(i)] $\phi$ is $(p,q,r)$-positive;
\item[(ii)] the Choi matrix $C_\phi \in M_A \otimes M_B \otimes M_C$ is $(p,q,r)$-block positive;
\item[(iii)] $C_\phi$ belongs to $(\OMIN^p(M_A) \otimes_{\min} \OMIN^q (M_B) \otimes_{\min} \OMIN^r(M_C))^+$.
\end{enumerate}
\end{corollary}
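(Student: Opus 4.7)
The proof plan is to chain together three ingredients: Theorem~\ref{dual} (which characterizes $(p,q,r)$-positivity via testing against $\mathbb{S}_{p,q,r}$), Theorem~\ref{SN} (which identifies $\mathbb{S}_{p,q,r}$ with a max-tensor positive cone), and the duality between max and min tensor products on dual operator systems (from \cite[Proposition 1.16]{FP}) combined with (\ref{duality-super}).

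For the equivalence (i)~$\Leftrightarrow$~(ii), I would invoke Theorem~\ref{dual} directly: $\phi\in\mathbb{P}_{p,q,r}$ iff $\lan\varrho,\phi\ran\ge 0$ for every $\varrho\in\mathbb{S}_{p,q,r}$. Since by definition (\ref{def-dual}) we have $\lan\varrho,\phi\ran=\lan\varrho,C_\phi\ran$, this is precisely the statement that $C_\phi$ is $(p,q,r)$-block positive. This step is essentially a translation between two equivalent formulations.

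For (ii)~$\Leftrightarrow$~(iii), the plan is to rewrite the block positivity condition as membership in a dual cone. By Theorem~\ref{SN}, the set $\mathbb{S}_{p,q,r}$ equals $[\OMAX^p(M_A)\otimes_{\max}\OMAX^q(M_B)\otimes_{\max}\OMAX^r(M_C)]^+$. Therefore $C_\phi$ is $(p,q,r)$-block positive iff, viewed through the self-dual pairing of matrix algebras, it defines a positive functional on this max-tensor product. Combining the duality (\ref{duality-super}), which identifies $\OMAX^k(M_n)^*\simeq\OMIN^k(M_n)$, with \cite[Proposition 1.16]{FP}, which gives the complete order isomorphism
$$
(\mathcal S\otimes_{\max}\mathcal T)^*\simeq \mathcal S^*\otimes_{\min}\mathcal T^*
$$
for finite-dimensional operator systems, one iterates to obtain
$$
\bigl(\OMAX^p(M_A)\otimes_{\max}\OMAX^q(M_B)\otimes_{\max}\OMAX^r(M_C)\bigr)^*
\simeq \OMIN^p(M_A)\otimes_{\min}\OMIN^q(M_B)\otimes_{\min}\OMIN^r(M_C).
$$
Under this identification, positive functionals correspond exactly to elements of the positive cone of the right-hand side, which gives (iii).

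The only delicate point is verifying that the pairing $\lan\cdot,\cdot\ran$ used to define block positivity really is the pairing implementing the operator system duality (\ref{duality-super}); this is handled by the unital complete order isomorphism $\gamma$ recalled in Section~2. Once that bookkeeping is in place, the corollary is just a concatenation of Theorem~\ref{dual}, Theorem~\ref{SN}, (\ref{duality-super}), and the max/min duality from \cite[Proposition~1.16]{FP}. The main potential obstacle is purely notational: making sure the three-fold tensor product duality follows from the two-fold case by associativity, which is immediate in the finite-dimensional setting where all tensor products in question are functorial.
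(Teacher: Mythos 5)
Your proposal is correct and follows essentially the same route as the paper, which derives the corollary precisely by combining Theorem~\ref{dual} (giving (i)$\Leftrightarrow$(ii) via the pairing $\lan\varrho,\phi\ran=\lan\varrho,C_\phi\ran$) with Theorem~\ref{SN}, the duality (\ref{duality-super}), and \cite[Proposition~1.16]{FP} to identify the dual cone of the max-tensor product with the positive cone of $\OMIN^p(M_A)\otimes_{\min}\OMIN^q(M_B)\otimes_{\min}\OMIN^r(M_C)$. Your write-up in fact spells out the bookkeeping (the role of $\gamma$ and the iteration of the two-fold duality) more explicitly than the paper, which states the corollary as an immediate consequence of those results.
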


A tri-partite state $\varrho$ is called bi-separable if it is in the convex hull of
all $A$-$BC$, $B$-$CA$ and $C$-$AB$ separable states. This happens if and only if
$\varrho$ belongs to the convex hull of the convex sets
$\mathbb S_{1,b,c}$, $\mathbb S_{a,1,c}$ and $\mathbb S_{a,b,1}$. The dual of this convex hull is
the intersection of dual cones. Therefore, we have the the following. Recall that $\varrho$ is genuinely entangled
if it is not bi-separable.

\begin{corollary}\label{gen-wit}
A state $\varrho\in M_A\ot M_B\ot M_C$ is genuinely entangled if and only if there exists
$\phi\in\mathbb P_{1,b,c}\cap\mathbb P_{a,1,c}\cap\mathbb P_{a,b,1}$
satisfying $\lan \varrho,\phi\ran <0$.
\end{corollary}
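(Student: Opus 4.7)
The plan is to derive this from Theorem \ref{dual} by standard cone duality, once we phrase ``genuine entanglement'' as membership in a suitable convex cone. By the comment preceding the corollary, a state $\varrho$ is bi-separable if and only if it lies in the convex hull $\mathbb B := \conv(\mathbb S_{1,b,c}\cup\mathbb S_{a,1,c}\cup\mathbb S_{a,b,1})$, so being genuinely entangled is exactly the condition $\varrho\notin\mathbb B$. Thus the task reduces to computing the dual cone $\mathbb B^*$ and applying the separation theorem.

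First I would note the general fact that for convex cones $K_1,\ldots,K_m$ in a finite dimensional inner product space, one has $(K_1+\cdots+K_m)^*=K_1^*\cap\cdots\cap K_m^*$, so
$$
\mathbb B^* = \mathbb S_{1,b,c}^*\cap\mathbb S_{a,1,c}^*\cap\mathbb S_{a,b,1}^*.
$$
By Theorem \ref{dual}, each factor on the right equals the corresponding cone of $(p,q,r)$-positive bi-linear maps under the pairing (\ref{def-dual}); that is, $\mathbb S_{1,b,c}^*=\mathbb P_{1,b,c}$, and similarly for the other two. Hence
$$
\mathbb B^*=\mathbb P_{1,b,c}\cap\mathbb P_{a,1,c}\cap\mathbb P_{a,b,1}.
$$

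Next I would apply the bipolar theorem: for a closed convex cone $\mathbb B$, a point $\varrho$ lies in $\mathbb B$ if and only if $\lan\varrho,\phi\ran\ge 0$ for every $\phi\in\mathbb B^*$. Equivalently, $\varrho\notin\mathbb B$ if and only if some $\phi\in\mathbb B^*$ witnesses this by $\lan\varrho,\phi\ran<0$. Combining with the identification of $\mathbb B^*$ above yields the statement.

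The only technical point that needs care is verifying that $\mathbb B$ is closed, so that the bipolar theorem applies. This is immediate in our finite dimensional setting: the discussion after the definition of $\mathbb S_{p,q,r}$ established (via Carath\'eodory's theorem) that the set of normalized states in each $\mathbb S_{p,q,r}$ is compact, so each of the three cones $\mathbb S_{1,b,c},\mathbb S_{a,1,c},\mathbb S_{a,b,1}$ is a closed convex cone with compact state-section. Their convex hull $\mathbb B$ is then again a closed cone (being the image, under convex combination, of a compact set of normalized bi-separable witnesses). With closedness in hand, the bipolar argument goes through and completes the proof.
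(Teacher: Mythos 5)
Your proposal is correct and follows essentially the same route as the paper: the paper's argument is precisely that bi-separability means membership in the convex hull of $\mathbb S_{1,b,c}$, $\mathbb S_{a,1,c}$ and $\mathbb S_{a,b,1}$, whose dual cone is the intersection of the duals, identified via Theorem \ref{dual} with $\mathbb P_{1,b,c}\cap\mathbb P_{a,1,c}\cap\mathbb P_{a,b,1}$. Your extra remark verifying closedness of the convex hull (needed for the bipolar step) is a point the paper leaves implicit, and is handled correctly.
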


Therefore, an $abc\times abc$ self-adjoint matrix $W$ is a witness for genuine entanglement if and only if
it is the Choi matrix of a bi-linear map which belongs to $\mathbb P_{1,b,c}\cap\mathbb P_{a,1,c}\cap\mathbb P_{a,b,1}$.
In the next section, we construct such examples for three qubit case of $a=b=c=2$.

On the other hand, a tri-partite state $\varrho$
is said to be fully bi-separable if it is bi-separable with respect to any possible bi-partitions.
This is the case if and only if it is in the intersection of $\mathbb S_{1,b,c}$, $\mathbb S_{a,1,c}$ and $\mathbb S_{a,b,1}$.
See \cite{kye_3_qubit,brunner} for examples of three qubit fully bi-separable states which are not genuinely separable.

Recall that the notion of $s$-positivity of a linear map is invariant under taking the dual map.
We proceed to consider what happens for $(p,q,r)$-positivity of bi-linear maps.
For a permutation $\sigma$ in the set $\{A,B,C\}$, we define the bi-linear map
$$
\phi^\sigma: M_{\sigma A}\times M_{\sigma B}\to M_{\sigma C}
$$
by
$$
\lan \phi^\sigma (x_{\sigma A}, x_{\sigma B}), x_{\sigma C}\ran
=\lan \phi(x_A,x_B),x_C\ran,\qquad
x_A\in M_A,\ x_B\in M_B,\ x_C\in M_C.
$$
The isomorphism from $M_A\ot M_B\ot M_C$ onto $M_{\sigma A}\ot M_{\sigma B}\ot M_{\sigma C}$
given by the flip map under $\sigma$
will be denote by $\varrho\mapsto \varrho^\sigma$.
By Corollary \ref{permutation}, we have
$$
\SN (\varrho) \le (p,q,r)\ \Longleftrightarrow\
\SN (\varrho^\sigma) \le (p,q,r)^\sigma.$$
 Since
$\langle \phi, \varrho \rangle = \langle \phi^\sigma, \varrho^\sigma \rangle$,
the following is immediate by Theorem \ref{dual}.

\begin{corollary}\label{dual-per}
Suppose that $\varrho\in M_A\ot M_B\ot M_C$ and $\phi:M_A\times M_B\to M_C$ is a bi-linear map.
Then, $\phi$ is $(p,q,r)$-positive if and only if
$\phi^\sigma$ is $(p,q,r)^\sigma$-positive.
\end{corollary}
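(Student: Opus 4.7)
The plan is to reduce everything to the duality in Theorem \ref{dual} and then transport each side of the biconditional through the flip map $\sigma$, using Corollary \ref{permutation} on the state side and the identity of pairings on the bi-linear map side.

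First I would unpack the two ingredients flagged in the paragraph preceding the corollary. On the state side, Corollary \ref{permutation} gives $\SR(\xi) \le (p,q,r)$ if and only if $\SR(\xi^\sigma) \le (p,q,r)^\sigma$ for vectors; since $\SN$ is defined as the cone generated by pure states $|\xi\ran\lan\xi|$ and conjugation by the flip sends $|\xi\ran\lan\xi|$ to $|\xi^\sigma\ran\lan\xi^\sigma|$, this promotes to the equivalence $\SN(\varrho) \le (p,q,r) \iff \SN(\varrho^\sigma) \le (p,q,r)^\sigma$, so the flip is a bijection $\mathbb S_{p,q,r} \leftrightarrow \mathbb S_{(p,q,r)^\sigma}$. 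On the bi-linear map side, one checks the identity $\langle \phi, \varrho \rangle = \langle \phi^\sigma, \varrho^\sigma \rangle$ by reducing to elementary tensors $\varrho = x_A \otimes x_B \otimes x_C$, where both sides equal $\tr(\phi(x_A,x_B)x_C^\ttt)$ by the very definition of $\phi^\sigma$, and then extending by bi-linearity to all of $M_A \ot M_B \ot M_C$.

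Combining the two equivalences gives the corollary: by Theorem \ref{dual}, $\phi \in \mathbb P_{p,q,r}$ iff $\langle \phi, \varrho \rangle \ge 0$ for every $\varrho \in \mathbb S_{p,q,r}$; rewriting the pairing via the identity above, this says $\langle \phi^\sigma, \varrho^\sigma \rangle \ge 0$ for every such $\varrho$; and as $\varrho$ ranges over $\mathbb S_{p,q,r}$, its flip $\varrho^\sigma$ ranges precisely over $\mathbb S_{(p,q,r)^\sigma}$. Applying Theorem \ref{dual} once more in the opposite direction, this is exactly the statement that $\phi^\sigma \in \mathbb P_{(p,q,r)^\sigma}$.

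The main obstacle is essentially bookkeeping rather than mathematics: one must carefully verify that the pairing identity $\langle \phi, \varrho\rangle = \langle \phi^\sigma, \varrho^\sigma\rangle$ holds with the sign and transpose conventions used in (\ref{def-dual}), since $\lan\varrho,\phi\ran = \tr(C_\phi \varrho^\ttt)$ involves a transpose and the flip $\sigma$ permutes the three tensor legs. Once this routine compatibility between the flip on states and the flip on bi-linear maps is confirmed on product tensors and extended by bi-linearity, the rest of the argument is a one-line invocation of duality.
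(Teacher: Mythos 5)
Your proposal is correct and follows exactly the paper's own route: the authors likewise combine Corollary \ref{permutation} (transported from vectors to states), the pairing identity $\langle \phi, \varrho \rangle = \langle \phi^\sigma, \varrho^\sigma \rangle$, and Theorem \ref{dual}, declaring the conclusion ``immediate.'' You have merely filled in the bookkeeping (checking the pairing identity on elementary tensors and extending by bi-linearity) that the paper leaves implicit.
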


By Proposition \ref{restriction} and Corollary \ref{equal},
we also have the following:
\begin{itemize}
\item
$\SN(\varrho) \le (p,q,r)$ and $r \ge pq$, then $\SN(\varrho) \le (p,q,pq)$.
\item
If $\SN(\varrho) \le (1,q,r)$, then $\SN(\varrho) \le (1,q \wedge r, q \wedge r)$.
\end{itemize}
This reflects Proposition \ref{redundant1} and Proposition \ref{redundant2}, respectively, by duality.
Applying Corollary \ref{dual-per}, it is clear that the role of $(p,q,r)$  may be permuted together with $(A,B,C)$
in Propositions \ref{redundant1} and \ref{redundant2}
if domains and ranges are matrix algebras.

\begin{corollary}
Suppose that $\phi:M_A\times M_B\to M_C$ is a bi-linear map. For $q,r \in \mathbb N$, the following are equivalent:
\begin{enumerate}
\item[(i)] $\phi$ is $(p,q,r)$-positive for all $p \in \mathbb N$;
\item[(ii)] $\phi$ is $(p,q,r)$-positive for some $p \ge qr$;
\item[(iii)] $\phi$ is $(qr,q,r)$-positive.
\end{enumerate}
Moreover, similar equivalence relations hold for fixed $p,r \in \mathbb N$.
\end{corollary}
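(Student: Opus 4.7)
The plan is to reduce this corollary to Proposition \ref{redundant1} by exploiting the symmetry afforded by Corollary \ref{dual-per}. Proposition \ref{redundant1} already establishes the desired pattern of equivalences when the \emph{third} coordinate varies (with the first two fixed); the corollary only differs in which coordinate is allowed to vary. Since $\phi$ is a bi-linear map between matrix algebras, we may permute the algebras $M_A, M_B, M_C$ via the construction $\phi\mapsto\phi^\sigma$, and the transfer of $(p,q,r)$-positivity under $\sigma$ is governed by Corollary \ref{dual-per}.

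Concretely, for the first equivalence I would let $\sigma$ be the transposition of $A$ and $C$ in $\{A,B,C\}$. Then $\phi^\sigma:M_C\times M_B\to M_A$, and according to our conventions $(p,q,r)^\sigma=(r,q,p)$. By Corollary \ref{dual-per}, $\phi$ is $(p,q,r)$-positive if and only if $\phi^\sigma$ is $(r,q,p)$-positive. Apply Proposition \ref{redundant1} to $\phi^\sigma$, viewing its first two slots as carrying the fixed dimensions $r$ and $q$ and the third slot as varying over natural numbers: we learn that $\phi^\sigma$ is $(r,q,p)$-positive for every $p$, if and only if it is $(r,q,p)$-positive for some $p\ge qr$, if and only if it is $(r,q,qr)$-positive. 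Translating each of these conditions back to $\phi$ via Corollary \ref{dual-per} gives exactly (i) $\Leftrightarrow$ (ii) $\Leftrightarrow$ (iii).

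For the ``moreover'' clause (fixed $p,r$, varying $q$), I would repeat the same argument with the transposition $\tau$ that swaps $B$ and $C$, so that $\phi^\tau:M_A\times M_C\to M_B$ and $(p,q,r)^\tau=(p,r,q)$. Proposition \ref{redundant1} applied to $\phi^\tau$, with its first two slots fixed at $p$ and $r$, yields the analogous three-way equivalence for $q\to\infty$, $q\ge pr$, and $q=pr$, which by Corollary \ref{dual-per} transports back to the desired equivalence for $\phi$.

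There is no substantive obstacle here: once one notices that Proposition \ref{redundant1} is ``position-specific'' only because of how the statement was written, Corollary \ref{dual-per} immediately recovers the corresponding statement for each of the other two positions. The only care needed is in the bookkeeping of which permutation acts on which coordinate and verifying that the inequality $p\ge qr$ (respectively $q\ge pr$) matches the inequality $r\ge pq$ of Proposition \ref{redundant1} after relabeling, which is automatic because multiplication of natural numbers is commutative.
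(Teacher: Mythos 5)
Your proof is correct and takes essentially the same route as the paper: immediately before stating this corollary, the paper notes that ``applying Corollary \ref{dual-per}, it is clear that the role of $(p,q,r)$ may be permuted together with $(A,B,C)$ in Propositions \ref{redundant1} and \ref{redundant2}'', which is exactly your reduction via the transposition swapping $A$ and $C$ (and $B$ and $C$ for the ``moreover'' clause). Your bookkeeping of $(p,q,r)^\sigma=(r,q,p)$ and the matching of the inequality $p\ge qr$ with $r\ge pq$ after relabeling is accurate.
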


\begin{corollary}
Suppose that $\phi:M_A\times M_B\to M_C$ is a bi-linear map. Then,
$\phi$ is $(p,q,1)$-positive if and only if $\phi$ is $(p \wedge q,p \wedge q, 1)$-positive.
\end{corollary}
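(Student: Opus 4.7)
The plan is to reduce this corollary to Proposition \ref{redundant2}(i) by invoking the permutation-duality of Corollary \ref{dual-per}. Proposition \ref{redundant2}(i) already handles the case where the first index is $1$, namely it shows that $(1,q,r)$-positivity coincides with $(1, q\wedge r, q\wedge r)$-positivity; the target statement is simply the image of that fact under the transposition that swaps the roles of the first and third arguments.

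Concretely, I would let $\sigma$ be the transposition in $\{A,B,C\}$ that swaps $A$ and $C$ and fixes $B$, so that $(p,q,1)^\sigma = (1,q,p)$ and $(p\wedge q, p\wedge q, 1)^\sigma = (1, p\wedge q, p\wedge q)$. By Corollary \ref{dual-per}, the bi-linear map $\phi: M_A\times M_B\to M_C$ is $(p,q,1)$-positive if and only if $\phi^\sigma: M_C\times M_B\to M_A$ is $(1,q,p)$-positive. Now Proposition \ref{redundant2}(i), applied to $\phi^\sigma$ with the operator systems $\mathcal S = M_C$, $\mathcal T = M_B$ and $\mathcal R = M_A$, gives that $\phi^\sigma$ is $(1,q,p)$-positive if and only if it is $(1, p\wedge q, p\wedge q)$-positive.

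Applying Corollary \ref{dual-per} once more in the reverse direction (using $\sigma^{-1}=\sigma$), this last property is equivalent to $\phi = (\phi^\sigma)^\sigma$ being $(p\wedge q, p\wedge q, 1)$-positive, which yields the conclusion. The three equivalences chain together to give the corollary.

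There is no real obstacle here beyond bookkeeping: one must only take care to identify the correct transposition of indices so that the triple $(p,q,1)$ lands in the shape $(1,\ast,\ast)$ required by Proposition \ref{redundant2}(i), and then permute back. This is precisely the remark already made in the paragraph preceding the corollary, that the roles of $(p,q,r)$ can be permuted together with $(A,B,C)$ when the domains and range are matrix algebras.
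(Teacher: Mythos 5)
Your argument is correct and is exactly the route the paper intends: the corollary is stated there without a written proof, justified only by the preceding remark that Corollary \ref{dual-per} lets one permute $(p,q,r)$ together with $(A,B,C)$ in Proposition \ref{redundant2}, and your transposition of $A$ and $C$ (sending $(p,q,1)$ to $(1,q,p)$ and back) carries this out precisely. Nothing is missing; the application of Proposition \ref{redundant2}(i) to $\phi^\sigma:M_C\times M_B\to M_A$ is legitimate since that proposition is proved for arbitrary operator systems.
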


The following tells us that
the inclusion relations between cones $\mathbb P_{p,q,r}$ are given by product order of triplets $(p,q,r)$,
and they are distinct for different triplets in $\Sigma_{a,b,c}$.

\begin{proposition}\label{order}
For $(p_1,q_1,r_1)$ and $(p_2,q_2,r_2)$ in $\Sigma_{a,b,c}$, the following are equivalent:
\begin{enumerate}
\item[(i)]
$\mathbb S_{p_1,q_1,r_1}\subset \mathbb S_{p_2,q_2,r_2}$;
\item[(ii)]
$\mathbb P_{p_1,q_1,r_1}\supset \mathbb P_{p_2,q_2,r_2}$;
\item[(iii)]
$p_1\le p_2$, $q_1\le q_2$ and $r_1\le r_2$.
\end{enumerate}
In particular, we have
$\mathbb S_{p_1,q_1,r_1}= \mathbb S_{p_2,q_2,r_2}$ if and only if
$\mathbb P_{p_1,q_1,r_1}= \mathbb P_{p_2,q_2,r_2}$ if and only if
$(p_1,q_1,r_1)=(p_2,q_2,r_2)$.
\end{proposition}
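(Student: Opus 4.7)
The plan is to prove the equivalences in the cycle (iii) $\Rightarrow$ (i) $\Rightarrow$ (ii) $\Rightarrow$ (iii), with the last step being the one requiring genuine work, and then to deduce the ``in particular'' claim by antisymmetry of the product order.

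For (iii) $\Rightarrow$ (i), I would appeal directly to Theorem \ref{SR}: if $\xi$ admits a representation $\xi=\sum_{i\le p_1,j\le q_1,k\le r_1}c_{i,j,k}\,u_i\otimes v_j\otimes w_k$, then the same expression witnesses $\SR(\xi)\le(p_2,q_2,r_2)$ whenever $p_1\le p_2$, $q_1\le q_2$, $r_1\le r_2$. Summing over pure states of Schmidt rank $\le(p_1,q_1,r_1)$ then yields $\mathbb S_{p_1,q_1,r_1}\subset\mathbb S_{p_2,q_2,r_2}$.

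The implication (i) $\Rightarrow$ (ii) is a formal consequence of the duality established in Theorem \ref{dual}: both $\mathbb S_{p,q,r}$ (closed by the compactness argument following the definition of $\mathbb S_{p,q,r}$) and $\mathbb P_{p,q,r}$ (closed as an intersection of half-spaces) are closed convex cones in dual finite-dimensional spaces, so inclusion between primal cones reverses on the dual side. The same remark gives (ii) $\Rightarrow$ (i).

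The main step is (ii) $\Rightarrow$ (iii), which I will prove in its contrapositive form via (i) $\Rightarrow$ (iii). Assuming $\mathbb S_{p_1,q_1,r_1}\subset \mathbb S_{p_2,q_2,r_2}$, I would invoke Proposition \ref{restriction}: since $(p_1,q_1,r_1)\in\Sigma_{a,b,c}$, there exists $\xi\in\HA\otimes\HB\otimes\HC$ with $\SR(\xi)=(p_1,q_1,r_1)$ \emph{exactly}. The pure state $\varrho=|\xi\rangle\langle\xi|$ lies in $\mathbb S_{p_1,q_1,r_1}$, hence by the assumed inclusion it lies in $\mathbb S_{p_2,q_2,r_2}$. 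Writing $\varrho=\sum_i|\xi_i\rangle\langle\xi_i|$ with $\SR(\xi_i)\le(p_2,q_2,r_2)$ and using the fact that $\varrho$ has rank one forces every $\xi_i$ to be a scalar multiple of $\xi$, so $\SR(\xi)\le(p_2,q_2,r_2)$. This yields exactly $p_1\le p_2$, $q_1\le q_2$, $r_1\le r_2$, completing the chain.

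The main obstacle is ensuring that a vector $\xi$ with \emph{all three} Schmidt numbers equal to the prescribed values $(p_1,q_1,r_1)$ actually exists inside $\Sigma_{a,b,c}$; this is precisely what Proposition \ref{restriction} delivers, and the constraints $\alpha\le\beta\gamma,\ \beta\le\gamma\alpha,\ \gamma\le\alpha\beta$ in the definition of $\Sigma_{a,b,c}$ are exactly what make such an extremal $\xi$ available. Finally, the ``in particular'' statement is immediate: equality $\mathbb S_{p_1,q_1,r_1}=\mathbb S_{p_2,q_2,r_2}$ gives both coordinate-wise inequalities by the equivalence just proved, and antisymmetry forces $(p_1,q_1,r_1)=(p_2,q_2,r_2)$; the analogous conclusion for $\mathbb P$ follows by duality.
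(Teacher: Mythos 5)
Your proposal is correct and follows essentially the same route as the paper: Theorem \ref{SR} for (iii) $\Rightarrow$ (i), the duality of Theorem \ref{dual} (with closedness of the cones) for (i) $\Leftrightarrow$ (ii), and the extremal vector from Proposition \ref{restriction} together with the rank-one decomposition argument for (i) $\Rightarrow$ (iii). No gaps.
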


\begin{proof}
The equivalence (i) $\Longleftrightarrow$ (ii) and the implication (iii) $\Longrightarrow$ (i) follow from Theorem \ref{dual}
and Theorem \ref{SR}, respectively.

For any  $(p_1,q_1,r_1)\in \Sigma_{a,b,c}$,
we have constructed in Proposition \ref{restriction} a vector
$\xi \in \cl H_A \otimes \cl H_B \otimes \cl H_C$ with $\SR (\xi)=(p_1, q_1, r_1)$, and so
$|\xi \rangle \langle \xi | \in \mathbb S_{p_1,q_1,r_1}$. If
$|\xi \rangle \langle \xi | = \sum_k |\xi_k \rangle \langle \xi_k |$
then each $|\xi_k\ran$ is a scalar multiplication of $|\xi\ran$.
Therefore, if $p_2<p_1$ then $|\xi \rangle \langle \xi|$ never belongs to
$S_{p_2,q_2,r_2}$, and so
$\mathbb S_{p_1,q_1,r_1}\subset \mathbb S_{p_2,q_2,r_2}$ does not hold. The same is true when $q_2<q_1$ or $r_2<e_1$.
This completes the proof of the implication (i) $\Longrightarrow$ (iii).
\end{proof}

In order to construct bi-linear maps, we need to identify them by various objects
like functionals, matrices and linear maps.
For this purpose, we consider the following algebraic isomorphisms:
$$
\xymatrix{\mathcal{BL}
(M_A, M_B; M_C) \ar[r]^{\simeq}
& \mathcal L (M_A \otimes M_B, M_C) \ar[r]^-{\simeq} \ar[d]^{\simeq}
& (M_{ABC})^*\ar[r]^{\simeq} \ar[d]^{\simeq}
& M_{ABC} \\
& \mathcal L (M_{AB}, M_C) &  \mathcal L (M_A, M_{BC}) & }
$$
We denote by
$$
\begin{gathered}
 A_\phi \in \mathcal L (M_A \otimes M_B, M_C),\qquad B_\phi \in (M_{ABC})^*, \qquad C_\phi \in M_{ABC},\\
 D_\phi \in \mathcal L (M_{AB}, M_C), \qquad E_\phi \in  \mathcal L (M_A, M_{BC})
 \end{gathered}
$$
elements associated with a bilinear map
$\phi : M_A \otimes M_B \to M_C$
by the above isomorphisms, respectively.
It is worth to note that $C_\phi$ is the Choi matrix of $\phi$.
The properties of $A_\phi$, $B_\phi$ and $C_\phi$ corresponding to $(p,q,r)$-positivity of $\phi$ are
characterized in Theorem \ref{linearization}, Theorem \ref{dual} and Corollary \ref{ChoiBlock}, respectively.

\begin{theorem}\label{var-iso}
Suppose that $\phi : M_A \times M_B \to M_C$ is a bi-linear map. The following are equivalent:
\begin{enumerate}
\item[(i)] $\phi$ is $(p,q,r)$-positive;
\item[(ii)] $(D_\phi)_r: M_r(M_{AB}) \to M_r(M_C)$ maps $\mathbb S_{r,p,q}$ into $M_r(M_C)^+$;
\item[(iii)] $(E_\phi)_p : M_p(M_A) \to M_p(M_{BC})$ maps positive matrices to $(p,q,r)$-block positive matrices.
\end{enumerate}
\end{theorem}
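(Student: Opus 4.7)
My plan is to handle the two equivalences (i) $\Leftrightarrow$ (ii) and (i) $\Leftrightarrow$ (iii) separately, using Theorem \ref{linearization} and Corollary \ref{ChoiBlock} together with direct translations through the isomorphisms $\phi \mapsto D_\phi$ and $\phi \mapsto E_\phi$.

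For (i) $\Leftrightarrow$ (ii), I would first apply Theorem \ref{linearization} to rewrite $(p,q,r)$-positivity of $\phi$ as $r$-positivity of $D_\phi = \tilde\phi : \OMAX^p(M_A) \otimes_{\max} \OMAX^q(M_B) \to M_C$, which is to say that $(D_\phi)_r$ sends the level-$r$ positive cone of this tensor product into $M_r(M_C)^+$. Nuclearity of $M_r$ as an operator system gives $M_r(V)^+ = [M_r\otimes_{\max} V]^+_1$, and combined with the identity $\OMAX^r(M_r)=M_r$ recorded near the end of Section 4, Theorem \ref{SN} identifies this cone with $\mathbb{S}_{r,p,q}$. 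A more elementary route avoids the operator-system machinery: Theorem \ref{SR} shows that every element of $\mathbb{S}_{r,p,q}$ is a sum of terms $(C_i\otimes I)(X_i\otimes Y_i)(C_i^*\otimes I)$ with $X_i\in M_p(M_A)^+$ and $Y_i\in M_q(M_B)^+$ rank-one and $C_i\in M_{r,pq}$; since $D_\phi=\tilde\phi$ acts on the last two factors, conjugation by $C\otimes I$ commutes with $(D_\phi)_r$, and $(D_\phi)_r((C\otimes I)(X\otimes Y)(C^*\otimes I)) = C\phi_{p,q}(X,Y)C^*$, whose positivity is exactly the definition of $(p,q,r)$-positivity of $\phi$.

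For (i) $\Leftrightarrow$ (iii), the key observation is that for $X=\sum_{i,j}e_{ij}^p\otimes x_{ij}\in M_p(M_A)^+$ the matrix $(E_\phi)_p(X)\in M_p\otimes M_B\otimes M_C$ is precisely the Choi matrix of the bi-linear map $\Phi_X:M_p\times M_B\to M_C$ defined by $\Phi_X(z,y)=\phi(\lambda_X(z),y)$, where $\lambda_X:M_p\to M_A$ is the completely positive map corresponding to $X$ under Choi-Jamio\l kowski (so $\lambda_X(e_{ij}^p)=x_{ij}$); a one-line index comparison confirms this. By Corollary \ref{ChoiBlock}, condition (iii) then says that $\Phi_X$ is $(p,q,r)$-positive for every $X\in M_p(M_A)^+$. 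For (i) $\Rightarrow$ (iii), the amplification identity $\Phi_{X,p,q}(Z,Y)=\phi_{p,q}((\lambda_X)_p(Z),Y)$ together with complete positivity of $\lambda_X$ (which carries $M_p(M_p)^+$ into $M_p(M_A)^+$) reduces $(p,q,r)$-positivity of $\Phi_X$ to that of $\phi$. For the converse, given $\tilde X\in M_p(M_A)^+$, $Y\in M_q(M_B)^+$ and $\alpha\in M_{r,pq}$, set $X=\tilde X$ and take the rank-one matrix $Z=|v\rangle\langle v|$ with $v=\sum_{i=1}^p e_i\otimes e_i\in\mathbb{C}^p\otimes\mathbb{C}^p$; a direct computation yields $(\lambda_X)_p(Z)=\tilde X$, so $\Phi_{X,p,q}(Z,Y)=\phi_{p,q}(\tilde X,Y)$, and $(p,q,r)$-positivity of $\Phi_X$ gives $\alpha\phi_{p,q}(\tilde X,Y)\alpha^*\in M_r(M_C)^+$, as required.

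The main obstacle is the bookkeeping through the identification web $\mathcal{BL}(M_A,M_B;M_C)\cong\mathcal{L}(M_{AB},M_C)\cong\mathcal{L}(M_A,M_{BC})$, in particular confirming that $(E_\phi)_p(X)$ really is the Choi matrix of the explicit bi-linear map $\Phi_X$ and that the maximally entangled vector realizes $\tilde X$ as $(\lambda_X)_p(Z)$. Once these identifications are pinned down, both equivalences reduce to routine amplification identities and the Choi-Jamio\l kowski correspondence.
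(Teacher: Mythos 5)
Your proof is correct, and for the equivalence (i) $\Leftrightarrow$ (iii) it takes a genuinely different route from the paper's. For (i) $\Leftrightarrow$ (ii) your first route coincides with the paper's: Theorem \ref{linearization} converts $(p,q,r)$-positivity into positivity of the $r$-th amplification of the linearization on $M_r(\OMAX^p(M_A)\otimes_{\max}\OMAX^q(M_B))^+$, and Theorem \ref{SN} together with $\OMAX^r(M_r)=M_r$ identifies that cone with $\mathbb S_{r,p,q}$; your elementary alternative, decomposing the generators $|\xi\rangle\langle\xi|$ of $\mathbb S_{r,p,q}$ as $(C\otimes I)(X\otimes Y)(C^*\otimes I)$ with rank-one $X,Y$ and using bilinearity of $\phi_{p,q}$ to pass back to arbitrary positive $X,Y$, is also sound and avoids the Archimedeanization entirely. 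For (i) $\Leftrightarrow$ (iii) the paper argues purely through operator-system duality: it uses (\ref{duality-max}) to turn positivity of the functional $B_\phi$ into complete positivity of $E_\phi$ out of $\OMAX^p(M_A)$, hence positivity of $(E_\phi)_p$, and then identifies $M_p\bigl((\OMAX^q(M_B)\otimes_{\max}\OMAX^r(M_C))^*\bigr)$ with $\OMIN^p(M_p)\otimes_{\min}\OMIN^q(M_B)\otimes_{\min}\OMIN^r(M_C)$ via (\ref{duality-super}) and \cite[Proposition 1.16]{FP}, invoking Corollary \ref{ChoiBlock} only at the very end. You instead observe that $(E_\phi)_p(X)$ is literally the Choi matrix of the auxiliary bi-linear map $\Phi_X(z,y)=\phi(\lambda_X(z),y)$ (a computation consistent with how $E_\phi$ is used in the proof of Theorem \ref{ex-th}), apply Corollary \ref{ChoiBlock} to $\Phi_X$, and close the loop with two amplification identities: complete positivity of $\lambda_X$ for one direction, and the maximally entangled $Z$ realizing $X=(\lambda_X)_p(Z)$ for the other. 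This trades the abstract dual-cone identifications for concrete Choi--Jamio\l kowski bookkeeping; it is more elementary and self-contained, though tied to matrix algebras (which is all the theorem asserts), whereas the paper's route makes the structural role of the super-minimal and super-maximal quantizations transparent.
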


\begin{proof}
(i) $\Longleftrightarrow$ (ii). By Theorem \ref{linearization}, $\phi : M_A \times M_B \to M_C$ is $(p,q,r)$-positive
if and only if the map
$$
(A_\phi)_r : M_r({\rm OMAX}^p(M_A) \otimes_{\max} {\rm OMAX}^q(M_B)) \to M_r(M_C)
$$
is positive.
On the other hand, we have
$$
\begin{aligned} \mathbb S_{r,p,q} & = ({\rm OMAX}^r(M_r) \otimes_{\max} {\rm OMAX}^p(M_A) \otimes_{\max} {\rm OMAX}^q(M_B))^+ \\
&=  M_r({\rm OMAX}^p(M_A) \otimes_{\max} {\rm OMAX}^q(M_B))^+.
\end{aligned}
$$
by Theorem \ref{SN}.

(i) $\Longleftrightarrow$ (iii). By (\ref{duality-max}), we have a complete order isomorphism
$$\begin{aligned}
 (\OMAX^p(M_A) \otimes_{\max}& OMAX^q (M_B) \otimes_{\max} \OMAX^r(M_C)))^* \\
\simeq & \mathcal L (\OMAX^p(M_A), (\OMAX^q(M_B) \otimes_{\max} \OMAX^r (M_C))^*).
\end{aligned}$$
Thus, $\phi : M_A \times M_B \to M_C$ is $(p,q,r)$-positive if and only if $B_\phi$ is positive in the left side
if and only if $E_\phi$ is completely positive in the right side
if and only if
$$
(E_\phi)_p : M_p(M_A) \to M_p((\OMAX^q(M_B) \otimes_{\max} \OMAX^r (M_C))^*)
$$
is positive. By \cite[Theorem 3.7]{X}, we have $M_p = \OMIN^p (M_p)$.
Therefore, we have the complete order isomorphism
$$
\begin{aligned}
M_p((\OMAX^q(M_B) \otimes_{\max} &\OMAX^r (M_C))^*)\\
& \simeq
{\rm OMIN}^p(M_p) \otimes_{\min} \OMIN^q(M_B) \otimes_{\min} \OMIN^r (M_C)
\end{aligned}
$$
by (\ref{duality-super}) and \cite[Proposition 1.16]{FP}.
The conclusion follows from Corollary \ref{ChoiBlock}.
\end{proof}

\begin{corollary}\label{special}
For a bi-linear map $\phi:M_A\times M_B\to M_C$, we have the following:
\begin{enumerate}
\item[(i)]
$\phi$ is $(a,b,r)$-positive if and only if $D_\phi$ is $r$-positive.
\item[(ii)]
$\phi$ is $(p,q,1)$-positive if and only if
$D_\phi$ maps $\mathbb V_{p \wedge q}$ into $M_C^+$.
\item[(iii)]
$\phi$ is $(p,b,c)$-positive if and only if
$E_\phi$ is $p$-positive.
\item[(iv)]
$\phi$ is $(1,q,r)$-positive if and only if
$E_\phi$ maps $M_A^+$ to $q \wedge r$-block positive matrices.
\end{enumerate}
\end{corollary}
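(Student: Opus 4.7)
The plan is to specialize Theorem \ref{var-iso} in each of the four cases and recognize the resulting tri-partite cones as familiar bi-partite objects. Statements (i), (iii) correspond to taking two of the parameters ``maximal'' (equal to the dimension of the relevant Hilbert space), while (ii), (iv) correspond to taking one parameter equal to $1$; in both regimes the tri-partite data collapse.

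For (i), I would apply Theorem \ref{var-iso}(ii) with $(p,q,r)=(a,b,r)$. Since $\OMAX^m(M_m)=M_m$ by \cite[Theorem 4.6]{X} and $M_m\otimes_{\max}M_n=M_{mn}$, Theorem \ref{SN} gives
$$
\mathbb S_{r,a,b}=\bigl(\OMAX^r(M_r)\otimes_{\max}\OMAX^a(M_A)\otimes_{\max}\OMAX^b(M_B)\bigr)^+=M_r(M_{AB})^+,
$$
so the condition that $(D_\phi)_r$ maps $\mathbb S_{r,a,b}$ into $M_r(M_C)^+$ becomes precisely $r$-positivity of $D_\phi$. Statement (iii) follows analogously from Theorem \ref{var-iso}(iii): the same identification gives $\mathbb S_{p,b,c}=M_p(M_{BC})^+$, and because this cone is self-dual, $(p,b,c)$-block positivity in $M_p(M_{BC})$ reduces to ordinary positivity. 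Hence the condition on $E_\phi$ reduces to $p$-positivity.

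For (ii) and (iv), the key observation is that $\mathbb S_{1,s,t}=\mathbb V_{s\wedge t}$ as subsets of $M_A\otimes M_B$ (or $M_B\otimes M_C$ for (iv)). Indeed, under the identification $\mathbb C\otimes\HA\otimes\HB\simeq\HA\otimes\HB$, Theorem \ref{SR} tells us $\SR(\xi)\le(1,s,t)$ iff the corresponding $\eta\in\HA\otimes\HB$ can be written as $\sum_{j=1}^{s}\sum_{k=1}^{t}c_{j,k}\,v_j\otimes w_k$, which is equivalent to $\eta$ having bi-partite Schmidt rank at most $s\wedge t$. Summing rank-one projections then gives $\mathbb S_{1,s,t}=\mathbb V_{s\wedge t}$. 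Feeding this into Theorem \ref{var-iso}(ii) with $r=1$ yields (ii). For (iv), duality combined with the same identification shows that a Hermitian element of $M_B\otimes M_C$ is $(1,q,r)$-block positive if and only if it is $(q\wedge r)$-block positive in the usual bi-partite sense, so Theorem \ref{var-iso}(iii) with $p=1$ gives (iv).

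The only non-formal step is the identification $\mathbb S_{1,s,t}=\mathbb V_{s\wedge t}$, which is a direct application of Theorem \ref{SR} to vectors whose first tensor factor is trivial; every other manipulation is a routine translation using $\OMAX^m(M_m)=M_m$ and self-duality of the positive cone of a matrix algebra.
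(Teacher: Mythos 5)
Your proposal is correct and follows exactly the route the paper intends: the corollary is stated without proof as an immediate specialization of Theorem \ref{var-iso}, using $\OMAX^m(M_m)=M_m$, self-duality of the positive cone of a matrix algebra, and the identification $\mathbb S_{1,s,t}=\mathbb V_{s\wedge t}$ (which the paper justifies via Corollary \ref{equal} and the bi-partite remark at the end of Section 4). All four reductions check out as you describe them.
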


If we express $(p,q,r)$-positive bi-linear maps by their Choi matrices, then they are witnesses
for entanglement $\varrho$ which does not satisfy $\SN(\varrho)\le (p,q,r)$. Therefore, it is very useful
to know how the Choi matrix is changed when we take the dual $\phi^\sigma$ with respect to the permutation $\sigma$.
For this purpose, we see easily that the following diagram commutes:
\begin{equation}\label{CD}
\begin{CD}
{\mathcal H}_A\ot {\mathcal H}_B\ot{\mathcal H}_C @>C_\phi>> {\mathcal H}_A\ot {\mathcal H}_B\ot{\mathcal H}_C\\
@VV V @VV V\\
{\mathcal H}_{\sigma A}\ot {\mathcal H}_{\sigma B}\ot{\mathcal H}_{\sigma C} @>C_{\phi^\sigma}>>
{\mathcal H}_{\sigma A}\ot {\mathcal H}_{\sigma B}\ot{\mathcal H}_{\sigma C}
\end{CD}
\end{equation}
where the vertical arrows are the flip operator under the permutation $\sigma$.
Therefore, taking dual with respect to a permutation
corresponds to changing the order of columns and rows in the Choi matrices.

\section{Examples}

In this section, we exhibit examples of $(p,q,r)$-positive bi-linear maps between
$2\times 2$ matrices.
When we write down Choi matrices, we always use the lexicographic order:
\begin{equation}\label{qubit-basis}
|000\ran,\quad
|001\ran,\quad
|010\ran,\quad
|011\ran,\quad
|100\ran,\quad
|101\ran,\quad
|110\ran,\quad
|111\ran,
\end{equation}
for a basis of $\HA\ot\HB\ot\HC= \mathbb C^8$. Motivated by examples in \cite{kye_3_qubit}, we consider the following $8\times 8$ matrix
\begin{equation}\label{exam}
\left(\begin{matrix}
s_1 &\cdot &\cdot &\cdot &\cdot &\cdot &\cdot &u_1\\
\cdot &s_2 &\cdot &\cdot &\cdot &\cdot &u_2 &\cdot \\
\cdot &\cdot &s_3 &\cdot &\cdot &u_3 &\cdot &\cdot \\
\cdot &\cdot &\cdot &s_4 &u_4 &\cdot &\cdot &\cdot \\
\cdot &\cdot &\cdot &\bar u_4 &t_4 &\cdot &\cdot &\cdot \\
\cdot &\cdot &\bar u_3 &\cdot &\cdot &t_3 &\cdot &\cdot \\
\cdot &\bar u_2 &\cdot &\cdot &\cdot &\cdot &t_2 &\cdot \\
\bar u_1 &\cdot &\cdot &\cdot &\cdot &\cdot &\cdot &t_1
\end{matrix}\right),
\end{equation}
with nonnegative numbers $s_i, t_i$ and complex numbers $u_i$, for $i=1,2,3,4$,
where $\cdot$ denotes zero.
This is the Choi matrix of the bi-linear map $\phi:M_2\times M_2\to M_2$ which sends
the pair $([x_{ij}], [y_{k\ell}])\in M_2\times M_2$ to
$$
\left(\begin{matrix}
s_1x_{11}y_{11}+s_3x_{11}y_{22}+t_4x_{22}y_{11}+t_2x_{22}y_{22}
& u_1x_{12}y_{12} +u_3x_{12}y_{21}+\bar u_4x_{21}y_{12} +\bar u_2 x_{21}y_{21}\\
u_2x_{12}y_{12} +u_4x_{12}y_{21}+\bar u_3x_{21}y_{12} +\bar u_1 x_{21}y_{21}
&s_2x_{11}y_{11}+s_4x_{11}y_{22}+t_3x_{22}y_{11}+t_1x_{22}y_{22}
\end{matrix}\right).
$$
By Proposition \ref{restriction} and duality, all possible kinds of positivity of $\phi$ are given by
$$
(2,2,2),\qquad (1,2,2),\qquad (2,1,2),\qquad (2,2,1),\qquad (1,1,1).
$$
By Theorem \ref{Choi-iso}, we see that $\phi$ is $(2,2,2)$-positive if and only if its Choi matrix is positive
if and only if
$$
\sqrt{s_it_i}\ge |u_i|
$$
for each $i=1,2,3,4$.

\begin{lemma}\label{ex-lemma}
Let $a, c, b, d \ge 0$ and $\omega , z \in \mathbb C$. The inequality
$$
\sqrt{ab}+\sqrt{cd} \ge |\omega |+|z|,
$$
holds if and only if the matrix
$$
\begin{pmatrix}
a+d|\alpha|^2 &\omega \bar\alpha +\bar z\alpha\\
\bar \omega \alpha + z\bar\alpha & c+b|\alpha|^2
\end{pmatrix}
$$
is positive for all $\alpha \in \mathbb C$.
\end{lemma}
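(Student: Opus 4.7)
The plan is to use the standard criterion that a $2\times 2$ Hermitian matrix is positive semidefinite if and only if its diagonal entries and its determinant are non-negative. Since the diagonal entries $a+d|\alpha|^2$ and $c+b|\alpha|^2$ are automatically non-negative from the hypothesis, the positivity of the matrix for every $\alpha\in\mathbb C$ reduces to the scalar inequality
$$
(a+d|\alpha|^2)(c+b|\alpha|^2)\;\ge\;|\omega\bar\alpha+\bar z\alpha|^2\qquad(\alpha\in\mathbb C).
$$

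Next I would reduce to a polynomial inequality in $t=|\alpha|^2$. For fixed $|\alpha|=r$, writing $\omega=|\omega|e^{i\phi}$ and $z=|z|e^{i\psi}$, one computes $|\omega\bar\alpha+\bar z\alpha|^2=(|\omega|^2+|z|^2)r^2+2\operatorname{Re}(\bar\omega\bar z\alpha^2)$, whose maximum over the phase of $\alpha$ is $(|\omega|+|z|)^2r^2$, attained by choosing $2\arg\alpha\equiv\phi+\psi\pmod{2\pi}$. Thus the matrix is positive for every $\alpha$ with $|\alpha|^2=t$ if and only if it is positive at the phase that achieves this maximum, and the question becomes
$$
Q(t)\;:=\;bd\,t^2+(ab+cd)\,t+ac\;\ge\;(|\omega|+|z|)^2\,t\qquad(t\ge 0).
$$

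The key algebraic identity I would use to finish is
$$
bd\,t^2+(ab+cd)\,t+ac\;=\;(\sqrt{ac}-\sqrt{bd}\,t)^2+(\sqrt{ab}+\sqrt{cd})^2\,t,
$$
which is verified by direct expansion. Substituting this into the inequality, the condition $Q(t)\ge(|\omega|+|z|)^2t$ becomes
$$
(\sqrt{ac}-\sqrt{bd}\,t)^2+\bigl[(\sqrt{ab}+\sqrt{cd})^2-(|\omega|+|z|)^2\bigr]t\;\ge\;0.
$$
From this form the equivalence is transparent: if $\sqrt{ab}+\sqrt{cd}\ge|\omega|+|z|$, both summands are non-negative for every $t\ge 0$, so the inequality holds; conversely, assuming $bd>0$, choosing $t=\sqrt{ac/(bd)}$ annihilates the first summand and forces $(\sqrt{ab}+\sqrt{cd})^2\ge(|\omega|+|z|)^2$, giving the desired inequality.

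The only mild obstacle I anticipate is the degenerate case $bd=0$ (say $b=0$), where the substitution above is unavailable; there I would simply let $t\to\infty$ in the inequality $cd\cdot t+ac\ge(|\omega|+|z|)^2 t$ to obtain $\sqrt{cd}\ge|\omega|+|z|$, which coincides with $\sqrt{ab}+\sqrt{cd}\ge|\omega|+|z|$ since $\sqrt{ab}=0$, and symmetrically in $d=0$. Once these edge cases are dispatched, the two directions of the lemma are complete.
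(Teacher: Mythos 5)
Your argument is correct in substance but follows a genuinely different route from the paper's. The paper proves the forward direction by a chain of Cauchy--Schwarz estimates bounding $|\omega\bar\alpha+\bar z\alpha|$ directly by $\sqrt{a+d|\alpha|^2}\,\sqrt{c+b|\alpha|^2}$, and the converse by evaluating the matrix at a single explicitly optimized point $\alpha_0=(ac/bd)^{1/4}e^{i\theta/2}$ (with an $\varepsilon$-perturbation when $bd=0$). You instead reduce positivity to the determinant inequality, optimize over the phase of $\alpha$ once and for all to replace $|\omega\bar\alpha+\bar z\alpha|^2$ by $(|\omega|+|z|)^2|\alpha|^2$, and then settle the resulting one-variable inequality $Q(t)\ge(|\omega|+|z|)^2t$ via the identity $Q(t)=(\sqrt{ac}-\sqrt{bd}\,t)^2+(\sqrt{ab}+\sqrt{cd})^2\,t$. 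This makes both implications fall out of one displayed identity, which is exactly the algebraic content hidden in the paper's choice of $|\alpha_0|$; your version is arguably the more transparent of the two.

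One edge case in your converse still leaks: when $ac=0$ but $bd>0$, your substitution $t=\sqrt{ac/(bd)}$ gives $t=0$, the inequality degenerates to $0\ge 0$, and nothing is forced. (The paper's own converse has the same blind spot, since its $\alpha_0$ also vanishes there.) The fix is the mirror image of your $bd=0$ treatment: for $t>0$ divide by $t$ to get $bd\,t+(\sqrt{ab}+\sqrt{cd})^2\ge(|\omega|+|z|)^2$ --- using that $ac=0$ forces $\sqrt{abcd}=0$, so $(\sqrt{ab}+\sqrt{cd})^2=ab+cd$ --- and let $t\to 0^+$. With that one line added, the proof is complete.
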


\begin{proof}
($\Longrightarrow$) By the Cauchy-Schwartz inequality, we have
$$
\begin{aligned}
| \omega \bar\alpha +\bar z\alpha |
&\le |\omega ||\bar\alpha |+|\bar z||\alpha|\\
&\le (\sqrt{ab}+\sqrt{cd})|\alpha|\\
&=\sqrt{a}(\sqrt{b}|\alpha|) +(\sqrt{d}|\alpha|)\sqrt{c}\\
&\le\sqrt{a+d|\alpha|^2}\sqrt{c+b|\alpha|^2}.
\end{aligned}
$$

($\Longleftarrow$) We first consider the case when $bd\neq 0$.
Let $$\omega z=|\omega z|e^{i\theta} \qquad \text{and} \qquad {\alpha_0:= \left(\frac{ac}{bd}\right)^{\frac 14}e^{i\theta/2}}.$$
Then we have
$$
\begin{aligned}
\sqrt{a+d|\alpha_0|^2}\sqrt{c+b|\alpha_0|^2}
&=\sqrt{a+ \dfrac{d\sqrt{ac}}{\sqrt{bd}}}
  \sqrt{c+\dfrac{b\sqrt{ac}}{\sqrt{bd}}}  \\
&=\left( \sqrt{\frac{a}{b}}\, (\sqrt{ab}+\sqrt{cd})
   \sqrt{\frac{c}{d}}\, (\sqrt{ab}+\sqrt{cd}) \right)^{\frac 12}\\
&=|\alpha_0|(\sqrt{ab}+\sqrt{cd}).
\end{aligned}
$$
On the other hand, we have
$$
\begin{aligned}
|\omega \bar\alpha_0+\bar z\alpha_0|
&=\sqrt{|\omega \bar\alpha_0+\bar z\alpha_0|^2} \\
&=\left( |\omega |^2|\alpha_0|^2+\omega z\bar\alpha_0^2+\bar \omega \bar z\alpha_0^2 +|z|^2|\alpha_0|^2\right)^{\frac 12}\\
&=\left( |\omega |^2|\alpha_0|^2+ 2|\omega z||\alpha_0|^2 +|z|^2|\alpha_0|^2\right)^{\frac 12}
=|\alpha_0|(|\omega |+|z|).
\end{aligned}
$$
Hence, the inequality
$$
\sqrt{ab}+\sqrt{cd} \ge |\omega |+|z|,
$$
holds if $b d \ne 0$.

When $b d=0$, we take $\varepsilon>0$ and apply the above to the positive matrix
$$
\begin{pmatrix}
a+(d+\varepsilon) |\alpha|^2 &\omega \bar\alpha +\bar z\alpha\\
\bar \omega \alpha + z\bar\alpha & c+(b+\varepsilon)|\alpha|^2
\end{pmatrix},
$$
which yields
$$
\sqrt{a(b+\varepsilon)}+\sqrt{c(d+\varepsilon)} \ge |\omega |+|z|
$$
for any $\varepsilon>0$.
\end{proof}

\begin{theorem}\label{ex-th}
Suppose that $\phi:M_2\times M_2\to M_2$ is a bilinear map given by its Choi matrix as {\rm (\ref{exam})}, and
consider the inequalities
\begin{equation}\label{ex-ineq}
\sqrt{s_it_i}+\sqrt{s_jt_j} \ge |u_i|+|u_j|.
\end{equation}
Then we have the following:
\begin{enumerate}
\item[(i)]
$\phi$ is $(1,2,2)$-positive if and only if {\rm (\ref{ex-ineq})} hold for $(i,j)=(1,4)$ and $(2,3)$.
\item[(ii)]
$\phi$ is $(2,1,2)$-positive if and only if {\rm (\ref{ex-ineq})} hold for $(i,j)=(1,3)$ and $(2,4)$.
\item[(iii)]
$\phi$ is $(2,2,1)$-positive if and only if {\rm (\ref{ex-ineq})} hold for $(i,j)=(1,2)$ and $(3,4)$.
\end{enumerate}
\end{theorem}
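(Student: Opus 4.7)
The plan is to prove (i) directly from Proposition \ref{1pp}(i), Choi's theorem, and Lemma \ref{ex-lemma}, and then to deduce (ii) and (iii) from (i) by invoking the permutation duality of Corollary \ref{dual-per} together with the Choi-matrix transformation rule (\ref{CD}).

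For (i), I would first apply Proposition \ref{1pp}(i) to reduce $(1,2,2)$-positivity of $\phi$ to $2$-positivity of the slice map $\phi_x:y\mapsto\phi(x,y)$ for every $x\in M_2^+$. Since $2$-positivity coincides with complete positivity for maps between $M_2$, by Choi's theorem this is equivalent to positivity of the $4\times 4$ Choi matrix $C_{\phi_x}$ for every such $x$. A direct substitution of $x=[x_{ij}]$ into the explicit formula for $\phi$ shows that $C_{\phi_x}$ has X-shape: the main diagonal is built from $x_{11},x_{22}$ with coefficients among the $s_i,t_i$, while the anti-diagonal is built from $x_{12},x_{21}$ with coefficients among the $u_i$. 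Positivity of such an X-shape Hermitian matrix reduces to positivity of its two $2\times 2$ Hermitian sub-blocks indexed by $\{1,4\}$ and $\{2,3\}$.

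By convexity it then suffices to verify these sub-blocks for rank-one $x=vv^*$ with $v=(\alpha_1,\alpha_2)^t$. Assuming $\alpha_1\ne 0$ (the other case being automatic) and setting $\alpha=\alpha_2/\alpha_1$, dividing through by $|\alpha_1|^2$ transforms the $\{1,4\}$-sub-block into
\[
\begin{pmatrix} s_1+t_4|\alpha|^2 & u_1\bar\alpha+\bar u_4\alpha \\ \bar u_1\alpha+u_4\bar\alpha & s_4+t_1|\alpha|^2\end{pmatrix},
\]
which is precisely the matrix of Lemma \ref{ex-lemma} with $(a,b,c,d,\omega,z)=(s_1,t_1,s_4,t_4,u_1,u_4)$. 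Hence positivity for every $\alpha\in\mathbb C$ is equivalent to $\sqrt{s_1t_1}+\sqrt{s_4t_4}\ge |u_1|+|u_4|$; the $\{2,3\}$-block is handled identically and gives $\sqrt{s_2t_2}+\sqrt{s_3t_3}\ge |u_2|+|u_3|$, completing (i).

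For (ii) and (iii), I would invoke Corollary \ref{dual-per} to reduce both to part (i) applied to the duals $\phi^\sigma$: namely, $\phi$ is $(2,1,2)$-positive iff $\phi^{(AB)}$ is $(1,2,2)$-positive, and $\phi$ is $(2,2,1)$-positive iff $\phi^{(BC)}$ is $(2,1,2)$-positive. By the diagram (\ref{CD}) the Choi matrix of $\phi^\sigma$ is $V_\sigma C_\phi V_\sigma^*$ for the basis-flip unitary $V_\sigma$ on $\HA\ot\HB\ot\HC$. In the ordering (\ref{qubit-basis}), the transposition $(AB)$ swaps rows/columns $3\leftrightarrow 5$ and $4\leftrightarrow 6$, preserving the X-shape of (\ref{exam}); reading off the new parameters and feeding them into part (i) yields (\ref{ex-ineq}) at $(i,j)=(1,3)$ and $(2,4)$. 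Similarly $(BC)$ swaps $2\leftrightarrow 3$ and $6\leftrightarrow 7$; applying the already-established part (ii) to $\phi^{(BC)}$ then gives (\ref{ex-ineq}) at $(i,j)=(1,2)$ and $(3,4)$. The substance lies in the first two paragraphs—matching the rank-one normalized sub-block to Lemma \ref{ex-lemma}—while the permutation step is purely combinatorial; the only slightly delicate point is that the $(AB)$-swap sends the entries $u_3,u_4$ to $\bar u_4,\bar u_3$ on the anti-diagonal, but this is harmless since (\ref{ex-ineq}) depends only on the moduli $|u_i|$.
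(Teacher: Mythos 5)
Your proof is correct and follows essentially the same route as the paper: both reduce part (i) to positivity of the same X-shaped $4\times 4$ matrix (your slice-map Choi matrix $C_{\phi_x}$ for $x=P_\alpha$ is exactly the paper's $E_\phi(P_\alpha)$ in (\ref{ex-4x4})), split it into the $\{1,4\}$ and $\{2,3\}$ blocks, and invoke Lemma \ref{ex-lemma}, then obtain (ii) and (iii) from Corollary \ref{dual-per} by permuting the Choi matrix. The only cosmetic differences are that you justify the reduction via Proposition \ref{1pp}(i) plus Choi's theorem rather than Corollary \ref{special}(iii), and you use the transpositions $(AB)$ and $(BC)$ where the paper uses the two cyclic permutations, each reducing directly to part (i); your bookkeeping of the permuted parameters (including $u_3,u_4\mapsto\bar u_4,\bar u_3$) checks out.
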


\begin{proof}
We will consider the linear map $E_\phi:M_2\to M_4$ in Theorem \ref{var-iso}. The map $E_\phi$ sends
a rank one positive matrix
$P_\alpha=\left(\begin{matrix}1&\bar\alpha\\ \alpha &|\alpha|^2\end{matrix}\right)$ to
\begin{equation}\label{ex-4x4}
\left(\begin{matrix}
s_1+t_4|\alpha|^2 &\cdot &\cdot &u_1\bar\alpha+\bar u_4\alpha\\
\cdot &s_2+t_3|\alpha|^2 &u_2\bar\alpha+\bar u_3\alpha &\cdot\\
\cdot &u_3\bar\alpha +\bar u_2\alpha &s_3+t_2|\alpha|^2 &\cdot\\
u_4\bar\alpha +\bar u_1\alpha &\cdot &\cdot  &s_4+t_1|\alpha|^2
\end{matrix}\right).
\end{equation}
We see that $\phi$ is $(1,2,2)$-positive if and only if $E_\phi$ is positive by Corollary \ref{special} (iii),
if and only if the above matrix is positive for each complex number $\alpha$. From this, we get the statement (i)
by Lemma \ref{ex-lemma}.

For the statement (ii), we consider the permutation $\sigma$ which sends $(A,B,C)$ to $(B,C,A)$. Then $\phi$
is $(2,1,2)$-positive if and only if $\phi^\sigma$ is $(1,2,2)$-positive. The basis (\ref{qubit-basis}) is
changed to
$$
|000\ran,\quad
|010\ran,\quad
|100\ran,\quad
|110\ran,\quad
|001\ran,\quad
|011\ran,\quad
|101\ran,\quad
|111\ran,
$$
by the flip operation under $\sigma$. Therefore, we have $C_{\phi^\sigma}=UC_\phi U^*$ with the
$8\times 8$ unitary $U$ which sends a standard ordered basis to an ordered basis
$\{e_1,e_3,e_5,e_7,e_2,e_4,e_6,e_8\}$. Therefore, we have
\begin{equation}\label{yyyy}
C_{\phi^\sigma}=
\left(\begin{matrix}
s_1 &\cdot &\cdot &\cdot &\cdot &\cdot &\cdot &u_1\\
\cdot &t_4 &\cdot &\cdot &\cdot &\cdot &\bar u_4 &\cdot \\
\cdot &\cdot &s_2 &\cdot &\cdot & u_2 &\cdot &\cdot \\
\cdot &\cdot &\cdot &t_3 &\bar u_3 &\cdot &\cdot &\cdot \\
\cdot &\cdot &\cdot &u_3 &s_3 &\cdot &\cdot &\cdot \\
\cdot &\cdot &\bar u_2 &\cdot &\cdot &t_2 &\cdot &\cdot \\
\cdot & u_4 &\cdot &\cdot &\cdot &\cdot &s_4 &\cdot \\
\bar u_1 &\cdot &\cdot &\cdot &\cdot &\cdot &\cdot &t_1
\end{matrix}\right),
\end{equation}
and the result follows from (i).

If we consider the permutation $\sigma$ which send $(A,B,C)$ to $(C,A,B)$ then the standard ordered basis
goes to an ordered basis $\{e_1,e_5, e_2,e_6,e_3,e_7,e_4,e_8\}$, and we have
\begin{equation}\label{xxxx}
C_{\phi^\sigma}=
\left(\begin{matrix}
s_1 &\cdot &\cdot &\cdot &\cdot &\cdot &\cdot &u_1\\
\cdot &s_3 &\cdot &\cdot &\cdot &\cdot &u_3 &\cdot \\
\cdot &\cdot &t_4 &\cdot &\cdot &\bar u_4 &\cdot &\cdot \\
\cdot &\cdot &\cdot &t_2 &\bar u_2 &\cdot &\cdot &\cdot \\
\cdot &\cdot &\cdot &u_2 &s_2 &\cdot &\cdot &\cdot \\
\cdot &\cdot &u_4 &\cdot &\cdot &s_4 &\cdot &\cdot \\
\cdot &\bar u_3 &\cdot &\cdot &\cdot &\cdot &t_3 &\cdot \\
\bar u_1 &\cdot &\cdot &\cdot &\cdot &\cdot &\cdot &t_1
\end{matrix}\right).
\end{equation}
Therefore, (iii) comes from (i) again.
\end{proof}

As for the $(1,1,1)$-positivity, we use $E_\phi$ in Proposition \ref{var-iso}
together with Lemma \ref{ex-lemma}, to get the following:

\begin{theorem}\label{111}
Suppose that $\phi:M_2\times M_2\to M_2$ is a bilinear map given by its Choi matrix as {\rm (\ref{exam})}.
Then, $\phi$ is $(1,1,1)$-positive if and only if the inequality
$$
\begin{aligned}
\sqrt{(s_1+t_4|\alpha|^2)(s_4+t_1|\alpha|^2)}
+&\sqrt{(s_2+t_3|\alpha|^2)(s_3+t_2|\alpha|^2)}\\
&\phantom{XXXX}\ge |u_1\bar\alpha +\bar u_4 \alpha|+|u_2 \bar \alpha + \bar u_3 \alpha|
\end{aligned}
$$ holds
for each $\alpha\in\mathbb C$. In particular, this holds when
$$
\sum_{i=1}^4 \sqrt{s_it_i}\ge \sum_{i=1}^4|u_i|.
$$
\end{theorem}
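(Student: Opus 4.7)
The plan is to characterize $(1,1,1)$-positivity through the associated linear map $E_\phi:M_A\to M_B\ot M_C$. By Corollary \ref{special}(iv) with $q=r=1$, $\phi$ is $(1,1,1)$-positive precisely when $E_\phi$ carries $M_A^+$ into the cone of $1$-block positive matrices in $M_B\ot M_C$, i.e.\ matrices whose expectation on every product vector is nonnegative. Since every positive element of $M_A=M_2$ is a sum of rank-one matrices of the form $P_\alpha=\begin{pmatrix}1&\bar\alpha\\ \alpha&|\alpha|^2\end{pmatrix}$ together with the degenerate case $|1\ran\lan 1|$, and since the latter is mapped to the diagonal (hence $1$-block positive) matrix $\diag(t_4,t_3,t_2,t_1)$, it suffices to test the $1$-block positivity of each $E_\phi(P_\alpha)$ for $\alpha\in\mathbb C$.

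A direct read-off from the Choi matrix (\ref{exam}) identifies $E_\phi(P_\alpha)$ with the $4\times 4$ matrix displayed in (\ref{ex-4x4}). To test $1$-block positivity, I fix $v=(1,\beta)^\ttt\in\HB$, decompose $E_\phi(P_\alpha)$ as a $2\times 2$ array of $2\times 2$ blocks $M_{ij}\in M_C$ indexed by the $B$-factor, and form the compression $N_\beta:=\sum_{i,j}\bar v_iv_jM_{ij}\in M_C$. A short calculation gives
$$
N_\beta=\begin{pmatrix}(s_1+t_4|\alpha|^2)+|\beta|^2(s_3+t_2|\alpha|^2) & \beta(u_1\bar\alpha+\bar u_4\alpha)+\bar\beta(\bar u_2\alpha+u_3\bar\alpha)\\ * & (s_2+t_3|\alpha|^2)+|\beta|^2(s_4+t_1|\alpha|^2)\end{pmatrix},
$$
and the $1$-block positivity of $E_\phi(P_\alpha)$ reduces to the positivity of $N_\beta$ for every $\beta\in\mathbb C$; the remaining choice $v=(0,1)^\ttt$ returns only the positive diagonal block $M_{22}$ and so is automatic.

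At this point Lemma \ref{ex-lemma} matches $N_\beta$ term by term, with $\beta$ playing the role of the lemma's parameter: set $a=s_1+t_4|\alpha|^2$, $b=s_4+t_1|\alpha|^2$, $c=s_2+t_3|\alpha|^2$, $d=s_3+t_2|\alpha|^2$, $z=u_4\bar\alpha+\bar u_1\alpha$ and $\omega=u_3\bar\alpha+\bar u_2\alpha$, so that $|z|=|u_1\bar\alpha+\bar u_4\alpha|$ and $|\omega|=|u_2\bar\alpha+\bar u_3\alpha|$. The lemma then asserts that $N_\beta\ge 0$ for all $\beta$ exactly under the stated inequality at that $\alpha$, and ranging over $\alpha\in\mathbb C$ yields the main equivalence.

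For the ``in particular'' clause, I would invoke Cauchy--Schwartz in the form $(s_1+t_4|\alpha|^2)(t_1|\alpha|^2+s_4)\ge(\sqrt{s_1t_1}+\sqrt{s_4t_4})^2|\alpha|^2$ to get $\sqrt{(s_1+t_4|\alpha|^2)(s_4+t_1|\alpha|^2)}\ge|\alpha|(\sqrt{s_1t_1}+\sqrt{s_4t_4})$ and analogously for the other square root, while bounding $|u_i\bar\alpha+\bar u_j\alpha|\le|\alpha|(|u_i|+|u_j|)$ on the right. Dividing through by $|\alpha|$ reduces everything to $\sum_i\sqrt{s_it_i}\ge\sum_i|u_i|$. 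The main obstacle in carrying out this plan is the bookkeeping needed to align the coefficients of $\beta$ and $\bar\beta$ in $N_\beta$ with the lemma's off-diagonal entry $\omega\bar\alpha+\bar z\alpha$; once this identification is done correctly, the remaining algebra and the sufficient-condition estimate are routine.
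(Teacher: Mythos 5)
Your proposal is correct and follows essentially the same route as the paper: both reduce $(1,1,1)$-positivity via Corollary \ref{special}(iv) to block positivity of $E_\phi(P_\alpha)$, compress against a second rank-one $P_\beta$ to obtain the same $2\times 2$ matrix, and then invoke Lemma \ref{ex-lemma} with $\beta$ as the lemma's parameter, finishing the sufficient condition by the identical Cauchy--Schwartz estimate. The only (harmless) difference is that you explicitly dispose of the degenerate rank-one generator $|1\ran\lan 1|$, which the paper leaves implicit.
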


\begin{proof}
The linear map given with the Choi matrix (\ref{ex-4x4}) sends
a rank one positive matrix
$P_\beta=\left(\begin{matrix}1&\bar\beta\\ \beta &|\beta|^2\end{matrix}\right)$ to
$$
\begin{pmatrix}
(s_1+t_4|\alpha|^2) +(s_3+t_2|\alpha|^2)|\beta|^2 & (u_1\bar\alpha +\bar u_4 \alpha) \bar \beta + (u_3\bar\alpha +\bar u_2 \alpha) \beta \\
(u_4 \bar \alpha + \bar u_1 \alpha) \beta + (u_2 \bar \alpha + \bar u_3 \alpha) \bar \beta & (s_2+t_3|\alpha|^2) +(s_4+t_1|\alpha|^2)|\beta|^2
\end{pmatrix}.
$$
The bilinear map $\phi$ is $(1,1,1)$-positive if and only if the matrix (\ref{ex-4x4}) is block positive for all
$\alpha \in \mathbb C$ by Corollary \ref{special} (iv), if and only if the above matrix is positive for all $\alpha, \beta \in \mathbb C$.
It is equivalent to the inequality by Lemma \ref{ex-lemma}.

For the last statement, we note
$$
\begin{aligned}
|u_1\bar\alpha +\bar u_4 \alpha|&+|u_2 \bar \alpha + \bar u_3 \alpha|
\le( \sum_{i=1}^4|u_i|) |\alpha|
\le (\sum_{i=1}^4 \sqrt{s_it_i})|\alpha|\\
&=\sqrt{s_1}(\sqrt{t_1}|\alpha|) +(\sqrt{t_4}|\alpha|)\sqrt{s_4} +\sqrt{s_2}(\sqrt{t_2}|\alpha|) +(\sqrt{t_3}|\alpha|)\sqrt{s_3}\\
&\le \sqrt{(s_1+t_4|\alpha|^2)(s_4+t_1|\alpha|^2)}
+\sqrt{(s_2+t_3|\alpha|^2)(s_3+t_2|\alpha|^2)}
\end{aligned}
$$
by the Cauchy-Schwartz inequality.
\end{proof}

The converse of the last statement does not hold, as the examples in
\cite{kye_3_qubit} show.
Now, we have bunch of examples which distinguish various notions of positivity and their intersections.
For example, if we take
$$
\sqrt{s_1t_1}=0,\quad
\sqrt{s_2t_2}=1,\quad
\sqrt{s_3t_3}=1,\quad
\sqrt{s_4t_4}=2,\quad
|u_i|=1,\ i=1,2,3,4,
$$
then the resulting map is $(1,2,2)$-positive but neither $(2,1,2)$ nor $(2,2,1)$-positive.
If we take
$$
\sqrt{s_1t_1}=0,\quad
\sqrt{s_2t_2}=0,\quad
\sqrt{s_3t_3}=2,\quad
\sqrt{s_4t_4}=2,\quad
|u_i|=1,\ i=1,2,3,4,
$$
then the map is both $(1,2,2)$ and $(2,1,2)$-positive but not $(2,2,1)$-positive. If we put
$\sqrt{s_it_i}=0$ for $i=1,2,3$ and $\sqrt{s_4 t_4}=4$, then we may find an example of $(1,1,1)$-positive map which is not $(p,q,r)$-positive
for other $(p,q,r)$. Finally, we may also get an example of $(p,q,r)$-positive map for each $(p,q,r)$ in $\Sigma_{2,2,2}$
except for $(2,2,2)$  by putting $\sqrt{s_1t_1}=0$ and $\sqrt{s_it_i}=2$ for $i=2,3,4$.

By Theorem \ref{ex-th} and Corollary \ref{gen-wit}, we have the following.

\begin{corollary}\label{bi-sep-wnt}
Suppose that $\phi:M_2\times M_2\to M_2$ is a bilinear map given by its Choi matrix as {\rm (\ref{exam})}. Then
the following are equivalent:
\begin{enumerate}
\item[(i)]
$\lan\varrho,\phi\ran\ge 0$ for each bi-separable three qubit state $\varrho$;
\item[(ii)]
inequality {\rm (\ref{ex-ineq})} holds for each possible choice of $i,j$ with $i\neq j$ from $\{1,2,3,4\}$.
\end{enumerate}
\end{corollary}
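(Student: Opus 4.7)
The plan is to combine Corollary \ref{gen-wit} (equivalently, the duality from Theorem \ref{dual}) with Theorem \ref{ex-th} and then just check that the six inequalities produced by Theorem \ref{ex-th} exhaust all unordered pairs from $\{1,2,3,4\}$.

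First, I would translate condition (i) into a positivity condition on $\phi$. The set of bi-separable three qubit states is by definition the convex hull of $\mathbb S_{1,2,2} \cup \mathbb S_{2,1,2} \cup \mathbb S_{2,2,1}$, so the pairing $\lan\varrho,\phi\ran$ is nonnegative for every bi-separable $\varrho$ if and only if it is nonnegative on each of the three cones separately. By the duality statement in Theorem \ref{dual}, this is equivalent to
$$
\phi\in \mathbb P_{1,2,2}\cap \mathbb P_{2,1,2}\cap \mathbb P_{2,2,1}.
$$
(Equivalently, this is precisely the negation of the existence statement in Corollary \ref{gen-wit} applied to $a=b=c=2$.)

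Next, I would invoke Theorem \ref{ex-th} three times. It tells me exactly which pairs $(i,j)\in\{1,2,3,4\}^2$ with $i\neq j$ are needed for each of the three positivity conditions on a bilinear map with Choi matrix of the form (\ref{exam}): the pairs $\{1,4\}$ and $\{2,3\}$ for $(1,2,2)$-positivity; the pairs $\{1,3\}$ and $\{2,4\}$ for $(2,1,2)$-positivity; and the pairs $\{1,2\}$ and $\{3,4\}$ for $(2,2,1)$-positivity. Since inequality (\ref{ex-ineq}) is symmetric in $i$ and $j$, the only thing to notice is that the union
$$
\bigl\{\{1,4\},\{2,3\},\{1,3\},\{2,4\},\{1,2\},\{3,4\}\bigr\}
$$
is precisely the set of all $\binom{4}{2}=6$ two-element subsets of $\{1,2,3,4\}$. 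Hence $\phi$ lies in the triple intersection of the three cones above if and only if (\ref{ex-ineq}) holds for every pair $i\neq j$ in $\{1,2,3,4\}$, which is statement (ii).

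There is no real obstacle here: both the duality step and the invocation of Theorem \ref{ex-th} are immediate, and the final combinatorial observation that the six pairs coming out of the three bi-partitions cover all of $\binom{\{1,2,3,4\}}{2}$ is a one-line verification. The only point worth highlighting in the write-up is why the dual of a convex hull is the intersection of the duals, but this is a standard fact about polar cones that follows directly from the definition of $\lan\cdot,\cdot\ran$ being linear in $\varrho$.
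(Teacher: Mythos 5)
Your proof is correct and follows exactly the route the paper intends: it derives the corollary by combining the duality characterization (Corollary \ref{gen-wit}, i.e.\ membership in $\mathbb P_{1,2,2}\cap\mathbb P_{2,1,2}\cap\mathbb P_{2,2,1}$) with the three cases of Theorem \ref{ex-th}, and observes that the six resulting pairs exhaust all two-element subsets of $\{1,2,3,4\}$. The paper states this corollary with precisely that one-line justification, so your write-up simply makes the same argument explicit.
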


Therefore, if $W$ is the Choi matrix of a bi-linear map $\phi$ satisfying the conditions in Corollary
\ref{bi-sep-wnt} and $\lan\varrho,\phi\ran <0$ then $\varrho$ is a genuinely entangled state.
In this sense, those $W$ are genuine entanglement witnesses.
We consider the following matrix
$$
W=\left(\begin{matrix}
\cdot &\cdot &\cdot &\cdot &\cdot &\cdot &\cdot &-1\\
\cdot &s &\cdot &\cdot &\cdot &\cdot &\cdot &\cdot \\
\cdot &\cdot &s &\cdot &\cdot &\cdot &\cdot &\cdot \\
\cdot &\cdot &\cdot &s &\cdot &\cdot &\cdot &\cdot \\
\cdot &\cdot &\cdot &\cdot &t &\cdot &\cdot &\cdot \\
\cdot &\cdot &\cdot &\cdot &\cdot &t &\cdot &\cdot \\
\cdot &\cdot &\cdot &\cdot &\cdot &\cdot &t &\cdot \\
-1 &\cdot &\cdot &\cdot &\cdot &\cdot &\cdot &\cdot
\end{matrix}\right),
$$
with $st=1$. We note that this $W$ is a genuine entanglement
witness, since it satisfies the condition (ii) of Corollary
\ref{bi-sep-wnt}. Consider the GHZ type pure state \cite{abls}
given by the vector
$$
|\psi_{\text{\rm GHZ}}\ran
= \lambda_0|000\ran
+\lambda_1 e^{i\theta} |100\ran
+\lambda_2 |101\ran
+\lambda_3 |110\ran
+\lambda_4 |111\ran,
$$
with $\lambda_i\ge 0$. Then we see that
$$
\lan | \psi_{\text{\rm GHZ}}\ran\lan \psi_{\text{\rm GHZ}}|,W\ran=
t(\lambda_1^2+\lambda_2^2+\lambda_3^2) -2\lambda_0\lambda_4.
$$
Taking arbitrary small $t>0$, we see that $W$ detects every  $| \psi_{\text{\rm GHZ}}\ran\lan \psi_{\text{\rm GHZ}}|$ with $\lambda_0\lambda_4>0$.
These include GHZ type entangled pure states in the classification \cite{abls} of three qubit states.

\section{Discussion}

We have defined in Section 4 the notion of Schmidt rank for the tensor product of arbitrary number of vector spaces.
We also note that Theorem \ref{dual} is easily extended to $(n-1)$-linear maps and $n$-partite states in an obvious way,
with this definition.
This was also useful to clarify how the usual Schmidt rank for tensor of two spaces can be explained in our definition,
and explain bi-separability in $3$-partite cases.
In the $4$-partite case, it is possible with our definition to explain bi-separability according the bi-partition $4=1+3$ like $A$-$BCD$
bi-separability. But we cannot explain bi-separability according the bi-partition $4=2+2$.
It would be interesting to refine the notions of positivity and Schmidt numbers with which we may
explain all kinds of bi-separability for the cases of $n\ge 4$.

As for $(p,q,r)$-positive bi-linear maps, we have shown that different triplets $(p,q,r)$ in $\Sigma_{a,b,c}$ give
rise to different convex cones $\mathbb P_{p,q,r}$, and we give concrete examples in $2\times 2$ matrices.
It would be interesting to give concrete examples in higher dimensions.
Recall that that the first examples which distinguish $2$-positivity and $3$-positivity in the $3\times 3$ matrices
was given by Choi \cite{choi72}. See also \cite{cho-kye-lee}.

We would like to remind the readers that we did not define complete positivity for bi-linear maps.
It seems to be reasonable to say that a bi-linear map is completely positive when it is $(\infty,\infty,\infty)$-positive,
that is, $(p,q,r)$-positive for every triplet $(p,q,r)$, or equivalently, satisfies the condition (\ref{st-pos}) for each $p,q=1,2,\dots$.
We note that the term \lq complete positivity\rq\ for multi-linear maps already used in \cite{CS,heo} in totally
different contexts from ours. Furthermore, the authors of \cite{KPTT} call those bi-linear maps with the condition
(\ref{st-pos}) for each $p,q=1,2,\dots$ \lq jointly completely positive\rq\ maps. As for similar problems in terminologies
in the notions of complete boundedness for bi-linear maps, we refer to comments in \cite{PS}.

Anyway, we call temporarily $(\infty,\infty,\infty)$-positive bi-linear maps completely positive maps.
Then we may define various kinds of complete copositivity and decomposability for bi-linear maps
$\phi:M_A\times M_B\to M_C$ between matrix algebras. Recall that a linear map $\phi:M_A\to M_C$
is completely copositive if $\phi\circ\ttt_A$ is completely positive. This is the case if and only if
$\ttt_C\circ\phi$ is completely positive, where $\ttt_A$ and $\ttt_C$ denote the transpose maps in $M_A$ and $M_C$,
respectively. There are three kinds of complete copositivity according to the complete positivity of the maps
$$
\phi\circ (\idd_A\times\ttt_B),\qquad
\phi\circ (\ttt_A\times \idd_B),\qquad
\ttt_C\circ\phi
$$
for a bi-linear map $\phi:M_A\times M_B\to M_C$.

We say that a bi-linear map is decomposable if it is the sum of
a completely positive map and three kinds of completely copositive maps.
We would like to ask what kinds of $(p,q,r)$-positivity imply decomposability.
As for bi-linear maps in Theorem \ref{ex-th}, it is easy to see that they are decomposable
whenever they are $(1,2,2)$, $(2,1,2)$ or $(2,2,1)$-positive.
For examples of indecomposable $(1,1,1)$-positive bi-linear maps in $M_2$, we refer to \cite{kye_3_qubit}.
There is a long standing question which asks if every $2$-positive linear map
between $M_3$ is decomposable. See Corollary 4.3 in \cite{cho-kye-lee}. The dual question
asks if every $3\otimes 3$ PPT state has Schmidt number less than or equal to $2$.
This was conjectured in \cite{sbl}.
Finally, it would be interesting to define decomposability for general situations beyond matrix algebras,
as in the case of linear maps. See \cite{stormer}.

\end{document}